\DeclareMathOperator*{\argmin}{arg\,min}
\newcounter{MYtempeqncnt}
\long\def\comment#1{}
\newfont{\bbb}{msbm10 scaled 700}
\newcommand{\bv}{{\bf b}}
\newcommand{\ev}{{\bf e}}
\newcommand{\fv}{{\bf f}}
\newcommand{\hv}{{\bf h}}
\newcommand{\nv}{{\bf n}}
\newcommand{\tv}{{\bf t}}
\newcommand{\uv}{{\bf u}}
\newcommand{\vv}{{\bf v}}
\newcommand{\xv}{{\bf x}}
\newcommand{\yv}{{\bf y}}
\newcommand{\zv}{{\bf z}}
\newcommand{\zerov}{{\bf 0}}
\newcommand{\Am}{{\bf A}}
\newcommand{\Bm}{{\bf B}}
\newcommand{\Fm}{{\bf F}}
\newcommand{\Gm}{{\bf G}}
\newcommand{\Hm}{{\bf H}}
\newcommand{\Km}{{\bf K}}
\newcommand{\Lm}{{\bf L}}
\newcommand{\Mm}{{\bf M}}
\newcommand{\Pm}{{\bf P}}
\newcommand{\Rm}{{\bf R}}
\newcommand{\Sm}{{\bf S}}
\newcommand{\Tm}{{\bf T}}
\newcommand{\Um}{{\bf U}}
\newcommand{\Ec}{{\cal E}}
\newcommand{\Fc}{{\cal F}}
\newcommand{\Gc}{{\cal G}}
\newcommand{\Hc}{{\cal H}}
\newcommand{\Kc}{{\cal K}}
\newcommand{\Mc}{{\cal M}}
\newcommand{\Sc}{{\cal S}}
\newcommand{\Tc}{{\cal T}}
\newcommand{\Vc}{{\cal V}}
\newcommand{\Lcb}{{\bm {\mathcal L}}}
\newcommand{\Scc}{{\Sc^c}}
\newtheorem{theorem}{Theorem}
\newtheorem{proposition}{Proposition}
\newtheorem{definition}{Definition}
\begin{document}

\title{Guided Signal Reconstruction Theory
}
\author{Andrew Knyazev,~\IEEEmembership{Senior Member,~IEEE, Fellow, SIAM}, Akshay Gadde,~\IEEEmembership{Student Member,~IEEE,},\\
Hassan Mansour,~\IEEEmembership{Member,~IEEE,}, Dong Tian,~\IEEEmembership{Member,~IEEE}
\thanks{A. Knyazev, H. Mansour, and D. Tian are with the Mitsubishi Electric Research Laboratories (MERL), 201 Broadway, 8th Floor
Cambridge, MA 02139-1955, e-mail: \{\href{mailto:knyazev@merl.com}{Knyazev}, \href{mailto:mansour@merl.com}{Mansour}, \href{mailto:tian@merl.com}{Tian}\}@merl.com, WWW: http://www.merl.com/people/\{\href{http://www.merl.com/people/knyazev}{knyazev}, \href{http://www.merl.com/people/mansour}{mansour}, \href{http://www.merl.com/people/tian}{tian}\}.}
    \thanks{A. Gadde is with the University of Southern California (USC), Los Angeles, CA, and has been an intern at MERL, e-mail: \href{mailto:agadde@usc.edu}{agadde@usc.edu}.}
     \thanks{This work has been presented in parts at the 2015 IEEE Global Conference on Signal and Information Processing (GlobalSIP), Orlando, FL, 2015~\cite{7418335}.}
      \thanks{A preliminary version of the manuscript is posted at arXiv.}
    }

\markboth
{Knyazev \MakeLowercase{\textit{et al.}}: Guided Signal Reconstruction Theory}{}    
    
\maketitle

\begin{abstract}
An axiomatic approach to signal reconstruction is formulated, involving 
a sample consistent set and a guiding set, describing desired reconstructions. 
New frame-less reconstruction methods are proposed, based on  
a novel concept of a reconstruction set, defined as a shortest pathway 
between the sample consistent set and the guiding~set. 
Existence and uniqueness of the reconstruction set are investigated in a Hilbert space, where the guiding set is a closed subspace and
the sample consistent set is a closed plane, formed by a sampling subspace.
Connections to earlier known consistent, generalized, and regularized reconstructions are clarified. New stability and reconstruction error bounds are derived, using the largest nontrivial angle between the sampling and guiding subspaces. Conjugate gradient iterative reconstruction algorithms are proposed and illustrated numerically for image magnification.  
\end{abstract}

\IEEEpeerreviewmaketitle

\emph{
Quotient space law:``When in doubt, cut it out!''}

\section{Introduction}

Signal reconstruction is a standard problem that arises naturally in signal processing and machine learning. A classical example is reconstruction of band-limited signals from their time-domain samples. Recently, reconstruction of signals on graphs from signal samples on a subset of nodes of the graph is gaining popularity (e.g.,\ \cite{Narang-GlobalSIP-13,WangCG15,7437485,7480396,7208894,7581102,7439829}) and finds applications in graph-based semi-supervised learning; see, e.g.,~\cite{Gadde-KDD-14}. In this context, the signals are considered to be band-limited with respect to eigenvalues of a graph Laplacian. 

A Hilbert space framework allows investigating signal reconstruction in a general and concise manner. To this end, we consider a problem of determining a reconstruction $\hat{\fv}\in\Hc$ of an unknown original signal $\fv\in\Hc$ from a measurement of $\fv$, where $\Hc$ is a Hilbert space equipped with a scalar product $\Braket{\cdot,\cdot}$ and a corresponding norm $\|\cdot\|$. The measurement of $\fv$ is defined as a result $\Sm\fv$ of an action of an orthogonal projector $\Sm$ onto a closed subspace $\Sc \subseteq \Hc$ called the \emph{sampling subspace}. 

The original signal $\fv$ is typically not known, only the sampled original signal  $\Sm\fv$ is available as an input to a reconstruction method.
Since sampling involves loss of information, we need some a priori assumptions on the original signal $\fv$ to be recovered. One such assumption may be that the signal $\fv$ belongs to a closed subspace $\Tc \subseteq  \Hc$ that can be thought of as a \emph{target reconstruction subspace}.
Alternatively, the signal $\fv$ may not lie strictly in $\Tc$, but may be well approximated by its projection on the subspace $\Tc \subseteq  \Hc$. We~prefer to call $\Tc$ a \emph{guiding reconstruction subspace}, since the reconstructed signal  $\hat{\fv}$ may not necessarily be restricted to $\Tc$. 
Another example of a prior structure is that the signal $\fv$ belongs to a compact subset of $\Hc$, determined by ``smoothness'' of $\fv$. 
In any case, the reconstruction that minimizes the reconstruction error $\|\hat{\fv} - \fv\|$ is naturally desired.

The guiding set can be determined using a model or other form of description of desirable reconstructed signal behavior, e.g., learned from training datasets. For signals with natural spectral properties, spectral transforms, e.g., Fourier, cosine, and wavelet transforms, can be used to transform signals into a spectral domain, where the guiding subspace can be chosen as corresponding to certain frequency ranges, e.g., assuming that the desired signal is band-limited. 

For signals without self-evident spectral properties, the~signals are embedded into a specially constructed structure, depending on the type of the signal, e.g., a graph, or a Riemannian manifold, wherein spectral properties are determined by an~``energy'' norm and its defining operator, e.g., graph Laplacian, or the Laplace-Beltrami operator, correspondingly~\cite{Kheradmand2014}. The energy norm can be constructed, using a given guiding signal, or from a database of a priori learned signals,   
depending on a signal similarity measure in the signal space, which can comprise, e.g., correlation, coherence, divergence, or metric, depending on the type of the signal and desired reconstruction properties. 
For example, in the graph-based signal processing, edge weights can be determined using distances between vertex-localized delta-function signals, spanning the given guiding signal. 

The~guiding subspace can then be chosen to approximate an invariant subspace of the energetic operator, corresponding to certain ranges in its spectrum, e.g., assuming that the desired signal is band-limited, having components primarily from the low part of the spectrum of the energetic operator. 

\subsection{Notation}
Let $\Sm$ and $\Tm$ be the orthoprojectors onto the closed subspaces $\Sc$ and $\Tc$, respectively. Let $\Sm^\perp = \mathbf{I} - \Sm$ and $\Tm^\perp = \mathbf{I}- \Tm$, where $\mathbf{I}$ is the identity operator, denote the orthoprojectors onto their orthogonal complements $\Sc^\perp$ and $\Tc^\perp$. Let $R(\Am)$ denote the range of operator $\Am$ and $N(\Am)$ its null space; e.g.,\ $\Sc = R(\Sm)$ and $\Sc^\perp = R(\Sm^\perp) = N(\Sm)$. 
$\Am^*$ denotes $\Am$ adjoint. 

We measure (sample) an element $\fv \in \Hc$ by its projection on $\Sc$, i.e. the observed sample is given by $\Sm\fv$,
and want to reconstruct $\fv$ from $\Sm\fv$.
The signal $\fv$ to be reconstructed can be split into two orthogonal components:
\begin{equation}
\fv = \Sm\fv + \xv, \text{ where } \Sm\fv \in \Sc \text{ and } \xv \in \Sc^\perp,
\label{eq:split}
\end{equation}
where $\Sm\fv$ is the observed sample of $\fv$ and $\xv$ contains the missing information to be reconstructed. 

\subsection{Prior work}
Two main kinds of sample consistent reconstructions are known: subspace-based constrained reconstructions using oblique projectors leading to $\hat\fv\in\Tc$, e.g.,~\cite{Unser-TSP-94, Eldar-JFA-03,Eldar-SPM-09}, and energy minimization-based reconstructions, e.g.,\ in~\cite{Eldar-SPM-09} and generalized abstract splines~\cite[Sec.~4]{Arias2014}. 
Practical reconstruction is usually performed using frames for $\Sc$ and $\Tc$. In this context, $\Sc$ is separable and comes, e.g., with an orthonormal countable frame $F$. Consequently, $\Tm F$ is also a frame for $\Tc$, having the frame operator $\Tm\Sm\Tm$ restricted to $\Tc$, assuming $\Sc^\perp\cap\Tc = \{\zerov\}$ and strict positivity of the \emph{minimal gap}~\cite[Sec.~IV-4]{kato} between $\Sc^\perp$ and $\Tc$, which makes the inverse of the frame operator bounded. A general approach we present in this paper is frame-less, dealing directly with the orthogonal projectors $\Sm$ and $\Tm$ onto the subspaces $\Sc$ and $\Tc$.

A set of all signals, having the same sample $\Sm\fv$, is a closed plane $\Sm\fv+\Sc^\perp$
that we call a ``consistent plane.''
But $\Sm\fv+\Sc^\perp$ and $\Tc$ generally do not intersect, in which case no reconstruction $\hat{\fv}$ can be constrained to 
both sets as required in~\cite{Unser-TSP-94, Eldar-JFA-03}. 
For a solution, which is in both $\Tc$ and $\Sm\fv + \Sc^\perp$, to exist for any $\fv$, we need $\Sc^\perp+\Tc = \Hc$. 
Additionally, for such a solution to be unique, we need $\Sc^\perp\cap\Tc = \{\zerov\}$. Otherwise, there can be multiple signals in $\Tc$ having the same samples. If both of these conditions are satisfied, then a unique sample consistent solution in $\Tc$ is given by $\Pm_{\Tc\perp\Sc}\fv$, where $\Pm_{\Tc\perp\Sc}$ is an oblique projector on $\Tc$ along $\Sc^\perp$. Non-uniqueness caused by $\Sc^\perp\cap\Tc \neq \{\zerov\}$ can be mathematically resolved by replacing $\Hc$ with a quotient space $\Hc/\{\Tc\cap \Sc^\perp\}$. After such a replacement, we have $\Sc^\perp\cap\Tc = \{\zerov\}$, which we assume for the rest of the section. Practically, one can choose a unique solution by imposing additional constraints~\cite{Hirabayashi-Unser-2007}. 

The assumption  $\Sc^\perp+\Tc = \Hc$ can be disadvantageous and very restrictive in applications. 
Even if $\Sm\fv+\Sc^\perp$ and $\Tc$ do intersect, finding their intersection numerically may be difficult, 
as the intersection may be very sensitive to their mutual position; see for example the ``generalized reconstruction'' scheme of~\ \cite{adcock2012generalized,ADCOCK2012357,Adcock:2013:BCR}. A cure proposed in \cite{adcock2012generalized} is oversampling, which leads to a smaller consistent plane $\Sm\fv+\Sc^\perp$ that no longer intersects with~$\Tc$ and where the \emph{strictly guided} reconstructed signal is defined as a point in $\Tc$ having the smallest distance to $\Sm\fv+\Sc^\perp$. 
The subspace $\Tc$ is treated literally as the target subspace, thus, 
enforcing the constraint $\hat{\fv}\in\Tc$ and relaxing $\hat{\fv}\in\Sm\fv+\Sc^\perp$ using least squares. 

The strictly guided generalized reconstruction methods from \cite{adcock2012generalized} and the minimax regret in~\cite{Eldar-TSP-06} may be sample inconsistent, since they place the reconstructed signal into the guiding subspace. 
In~contrast, \cite{bansal2005bandwidth} puts the reconstructed signal into the consistent plane, relaxing the property that $\hat{\fv}\in\Tc$ by minimizing instead the energy in $\Tc^\perp$. The reconstructed signal is defined as a point in the sample consistent plane $\Sm\fv+\Sc^\perp$ having the smallest distance to $\Tc.$
This approach is motivated by a realization that in practical applications, such as bandwidth expansion of narrowband audio signals, it may be difficult to explicitly find a frame of or even choose a trustworthy target reconstruction subspace~$\Tc$. Thus, the subspace $\Tc$ can be used as a guide, not as a true target, where we trust the sampling more than the guiding, as in \cite{bansal2005bandwidth}.

Regularization-based methods, suggested in \cite{Narang-GlobalSIP-13,Kheradmand2014}, determine reconstruction by solving an unconstrained problem minimizing a weighted sum of a loss function and a regularization term using a regularization parameter. The regularization parameter needs to be chosen {\it a priori}---a common  difficulty of  regularization-based methods---and may greatly affect the reconstruction quality. The authors of \cite{Narang-GlobalSIP-13} assume existence and uniqueness of the intersection of the sample-consistent reconstruction plane $\Sm\fv+ \Sc^\perp$ with the  guiding reconstruction subspace $\Tc$ for any original signal $\fv$, just as in~\cite{Hirabayashi-Unser-2007,Corach-Giribet-2011}. 

\subsection{Main contributions}
Let us assume that the  guiding reconstruction subspace $\Tc$ is available in some form, e.g.,\ implicitly via an action of the corresponding (possibly approximate) orthogonal projector $\Tm$. We formulate a least squares approach that allows implicit, frame-less, and approximate 
descriptions of $\Tc^\perp$, e.g., in a form of a filter function, approximately suppressing $\Tc$ components of a signal. Additionally, the least squares approach allows and can benefit from oversampling, as in the generalized reconstruction of \cite{adcock2012generalized},
making our reconstruction algorithms more stable, compared to classical constrained reconstructions using oblique projectors in~\cite{Unser-TSP-94, Eldar-JFA-03}. 

We describe a unified view of consistent, generalized and regularization based reconstruction methods. 
A novel concept of a reconstruction set is introduced, We explain how it relates to the regularization-based methods of \cite{Narang-GlobalSIP-13,Kheradmand2014}.
Conditions of existence and uniqueness of the reconstructed signal are obtained, using~\cite{Knyazev-CMM-07} and beyond. Stability and reconstruction error bounds are derived that improve those following from the bounds in~\cite{Knyazev-CMM-07}. 
We suggest a numerically efficient iterative reconstruction algorithm, based on a conjugate gradient method, which approximates our reconstruction and only needs actions of orthoprojectors onto the subspaces $\Sc$ and~$\Tc$. 
We also derive convergence rate bounds of iterative algorithms and reconstruction error bounds, depending on angles between the subspaces $\Sc$ and $\Tc$.


\section{Reconstruction Set} 
A case, where both procedures, sampling and guiding, can be equally trusted, but the guiding set contains no sample consistent signals, reminds us of Buridan's donkey that is equally hungry and thirsty and that is placed precisely midway between a stack of hay (the guiding set) and a pail of water (the sample consistent set), so it will have to die of both hunger and thirst, since it cannot make any rational decision to choose one over the other. 

To save the hypothetical donkey, we define a set of reconstructions given by convex combinations of strictly guided and consistent reconstructions. As stated before, the guiding set (or subspace) may contain no sample consistent solutions. When the samples are noisy, the true signal does not lie in the sample consistent plane. The true signal may not be entirely contained in the guiding subspace either. In such a case, it is unclear, which reconstruction, consistent~\cite{bansal2005bandwidth} or strictly guided (generalized)~\cite{adcock2012generalized}, is better to choose. 

This situation is illustrated in Fig.~\ref{fig:1} by a simple motivating geometric example, where $\dim\Hc=3, \dim\Sc=2,$ and $\dim\Tc=1.$ 
Here, the set of all signals, having the same 
sample $\Sm\fv$ is evidently a line $\Sm\fv+\Sc^\perp$.
The lines $\Sm\fv+\Sc^\perp$ and $\Tc$ generally do not intersect, so no reconstruction $\hat{\fv}$ can be constrained to 
both lines as required in~\cite{Unser-TSP-94, Eldar-JFA-03}; see Fig.~\ref{fig:1}. 
        \begin{figure}
                \includegraphics[width=0.85\linewidth]{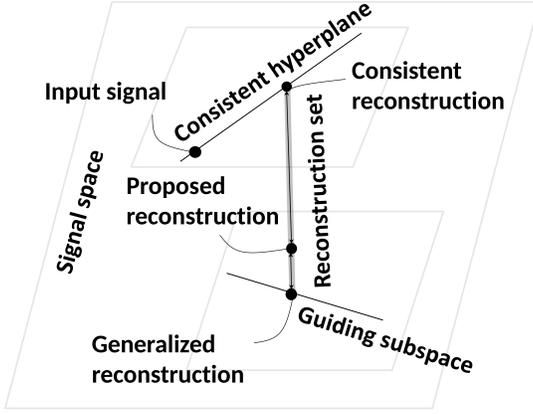}
                \caption{An example of the reconstruction set in 3D}\label{fig:1}
        \end{figure}

We observe in Fig. \ref{fig:1} that, on the one hand, the consistent reconstruction of \cite{bansal2005bandwidth} can be viewed as a minimizer of a distance \emph{from} an element of the consistent plane $\Sm\fv+\Sc^\perp$ \emph{to} the guiding subspace  $\Tc$, while, on the other hand, the generalized reconstruction of \cite{adcock2012generalized} is an element \emph{from} the guiding subspace  $\Tc$, minimizing the distance \emph{to} the consistent plane $\Sm\fv+\Sc^\perp$. Clearly, equalities hold 
\begin{align}
\inf_{\hat\fv\in\Sm\fv+\Sc^\perp}\quad\inf_{\hat{\tv}\in\Tc} \|\hat{\fv}-\hat{\tv}\|&=
\inf_{\hat{\fv}\in\Sm\fv+\Sc^\perp,\, \hat{\tv}\in\Tc} \|\hat{\fv}-\hat{\tv}\|\label{eq:recon_problem_geom01}\\
&=\inf_{\hat{\tv}\in\Tc}\quad\inf_{\hat{\fv}\in\Sm\fv+\Sc^\perp} \|\hat{\fv}-\hat{\tv}\|,
\label{eq:recon_problem_geom02}
\end{align}
where the minimizers $\hat{\fv}$ and $\hat{\tv}$ are called \emph{consistent} and \emph{generalized} reconstructions, 
respectively, 
giving us a hint to define  a reconstruction set, which is a shortest pathway set 
\emph{between} the consistent plane $\Sm\fv+\Sc^\perp$ and the guiding subspace~$\Tc$.
In Fig. \ref{fig:1}, the reconstruction set is a closed interval, with the end points being the consistent reconstruction $\hat{\fv}$ of \cite{bansal2005bandwidth}
and the generalized reconstruction $\hat{\tv}$ of \cite{Adcock:2013:BCR}. If it is unclear, which one of the procedures,  sampling or guiding, can be trusted more, any element of the reconstruction set becomes a valid candidate for reconstruction. 

Moreover, Fig. \ref{fig:1} and our discussion above suggest us to propose a general  definition of a reconstruction set as a shortest pathway 
between a given guiding set and a sample consistent set, defined as a set of signals sample-consistent with the original signal, see Fig. \ref{fig:2}.
        \begin{figure}
                \includegraphics[width=0.85\linewidth]{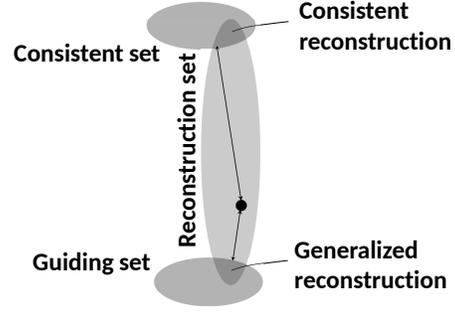}
                \caption{The reconstruction set in a metric space}\label{fig:2}
        \end{figure}
The shortest pathway between two sets can be formally defined as a convex set of elements, such that any element of the shortest pathway minimizes a sum of a distance between the element and the first set and a distance between the element and the second set. 

The consistent reconstruction is the intersection of the reconstruction set and the consistent set. The generalized (strictly guided) reconstruction is the intersection of the reconstruction set and the guiding set.
In this definition, one only needs a structure of a metric space with a distance, thus allowing nonlinear and even multi-valued sampling procedures and general guiding sets. 

For example, the interesting recent work of Adcock and Hansen~\cite{Adcock2016} combines generalized reconstruction with infinite dimensional compressed sensing in a natural framework of Banach spaces. We believe that our notion of the reconstruction set can be extended to such a framework, 
allowing one to find the reconstructed signal that is not strictly sparse, but is guided by a reconstruction subspace, identified by Adcock and Hansen's method in~\cite{Adcock2016}. 

In this paper, however, we limit ourselves to the traditional Hilbert space framework,  where the guiding set  is a closed subspace and
the sample consistent set is a closed plane. When the generalized reconstruction and the consistent reconstruction exist and are unique, the reconstruction set is simply their convex hull---a closed interval in this case, exactly as illustrated in Fig. \ref{fig:1} in the 3D space $\Hc$.  

Another possibility, not addressed here, is where our deterministic setup is augmented by a probabilistic approach, where signals are random. For example, either, or both, consistent and guiding sets may be determined using probability distributions. In this case, the reconstruction set is also determined by a probability distribution using a statistical distance between random variables or samples. 

Having to output the whole reconstruction set of multiple reconstructed signals may not be appropriate in applications, even where the parametrization using the end points of the reconstruction set is possible. To pick up a single reconstructed signal from the reconstruction set, one needs extra information, for example, a cost/quality function, e.g.,\ Buridan's donkey can choose a proper healthy mix of hay and water following a given dietary function. 
Then, one may output only the signals in the neighborhood of the reconstruction set that minimize the cost/quality function.  In Sec.~\ref{sec:regularization}, we show how to select the optimal solution, if the amount of noise is known, and relate the reconstruction set to regularization methods. 
Finally, if the cost/quality function should be trusted more than both the consistent and the guiding set, one may choose to minimize a weighted sum of the  cost/quality function and distances to the reconstruction and sampling sets. 

\section{Overview of reconstruction in a Hilbert space}\label{s:overview}

The intersection  $\Sc^\perp\cap\Tc$ consists of signals in the guiding reconstruction subspace $\Tc$ with zero samples, projections on~$\Sc$. Its important role in the reconstruction is stated in the following assumption.
 \begin{enumerate}[label=\bfseries (A\arabic*), topsep=0pt, itemsep=0pt]
  \setcounter{enumi}{-1}
  \item \label{AU} \emph{Reconstruction Uniqueness}: A reconstruction $\hat{\fv}$ of a given signal $\fv$ is unique if and only if $\Sc^\perp\cap\Tc = \{\zerov\}$.
Otherwise all possible reconstructions form the closed plane defined as $\hat{\fv} + \{\Sc^\perp\cap\Tc\}$.
\end{enumerate}
Possible basic assumptions on the reconstruction can be: 
\begin{enumerate}[label=\bfseries (A\arabic*), topsep=0pt, itemsep=0pt]
\item \label{ASC} \emph{Sample Consistent}: The reconstructed signal yields the same sample as the original signal, i.e. $\Sm\hat\fv = \Sm\fv,~\forall\fv$. 

\item \label{ASS} \emph{Sample Sufficient}: The reconstructed signal is fully determined, up to signals in $\Sc^\perp\cap\Tc$, by the sample of the original signal, i.e. $\hat\fv_1-\hat\fv_2\in \Sc^\perp\cap\Tc,~\forall\fv_1$ and $\fv_2$ such that $\Sm\fv_1=\Sm\fv_2$. 

\item \label{AECR} \emph{Guiding Subspace Reconstruction}: Signals in the guiding reconstruction subspace are reconstructed within the subspace, i.e.\ $\hat\fv\in\Tc,~\forall\fv\in\Tc$. 

\item \label{ARS} \emph{Reconstruction Stability}: A small change in the original signal results in a proportionally small change in the reconstructed signal, up to signals in $\Sc^\perp\cap\Tc$.
\end{enumerate}

Axioms \ref{ASC} and \ref{ASS} imply that repeated reconstruction does not change, up to signals in $\Sc^\perp\cap\Tc$, an already reconstructed signal, i.e. 
$\hat\fv_2-\fv_2\in \Sc^\perp\cap\Tc,~\forall\fv_2$ such that $\fv_2=\hat\fv_1$, for an arbitrary $\fv_1.$ Indeed, $\Sm\hat\fv_1 = \Sm\fv_1$ by \ref{ASC}, so let us denote 
 $\fv_3=\Sm\hat\fv_1 = \Sm\fv_1$. Axiom \ref{ASS} gives 
 $\hat\fv_2 - \hat\fv_3\in \Sc^\perp\cap\Tc$, using $\fv_3=\Sm\hat\fv_1=\Sm\fv_2$, and
 $\hat\fv_3 - \hat\fv_1\in \Sc^\perp\cap\Tc$, using $\fv_3 = \Sm\fv_1$, thus
   $\hat\fv_2 - \hat\fv_1\in \Sc^\perp\cap\Tc$, which proves the claim.

Axioms \ref{ASC} and  \ref{AECR}  imply full conditional reconstruction, where signals in the guiding reconstruction subspace are exactly reconstructed, up to signals in $\Sc^\perp\cap\Tc$, i.e.\ we have that $\hat\fv-\fv\in\Sc^\perp\cap\Tc,~\forall\fv\in\Tc$. Indeed, \ref{ASC} is equivalent to $\hat\fv-\fv\in\Sc^\perp,~\forall\fv\in\Hc$; at the same time, \ref{AECR} is  equivalent to $\hat\fv-\fv\in\Tc,~\forall\fv\in\Tc$. Thus,  $\hat\fv-\fv\in\Sc^\perp\cap\Tc,~\forall\fv\in\Tc.$

On the one hand, we want to define a reconstruction operator $\Rm:\Hc\to\Hc$, i.e.\ the reconstructed signal $\hat{\fv}$ of $\fv$ is given by $\hat{\fv} = \Rm\fv$, which requires uniqueness of  $\hat{\fv}$. On the other hand, the nontrivial intersection $\Sc^\perp\cap\Tc\neq\{\zerov\}$ naturally appear in some applications; see, e.g.,\   ~\cite{Hirabayashi-Unser-2007}. Not having additional information, one cannot decide if any one reconstruction from the  plane $\hat{\fv} + \{\Sc^\perp\cap\Tc\}$ is better or worse than another, according to \ref{AU}.
Mathematically, we can resolve the issue by replacing the space $\Hc$ with a quotient-space  $\Hc/\{\Sc^\perp\cap\Tc\}$, collapsing $\Sc^\perp\cap\Tc$ into zero, and consistently replacing the subspaces $\Sc$ and $\Tc$ with similar quotient-spaces. After such replacements, we have $\Sc^\perp\cap\Tc = \{\zerov\}$, which we now assume for the rest of this section, so the reconstruction operator $\Rm$ is correctly defined by  $\hat{\fv} = \Rm\fv$. 

Below we list possible requirements for the reconstruction operator $\Rm$, matching  \ref{ASC}, \ref{ASS},  \ref{AECR}, and \ref{ARS}:
\begin{enumerate}[label=\bfseries (B\arabic*), topsep=0pt, itemsep=0pt]
\item \label{BSC} \emph{Sample Consistent}: The reconstructed signal yields the same sample as the original signal, i.e. $\Sm\Rm=\Sm$. 

\item \label{BSS} \emph{Sample Sufficient}: The reconstructed signal is fully determined by the sample of the original signal, i.e. the identity
$\Rm\fv_1=\Rm\fv_2$ holds $\forall\fv_1$ and $\fv_2$ such that $\Sm\fv_1=\Sm\fv_2$. 

\item \label{BECR} \emph{Guiding Subspace Reconstruction}: The guiding reconstruction subspace $\Tc$ is $\Rm$-invariant, i.e. the inclusion
 $\Rm\fv\in\Tc$ hold $\forall\fv\in\Tc$. 

\item \label{BRS} \emph{Reconstruction Stability}: The reconstruction  operator $\Rm$ is continuous. 

\end{enumerate}

We note that  \ref{BSS} implies (and for a linear reconstruction operator  $\Rm$ is equivalent to) the identity $\Rm=\Rm\Sm$. Therefore, axioms \ref{BSC} and \ref{BSS} lead to $\Rm^2=\Rm$, i.e. that the reconstruction operator is a projector (idempotent), since 
$\left(\Rm\right)\Rm=\left(\Rm\Sm\right)\Rm=\Rm(\Sm\Rm)=\Rm(\Sm)=\Rm.$ 

Having $\Sc^\perp\cap\Tc = \{\zerov\}$ in addition to axioms \ref{BSC} and  \ref{BECR},  implies a full conditional reconstruction, where signals in the guiding reconstruction subspace are exactly reconstructed, i.e.\ $\Rm\fv = \fv,~\forall\fv\in\Tc$. Indeed, \ref{BSC} is equivalent to $\Rm\fv-\fv\in\Sc^\perp,~\forall\fv\in\Hc$, and \ref{BECR} is  equivalent to $\Rm\fv-\fv\in\Tc,~\forall\fv\in\Tc$, thus  $\Rm\fv-\fv\in\Sc^\perp\cap\Tc=\{\zerov\},~\forall\fv\in\Tc.$
 
Making requirement \ref{BECR} stricter, such that the reconstructed signal $\Rm{\fv}$ is always constrained to the guiding (in this case actually target) reconstruction subspace $\Tc$, in addition to  \ref{BSC} and \ref{BSS}, results in a single valid choice of the reconstruction operator $\Rm$, given by an oblique projector $\Pm_{\Tc\perp\Sc}$, see~\cite{Unser-TSP-94, Eldar-JFA-03}, onto the subspace $\Tc$ along the orthogonal complement  $\Sc^\perp$ to the sampling subspace $\Sc$. Defining the oblique projector requires assuming $\Sc^\perp+\Tc = \Hc$ in addition to $\Sc^\perp\cap\Tc = \{\zerov\}$, together necessary and sufficient for existence and uniqueness of the intersection of the sample-consistent reconstruction plane $\Sm\fv+ \Sc^\perp$ with the  guiding reconstruction subspace $\Tc$ for any original signal $\fv$; see~\cite{Hirabayashi-Unser-2007,Corach-Giribet-2011}. The linear operator $\Rm=\Pm_{\Tc\perp\Sc}$ satisfies \ref{BSC}, see~\cite{Unser-TSP-94, Eldar-JFA-03},
and is bounded; see \cite{Corach-Giribet-2011} and our discussion in Sec.~\ref{s:GenRec}. 

The traditional assumption  $\Sc^\perp+\Tc = \Hc$ may result in the oblique projector $\Pm_{\Tc\perp\Sc}$ with a large norm. 
To circumvent the  assumption  $\Sc^\perp+\Tc = \Hc$, authors of \cite{Adcock:2013:BCR} propose 
a more general constrained reconstruction using the oblique projector  $\Rm=\Pm_{\Tc\perp\Sm(\Tc)}$, onto the subspace $\Tc$ along the orthogonal complement to the sampling subspace $\Sm(\Tc)\subseteq\Sc$. 
This reconstruction minimizes a distance from the reconstructed signal within the guiding subspace $\Tc$ to 
 the sample-consistent reconstruction plane $\Sm\fv+ \Sc^\perp$. 
 If the distance is zero, the reconstruction is sample consistent, i.e.\ satisfies  \ref{BSC}; 
 otherwise it represents the generalized reconstruction of~\cite{adcock2012generalized}.

Assumptions \ref{BSC}, \ref{BSS}, and \ref{BECR} may be approximated, or even completely abandoned. For example, minimax regret in~\cite{Eldar-TSP-06} leads to the reconstruction $\Rm=\Tm\Sm$, where $\Tm$ is an orthogonal projector onto the  guiding reconstruction subspace $\Tc$, which easily meets requirements \ref{BSS}, a stricter version of \ref{BECR}, and \ref{BRS}, but not \ref{BSC}. 

Sometimes, no target or even guiding reconstruction subspace is available or known at all, so assumptions  \ref{BECR} or $\Rm{\fv} \in \Tc$ are inapplicable and replaced with signal energy minimization. The reconstructed signal $\hat{\fv}=\Rm{\fv}$ in \cite{Eldar-SPM-09} solves the following constrained minimization problem
\begin{equation}
\inf_{\hat{\fv}} \|\Hm\hat{\fv}\| \text{ subject to } \Sm\hat{\fv} = \Sm\fv,
\label{eq:Lrecon_problem}
\end{equation}
with a non-singular operator $\Hm$. 
Taking $\Hm=\Tm^\perp+\alpha\mathbf{I}$ with $\alpha\to0$ in \eqref{eq:Lrecon_problem} 
approximates our core minimization problem, introduced in the next section.

\section{Proposed Reconstruction Methods}\label{s:prm}
\subsection{Sample Consistent reconstruction}
We first propose a novel formulation and algorithms for the \emph{sample consistent} reconstruction, used in \cite{bansal2005bandwidth}, which relaxes the constraint that $\hat{\fv} \in \Tc$, used in~\cite{Unser-TSP-94, Eldar-JFA-03}, instead minimizing the energy in $\Tc^\perp$, consistently with the sample, as in  \ref{ASC}. We provide mathematical background, taking advantage of a theory developed in~\cite{Knyazev-CMM-07}, that is then used to address the issues of existence, uniqueness, and to prove  \ref{AECR} and \ref{ARS}, giving necessary theoretical foundation, supplementing~\cite{bansal2005bandwidth}.

Specifically, the reconstructed signal $\hat{\fv}$ is determined as a solution of the following constrained minimization problem
\begin{equation}
\inf_{\hat{\fv}} \|\hat{\fv}-\Tm\hat{\fv}\| \text{ subject to } \Sm\hat{\fv} = \Sm\fv,
\label{eq:recon_problem}
\end{equation}
which is equivalent to the problem  
\begin{equation}
\inf_{\hat\xv \in \Sc^\perp} \Braket{\left(\hat\xv+\Sm\fv\right), \Tm^\perp \left(\hat\xv+\Sm\fv\right)},
\label{eq:opt_energy}
\end{equation}
where $\hat\xv=\hat\fv-\Sm\fv$.
If the solutions $\hat{\fv}$ and $\hat{\xv}$ to problems  \eqref{eq:recon_problem} and  \eqref{eq:opt_energy}, correspondingly, are not unique, we choose solutions in the corresponding factor-spaces, e.g., the normal (i.e. with the smallest norm) solutions $\hat{\fv}_n$ and $\hat{\xv}_n$ to guarantee the uniqueness, required to define the reconstruction operator $\Rm.$
The reconstruction based on solving  \eqref{eq:recon_problem} satisfies assumptions \ref{AU}, \ref{ASC}, and \ref{ASS} by design. 

Under the assumptions  $\Sc^\perp+\Tc = \Hc$ and $\Sc\cap\Tc^\perp = \{\zerov\}$, traditional in the literature,  the solution $\hat{\fv}$ of \eqref{eq:recon_problem} is just the same as the result of the oblique projection $\Pm_{\Tc\perp\Sm}$ in~\cite{Unser-TSP-94, Eldar-JFA-03}, but our method and the resulting algorithms are different, based only on actions of orthogonal projectors $\Tm$ and $\Sm$ without necessarily using frames. Moreover, we need neither of the assumptions, which makes our method robust in applications, and allows choosing a greater variety of the subspaces, compared to conventional reconstruction.  
For example, violating the assumption $\Sc^\perp+\Tc = \Hc$ allows oversampling, e.g.,\ for handling noisy data and sensors, 
as advocated in \cite{adcock2012generalized,Adcock:2013:BCR}.

Problem \eqref{eq:opt_energy} can be equivalently written in the following operator form,  
\begin{equation}
\left(\Sm^\perp \Tm^\perp\right) \big| _{\Sc^\perp} \xv = - \Sm^\perp\Tm^\perp \Sm\fv, 
\label{eq:recon_sym_eq_bl}
\end{equation} 
where $\left(\cdot\right)| _{\Sc^\perp}$ denotes the operator restriction to its invariant subspace $\Sc^\perp$ (i.e. the domain of $\Sm^\perp \Tm^\perp $ is restricted to $\Sc^\perp$). 
If $\hat{\xv}$ is a solution to the above problem, then the reconstructed signal $\hat{\fv} = \hat{\xv} + \Sm\fv$ equivalently satisfies
\begin{equation}
\Sm^\perp\Tm^\perp\hat{\fv} = \zerov \text{ and } \Sm\hat{\fv} = \Sm\fv,
\label{eq:recon_sym_eq_con_zero}
\end{equation}
which is an operator form of our constrained minimization \eqref{eq:recon_problem}.

System of equations \eqref{eq:recon_sym_eq_con_zero} is a particular case of the following system, investigated in~\cite{Knyazev-CMM-07} (see also~\cite{Arias2014})
\begin{equation}
\Sm^\perp(\Am\hat{\fv} - \hv) = \zerov  \text{ and }
\Sm(\hat{\fv}-\fv) = \zerov,
\label{eq:recon_sym_eq_con}
\end{equation}
where $\Am$ is a bounded self-adjoint non-negative operator on $\Hc$, i.e.\ $\Am=\Am^\star\geq0$. When $\hv = \zerov$ and $\Am = \Tm^\perp$, we get system~\eqref{eq:recon_sym_eq_con_zero} and $N(\Am)=\Tc$. If we split $\hat{\fv}$ as in \eqref{eq:split} then system \eqref{eq:recon_sym_eq_con} is equivalent to
\begin{equation}
\left(\Sm^\perp \Am\right) \big| _{\Sc^\perp} \xv = \Sm^\perp\left(\hv -  \Am \Sm\fv\right).
\label{eq:recon_sym_eq}
\end{equation}
Conditions for existence and uniqueness of the solutions of equations  \eqref{eq:recon_sym_eq_con} and \eqref{eq:recon_sym_eq} derived in~\cite{Knyazev-CMM-07} are being adapted below for reconstruction problem \eqref{eq:recon_sym_eq_con_zero} in Sec.s \ref{s:uniquness} and \ref{sec:es}.

Systems \eqref{eq:recon_sym_eq_bl} and \eqref{eq:recon_sym_eq_con_zero} are advantageous for numerical solution, e.g.,\ can be solved iteratively. In Sec. \ref{sec:CG}, 
we propose a conjugate gradient iterative method for solving \eqref{eq:recon_sym_eq_bl}. The matrix of the orthoprojector 
$\Tm^\perp$ or $\Tm=\mathbf{I} -\Tm^\perp$ is not needed in an iterative solver, and can be substituted with a function defining a multiplication of the orthoprojector by a given vector. 
The~multiplication can be approximate, e.g.,\ implementing an action of a signal filter, as we describe in Sec.~\ref{sec:gs},
instead of relying on a traditional frame-based definition of the guiding subspace $\Tc$. 
Moreover, a generic filter may substitute $\Tm$ or $\Tm^\perp$ in \eqref{eq:recon_problem}, \eqref{eq:opt_energy}, and \eqref{eq:recon_sym_eq_con_zero}, 
but analyzing such a substitution is beyond the scope of the present paper.

Least squares minimization formulations \eqref{eq:recon_problem} and  \eqref{eq:opt_energy} have an equivalent elegant geometric interpretation, cf., equality  \eqref{eq:recon_problem_geom01}, e.g.,\ for lines in 3D in Fig. \ref{fig:1}, 
where the second  minimization problem in \eqref{eq:recon_problem_geom01} simply determines the shortest distance between the sample-consistent closed plane $\Sm\fv+\Sc^\perp$ and the guiding closed subspace~$\Tc$.
Indeed, in the first minimization problem in \eqref{eq:recon_problem_geom01}, the inner minimization for a fixed vector $\hat{\fv}$ is  $\inf_{\hat{\tv}\in\Tc} \|\hat{\fv}-\hat{\tv}\|$, which always has a solution $\hat{\tv}=\Tm\hat{\fv}\in\Tc$ using the orthogonality argument. The outer minimization is then 
exactly our problem \eqref{eq:recon_problem}.

\vspace{-0.3cm}
\subsection{Strictly Guided (Generalized) Reconstruction}
In~the last minimization problem, \eqref{eq:recon_problem_geom02}, we swap the order of  minimization, compared to  the first minimization problem in equality \eqref{eq:recon_problem_geom01}. 
We can call a solution $\hat{\tv}\in\Tc$ of problem  \eqref{eq:recon_problem_geom02} a \emph{strictly guided} reconstructed signal. 
In Sec. \ref{sec:cgr}, we discuss additional assumptions that turn our strictly guided reconstructed signal $\hat{\tv}\in\Tc$ into well-known generalized reconstructed signal, proposed in \cite{adcock2012generalized,Adcock:2013:BCR}.

By analogy with the operator form \eqref{eq:recon_sym_eq_con_zero} of the first minimization problem in \eqref{eq:recon_problem_geom01}, minimization problem \eqref{eq:recon_problem_geom02} is equivalent to 
\begin{equation}
\Tm\Sm\left(\hat{\tv}-\fv\right) = \zerov, \text{ where } \hat{\tv} \in\Tc.
\label{eq:sgrecon}
\end{equation}
Indeed, for a fixed vector $\tv$, the minimization problem $\inf_{\hat{\fv}\in\Sm\fv+\Sc^\perp} \|\hat{\fv}-\tv\|$
is equivalent in our Hilbert space $\Hc$ to the orthogonality condition $\hat{\fv}-\tv\perp\Sc^\perp$, i.e. $\hat{\fv}-\tv\in\Sc$, 
which is equivalent to $\hat{\fv}=\Sm\fv+\Sm^\perp\tv$ solving \eqref{eq:sgrecon}, and thus turning the ``inf'' into ``min'' in the minimization. 
Due to the linear constraint $\tv\in\Tc$ in the outer minimization in \eqref{eq:recon_problem_geom02}, its minimizer $\hat{\tv}\in\Tc$, if exists, satisfies 
the orthogonality condition 
\[\hat{\fv}-\hat{\tv}=\left(\Sm\fv+\Sm^\perp\hat{\tv}\right)-\hat{\tv}= \Sm\left(\hat{\tv}-\fv\right)\in\Tc^\perp,\]
equivalent to \eqref{eq:sgrecon}, which completes the argument.

It is interesting to compare the solution $\hat{\tv}$ of \eqref{eq:sgrecon} to the constrained frame-less reconstruction given in \cite{adcock2012generalized,Adcock:2013:BCR} by the oblique projector  $\Pm_{\Tc\perp\Sm\Tc}$ on the closed subspace $\Tc$ along the closed subspace $\left(\Sm\Tc\right)^\perp$. Motivated by Fig. \ref{fig:1},
we prove in Sec.~\ref{s:GenRec},  that  $\Pm_{\Tc\perp\Sm\Tc}\fv=\hat{\tv}$,  under an additional assumption $\Sc^\perp\cap\Tc=\{\zerov\}$ for uniqueness of 
$\hat{\tv}$, which is required to define the single-valued operator $\Pm_{\Tc\perp\Sm\Tc}$. Other applicability assumptions in \cite{adcock2012generalized,Adcock:2013:BCR} are equivalent to ours. In the case $\Sc\cap\Tc^\perp \neq \{\zerov\}$, we also discuss how 
the strictly guided reconstruction  $\hat{\fv}$ can be defined via a factor space analysis.

Equation  \eqref{eq:sgrecon} can be solved iteratively, e.g., by the conjugate gradient method, if equivalently transformed into 
$\left(\Tm\Sm\right) \big| _{\Tc}\hat{\tv}= \Tm\Sm\fv,$ or $\Tm\Sm\Tm\hat{\tv}= \Tm\Sm\fv,$
providing us with an interesting alternative to solving \eqref{eq:recon_sym_eq_bl},
cf. \cite{7480396,adcock2012generalized,Adcock:2013:BCR}. For~example, equation \eqref{eq:sgrecon} does not require knowing the sampling subspace $\Sc$ explicitly, in contrast to \eqref{eq:recon_sym_eq_bl}. Moreover, the matrix of the orthoprojector $\Sm$ is not needed in \eqref{eq:sgrecon}, being replaced with a sampling function defining a multiplication of the orthoprojector $\Sm$ by a given vector. Furthermore, the sampling function can be approximate, not necessarily having a null-space, and may even change during the course of iterations, e.g.,\ varying in time for time-series signals or depending on the current iterative reconstructed signal. 

Flexibility of approximating both the sampling and the guiding procedures, which is possible in formulation \eqref{eq:sgrecon}, appears important in practical applications, but such extensions are beyond the scope of the present paper. We only note here that the minimax regret \cite{Eldar-TSP-06}  reconstruction $\Tm\Sm\fv$ can be interpreted as a rudimentary one-step of an iterative solver with the zero initial guess for solving $\Tm\Sm\Tm\hat{\tv}= \Tm\Sm\fv.$ 

\subsection{Implications of conditions of optimality}
Turning our attention to the second  minimization problem in \eqref{eq:recon_problem_geom01}, for the shortest distance between the sample-consistent closed plane $\Sm\fv+\Sc^\perp$ and the guiding closed subspace~$\Tc$, we obtain the following first-order necessary and sufficient conditions of optimality 
\begin{equation}
\left\{
\setlength\arraycolsep{0pt}
\begin{array}{ r @{{}={}} l }
\hat{\tv} &  \Tm\hat{\fv} \\
\hat{\fv} &  \Sm\fv+\Sm^\perp\hat{\tv}\\
\end{array}
\right.,
 \label{eq:mixed}
\end{equation}
already derived just above. 

Both the sample consistent   $\hat{\fv}$  and strictly guided $\hat{\tv}$ reconstructions can in principle be computed 
together by solving the system of equations \eqref{eq:mixed} numerically. Instead of doubling the number of unknowns, one can substitute
the second equation $\hat{\fv} = \Sm\fv+\Sm^\perp\hat{\tv}$ in system \eqref{eq:mixed} into the first one, $\hat{\tv} = \Tm\hat{\fv}$, obtaining 
the equation $\left(\mathbf{I}- \Tm\Sm^\perp\right)\hat{\tv}=\Tm\Sm\fv$ for $\hat{\tv}$ only; cf. e.g.,\ \cite[Sec. IV]{7480396}. The latter equation turns into already considered above equation \eqref{eq:sgrecon}, since $\mathbf{I}- \Tm\Sm^\perp=\mathbf{I}- \Tm + \Tm\Sm$ and 
 $\hat{\tv}\in\Tc$, so that $\left(\mathbf{I}- \Tm\right)\hat{\tv}=\zerov,$
if $\Tm$ is indeed the exact orthoprojector onto $\Tc$, as we assume throughout the paper.
 
We only use \eqref{eq:mixed} here to discover a very important identity, in the next paragraph.  

Multiplying both parts of the first equation in \eqref{eq:mixed} by $\Tm$, we see that $\hat{\fv}-\hat{\tv}\in\Tc^\perp$. 
Multiplying both parts of the second equation in \eqref{eq:mixed} by $\Sm^\perp$, one confirms that $\hat{\fv}-\hat{\tv}\in\Sc$, as already used in deriving equation \eqref{eq:sgrecon}.
Thus, $\hat{\fv}-\hat{\tv}\in\Sc\cap\Tc^\perp$.
Moreover, it follows from \eqref{eq:mixed} that 
\[
 \Pm_{\Sc\cap\Tc^\perp}\left(\hat{\fv}-\hat{\tv}\right)=\Pm_{\Sc\cap\Tc^\perp}\hat{\fv}=\Pm_{\Sc\cap\Tc^\perp}\fv, 
\]
where $\Pm_{\Sc\cap\Tc^\perp}$ is an orthoprojector onto the closed subspace $\Sc\cap\Tc^\perp.$
We come to a simple orthogonal decomposition  
\begin{equation}
\hat{\fv} = \hat{\tv} + \Pm_{\Sc\cap\Tc^\perp}\fv. 
 \label{eq:tf}
\end{equation}

Knowing \eqref{eq:mixed} and \eqref{eq:tf} allows us to primarily concentrate in the rest of the paper on the consistent reconstruction $\hat{\fv}$.

\subsection{Reconstruction Set and Regularization}
Assuming that all minimization problems in \eqref{eq:recon_problem_geom01}--\eqref{eq:recon_problem_geom02} have solutions, we define a reconstruction set as a union of closed intervals with the end points $\hat{\fv}\in\Sm\fv+\Sc^\perp$ and $\hat{\tv}=\Tm\hat{\fv}\in\Tc$. If the solution is unique, 
the reconstruction set is a single interval, as illustrated in Fig.~\ref{fig:1}.

Within the reconstruction set, the sample consistent reconstructed signal  $\hat{\fv}$ is evidently expected to have the smallest 
reconstruction error $\left\| \hat{\fv} - \fv\right\|$, assuming that the sample $\Sm\fv$ is accurate. 
Identities \eqref{eq:mixed} and \eqref{eq:tf} and the Pythagorean theorem immediately imply the following theorem.
\begin{theorem}
\label{thm:rcsize}
Let $\hat{\fv}_\alpha$ be any point in the reconstruction set given by $\hat{\fv}_\alpha =\alpha\hat{\fv}+(1-\alpha)\hat{\tv}, \ 0 \leq \alpha \leq 1$. Then the reconstruction error is given by
\[
	\left\| \hat{\fv}_\alpha - \fv\right\|^2 = \left\| \hat{\fv} - \fv\right\|^2 + (1-\alpha)^2\left\|\Pm_{\Sc\cap\Tc^\perp}\fv\right\|^2,
\]
where  $\left\|\Pm_{\Sc\cap\Tc^\perp}\fv\right\|$ is the shortest distance, defined by \eqref{eq:recon_problem_geom01}, between the sample-consistent closed plane $\Sm\fv+\Sc^\perp$ and the guiding closed subspace~$\Tc$.
\end{theorem}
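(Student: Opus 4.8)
The plan is to collapse the one-parameter family $\hat{\fv}_\alpha$ into a single affine expression in $\hat{\fv}$ and then split the resulting error into two orthogonal pieces. First I would invoke the orthogonal decomposition \eqref{eq:tf}, which reads $\hat{\fv}=\hat{\tv}+\Pm_{\Sc\cap\Tc^\perp}\fv$, equivalently $\hat{\tv}=\hat{\fv}-\Pm_{\Sc\cap\Tc^\perp}\fv$, to eliminate the endpoint $\hat{\tv}$ from the convex combination. Substituting this into $\hat{\fv}_\alpha=\alpha\hat{\fv}+(1-\alpha)\hat{\tv}$ makes the $\alpha\hat{\fv}$ and $(1-\alpha)\hat{\fv}$ terms recombine, leaving the single formula $\hat{\fv}_\alpha=\hat{\fv}-(1-\alpha)\Pm_{\Sc\cap\Tc^\perp}\fv$. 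Consequently the reconstruction error decomposes as
\[
\hat{\fv}_\alpha-\fv=\left(\hat{\fv}-\fv\right)-(1-\alpha)\,\Pm_{\Sc\cap\Tc^\perp}\fv.
\]

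The key step, and the only thing one must be careful about, is to check that the two summands on the right are mutually orthogonal, since that is precisely what turns the squared norm into the claimed Pythagorean sum with no surviving cross term. By sample consistency \ref{ASC}, the consistent endpoint satisfies $\Sm\hat{\fv}=\Sm\fv$, hence $\hat{\fv}-\fv\in\Sc^\perp$. On the other hand, the second summand lies in $\Sc\cap\Tc^\perp\subseteq\Sc$ by definition of the orthoprojector $\Pm_{\Sc\cap\Tc^\perp}$. Since $\Sc^\perp$ is orthogonal to $\Sc$, this yields $\Braket{\hat{\fv}-\fv,\,\Pm_{\Sc\cap\Tc^\perp}\fv}=0$.

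Applying the Pythagorean theorem to the orthogonal decomposition above then gives directly
\[
\left\|\hat{\fv}_\alpha-\fv\right\|^2=\left\|\hat{\fv}-\fv\right\|^2+(1-\alpha)^2\left\|\Pm_{\Sc\cap\Tc^\perp}\fv\right\|^2,
\]
which is the assertion. The identification of $\left\|\Pm_{\Sc\cap\Tc^\perp}\fv\right\|$ with the shortest distance between $\Sm\fv+\Sc^\perp$ and $\Tc$ follows immediately by evaluating \eqref{eq:recon_problem_geom01} at the optimal pair and using $\hat{\fv}-\hat{\tv}=\Pm_{\Sc\cap\Tc^\perp}\fv$ from \eqref{eq:tf}. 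There is no genuine obstacle here; the proof is a short chain of substitutions, and the entire content lies in recognizing the two correct membership facts—one summand in the sampling subspace and the other in its orthogonal complement—so that the Pythagorean identity applies cleanly.
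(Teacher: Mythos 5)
Your proposal is correct and follows exactly the route the paper intends: it combines the identity \eqref{eq:tf}, the sample consistency $\Sm\hat{\fv}=\Sm\fv$ from \eqref{eq:mixed}, and the Pythagorean theorem applied to the orthogonal pair $\hat{\fv}-\fv\in\Sc^\perp$ and $(1-\alpha)\Pm_{\Sc\cap\Tc^\perp}\fv\in\Sc$. The paper states this as an immediate consequence without writing out the details; your argument is a faithful and complete expansion of that one-line justification.
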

If we trust that the sample-consistent closed plane $\Sm\fv+\Sc^\perp$ is actually accurate, by Theorem  \ref{thm:rcsize}, the reconstruction error is indeed minimized on the sample consistent reconstructed signal,  $\hat{\fv}\in\Sm\fv+\Sc^\perp$, given by the end point $\alpha=1.$ 
If~there is noise in sample measurements, we may decide to trust the guiding closed subspace $\Tc$ more than the sample $\Sm\fv$ 
and choose as our output reconstruction a convex linear combination $\alpha\hat{\fv}+(1-\alpha)\Tm\hat{\fv}$ within the reconstruction set, 
where $0\leq\alpha<1.$ The other extreme choice $\alpha=0$ gives the strictly  guided reconstruction $\hat{\tv}=\Tm\hat{\fv}$, already discussed.

Having the complete reconstruction set determined allows selecting a single reconstructed signal in it, e.g., by minimizing some cost/quality function, where minimization is constrained to a neighborhood of the reconstruction set. A signal energy is one example of the cost/quality function, e.g., leading to minimization like in  \eqref{eq:Lrecon_problem}, but  constrained to a neighborhood of the reconstruction set. 
Such a procedure eliminates a typical difficulty of choosing a good regularization parameter in regularization-based inconsistent methods in \cite{Narang-GlobalSIP-13}.

In practical applications, it is common that the sampling procedure involves an inaccuracy in the input signal, where the inaccuracy may appear due to one or a combination of a noise, a limited accuracy of a sensor providing the sampling procedure, and a limited precision of data representing the input signal. If one can determine a level of the inaccuracy in the input signal relative to the shortest distance $\left\|\Pm_{\Sc\cap\Tc^\perp}\fv\right\|$ in \eqref{eq:recon_problem_geom01} between the sample consistent and guiding sets, the cost/quality function can be constructed that takes the level of the inaccuracy into account. 

One also can relax the reconstruction set constraint in the cost/quality function minimization, and consider alternative formulations, e.g., like in interior point methods, minimizing a weighted sum based on the cost/quality function and a distance to the reconstruction set, or based on primal-dual relaxations, but this goes beyond the scope of the present paper. 


\section{Uniqueness of Reconstructed Signal}\label{s:uniquness}
The following theorem gives a condition of our reconstruction $\hat{\fv}$ uniqueness.
\begin{theorem}{(Based on \cite[Lemma~4.2]{Knyazev-CMM-07})}
\label{thm:uniqueness}
Let $\hat\xv\in\Sc^\perp$ be a solution of \eqref{eq:recon_sym_eq_bl} and $\hat{\fv}=\hat{\xv} + \Sm\fv$ be a solution of \eqref{eq:recon_sym_eq_con_zero}.
The solutions $\hat\xv$ and $\hat{\fv}$ are unique if and only if $\Sc^\perp\cap\Tc = \{\zerov\}$.
Otherwise, all solutions form a plane 
$\hat{\xv} + \{\Sc^\perp\cap\Tc\}$ for \eqref{eq:recon_sym_eq_bl} 
and  a plane 
$\hat{\fv} + \{\Sc^\perp\cap\Tc\}$ for \eqref{eq:recon_sym_eq_con_zero}. There exists  unique normal solutions (with minimal norm in $\Hc$) $\hat\xv_n\in\Sc^\perp$ of \eqref{eq:recon_sym_eq_bl}
and  $\hat{\fv}_n$ of \eqref{eq:recon_sym_eq_con_zero}, which belong to the intersection of the corresponding plane and the closed subspace $ \left(\Sc^\perp\cap\Tc\right)^\perp = \overline{\Sc^\perp+\Tc}$, and where $\hat{\fv}_n=\hat{\xv}_n + \Sm\fv$. 
\end{theorem}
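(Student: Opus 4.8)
The plan is to recast \eqref{eq:recon_sym_eq_bl} as a single linear equation $\Lm\xv=\bv$ on the Hilbert space $\Sc^\perp$, where $\Lm=\left(\Sm^\perp\Tm^\perp\right)\big|_{\Sc^\perp}$ and $\bv=-\Sm^\perp\Tm^\perp\Sm\fv$, and to reduce every assertion to an analysis of $N(\Lm)$. This is the situation of \cite[Lemma~4.2]{Knyazev-CMM-07} with $\Am=\Tm^\perp$ and $\hv=\zerov$, so the abstract facts I shall lean on---that $\Lm$ maps $\Sc^\perp$ into itself, is self-adjoint and non-negative there, that any two solutions differ by an element of $N(\Lm)$, and that a nonempty closed affine solution set carries a unique minimal-norm point---are available once $N(\Lm)$ is identified. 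Self-adjointness and non-negativity I would record directly: for $\zv\in\Sc^\perp$ we have $\Sm^\perp\zv=\zv$, whence $\Braket{\zv,\Lm\zv}=\Braket{\Sm^\perp\zv,\Tm^\perp\zv}=\Braket{\zv,\Tm^\perp\zv}=\|\Tm^\perp\zv\|^2\geq0$, using only that $\Sm^\perp$ and $\Tm^\perp$ are orthogonal projectors.

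The heart of the argument is the null-space computation. If $\zv\in\Sc^\perp$ and $\Lm\zv=\zerov$, pairing with $\zv$ gives $\|\Tm^\perp\zv\|^2=\Braket{\zv,\Lm\zv}=0$, so $\Tm^\perp\zv=\zerov$, i.e.\ $\zv\in\Tc$; combined with $\zv\in\Sc^\perp$ this yields $\zv\in\Sc^\perp\cap\Tc$. The reverse inclusion is immediate, since $\zv\in\Tc$ already forces $\Tm^\perp\zv=\zerov$ and hence $\Lm\zv=\zerov$. Thus $N(\Lm)=\Sc^\perp\cap\Tc$. It follows at once that \eqref{eq:recon_sym_eq_bl} has a unique solution precisely when $\Sc^\perp\cap\Tc=\{\zerov\}$, and that otherwise its solution set is the affine plane $\hat\xv+\{\Sc^\perp\cap\Tc\}$. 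Since $\hat\fv=\hat\xv+\Sm\fv$ with $\Sm\fv$ fixed, the same dichotomy and the plane $\hat\fv+\{\Sc^\perp\cap\Tc\}$ transfer verbatim to \eqref{eq:recon_sym_eq_con_zero}; this is exactly assumption \ref{AU} realized for the present reconstruction.

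For the normal solutions I would argue geometrically. Each solution set is a translate of the closed subspace $\Sc^\perp\cap\Tc$, hence a closed (and, by hypothesis, nonempty) affine set, so the projection theorem supplies a unique element of least norm, characterized as the unique solution orthogonal to $\Sc^\perp\cap\Tc$, i.e.\ lying in $\left(\Sc^\perp\cap\Tc\right)^\perp=\overline{\Sc+\Tc^\perp}$ via the standard identity $(M\cap N)^\perp=\overline{M^\perp+N^\perp}$. Applied to \eqref{eq:recon_sym_eq_bl} this produces $\hat\xv_n$, with $\hat\xv_n\in\Sc^\perp$ because the whole plane lies in $\Sc^\perp$. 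To establish $\hat\fv_n=\hat\xv_n+\Sm\fv$, I would check that this candidate lies in both sets defining the normal solution of \eqref{eq:recon_sym_eq_con_zero}: it sits in $\hat\fv+\{\Sc^\perp\cap\Tc\}$ because $\hat\xv_n-\hat\xv\in\Sc^\perp\cap\Tc$, and it sits in $\overline{\Sc+\Tc^\perp}$ because $\hat\xv_n$ does while $\Sm\fv\in\Sc\subseteq\overline{\Sc+\Tc^\perp}$; uniqueness of the least-norm point then forces the claimed equality.

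The single genuine obstacle is the null-space identity, and it hinges on $\Tm^\perp$ being a projector, so that the quadratic form $\Braket{\zv,\Lm\zv}$ collapses to the honest norm-square $\|\Tm^\perp\zv\|^2$; without non-negativity the implication $\Lm\zv=\zerov\Rightarrow\Tm^\perp\zv=\zerov$ breaks. The remaining care is infinite-dimensional: I would keep existence of $\hat\xv$ as a standing hypothesis, since $\Lm$ need not have closed range (its range can fail to close when the minimal gap between $\Sc^\perp$ and $\Tc$ vanishes), whereas uniqueness and the normal-solution structure require only that $N(\Lm)=\Sc^\perp\cap\Tc$ be closed, which it is. I would also verify the orthogonal-complement identity with the closure in place, obtaining the complement $\overline{\Sc+\Tc^\perp}$.
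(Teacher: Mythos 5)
Your proposal is correct and follows essentially the route the paper relies on: the paper defers the proof to Lemma~4.2 of the cited reference, but its surrounding machinery (the identification $N(\Km)=N(\Am)\cap\Sc^\perp$ with $\Am=\Tm^\perp$ in Sec.~VI and the self-adjointness/non-negativity of $\left.\left(\Sm^\perp\Tm^\perp\right)\right|_{\Sc^\perp}$ in Proposition~1) is exactly your null-space computation via the quadratic form, and the minimal-norm step is the standard projection-theorem argument for a translate of a closed subspace. One remark: the theorem as printed writes $\left(\Sc^\perp\cap\Tc\right)^\perp=\overline{\Sc^\perp+\Tc}$, which is a typo for $\overline{\Sc+\Tc^\perp}$ --- the identity you invoke and the one the paper itself uses later in the error-bound section --- so your silent correction is the right reading.
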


Theorem~\ref{thm:uniqueness} gives us enough information to prove \ref{AECR}.
\begin{theorem}
\label{thm:AECR}
Reconstruction method \eqref{eq:opt_energy} satisfies \ref{AECR}.
\end{theorem}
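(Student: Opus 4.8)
The plan is to exhibit $\fv$ itself as a reconstruction and then read off the conclusion from the uniqueness structure of Theorem~\ref{thm:uniqueness}. Fix an arbitrary $\fv\in\Tc$, so that $\Tm^\perp\fv=\zerov$. First I would verify that the choice $\hat{\fv}=\fv$ satisfies the operator form \eqref{eq:recon_sym_eq_con_zero} of problem \eqref{eq:opt_energy}: the sample constraint $\Sm\hat{\fv}=\Sm\fv$ holds trivially, while
\[
\Sm^\perp\Tm^\perp\hat{\fv}=\Sm^\perp\Tm^\perp\fv=\Sm^\perp\zerov=\zerov,
\]
precisely because $\fv\in\Tc$ forces $\Tm^\perp\fv=\zerov$. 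Thus $\hat{\fv}=\fv\in\Tc$ is a bona fide solution of \eqref{eq:recon_sym_eq_con_zero}, equivalently a minimizer of the energy \eqref{eq:opt_energy}, whose optimal value is therefore $0$.

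Next I would invoke Theorem~\ref{thm:uniqueness} to characterize \emph{all} reconstructions and check each lands in $\Tc$. In the case $\Sc^\perp\cap\Tc=\{\zerov\}$ the solution is unique, so it must equal the solution we already found, giving $\hat{\fv}=\fv\in\Tc$ at once. In the degenerate case $\Sc^\perp\cap\Tc\neq\{\zerov\}$, Theorem~\ref{thm:uniqueness} says the full solution set is the affine plane $\fv+\{\Sc^\perp\cap\Tc\}$. The decisive observation is the containment $\Sc^\perp\cap\Tc\subseteq\Tc$: adding any element of $\Sc^\perp\cap\Tc$ to $\fv\in\Tc$ keeps the sum inside the subspace $\Tc$. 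Hence every solution of \eqref{eq:recon_sym_eq_con_zero} lies in $\Tc$, and in particular the chosen normal solution $\hat{\fv}_n$ does as well, which is exactly the assertion \ref{AECR}.

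There is no genuine obstacle in this argument; it is essentially immediate once one notices that a signal already in $\Tc$ is self-consistent as its own reconstruction, driving the objective of \eqref{eq:opt_energy} to zero. The only point demanding a little care is the non-unique case, where one must confirm that the entire ambiguity plane $\fv+\{\Sc^\perp\cap\Tc\}$ stays inside $\Tc$; this rests solely on $\Sc^\perp\cap\Tc\subseteq\Tc$, guaranteeing that neither passing to the quotient space $\Hc/\{\Sc^\perp\cap\Tc\}$ nor selecting the minimal-norm representative can leave the guiding subspace. I would close by noting that, combined with the sample-consistency \ref{ASC} built into \eqref{eq:opt_energy} by design, this yields the full conditional reconstruction $\hat{\fv}-\fv\in\Sc^\perp\cap\Tc$ for $\fv\in\Tc$ observed earlier.
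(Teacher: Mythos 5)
Your proposal is correct and follows essentially the same route as the paper: exhibit $\hat{\fv}=\fv$ as a solution (the objective of \eqref{eq:opt_energy} vanishes since $\fv\in\Tc$), then invoke Theorem~\ref{thm:uniqueness} to conclude that the full solution plane $\fv+\{\Sc^\perp\cap\Tc\}$ lies inside $\Tc$ because $\Sc^\perp\cap\Tc\subseteq\Tc$. The only difference is cosmetic—you verify the claim via the operator form \eqref{eq:recon_sym_eq_con_zero} while the paper argues directly from the minimization \eqref{eq:recon_problem}—and both are equivalent.
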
\vspace{-0.1in}
\begin{proof}
According to \ref{AECR}, $\fv\in\Tc$, but then 
$\hat{\fv}=\hat{\xv} + \Sm\fv=\fv \in \Tc$ is a solution of \eqref{eq:recon_problem}, 
since the minimizing quantity turns into zero, which is its smallest possible value. 
By Theorem~\ref{thm:uniqueness}, all solutions of \eqref{eq:recon_sym_eq_con_zero} 
form the plane $\hat{\fv} + \{\Sc^\perp\cap\Tc\}\subseteq\Tc$,  since 
$\hat{\fv}=\fv\in\Tc$ and $ \{\Sc^\perp\cap\Tc\}\subseteq\Tc$.
\end{proof}\vspace{-0.1in}

By Theorem~\ref{thm:uniqueness}, if $\Sc^\perp\cap\Tc \neq \{\zerov\}$, the solution $\hat{\xv}$ to the reconstruction problem~\eqref{eq:recon_sym_eq_bl} and the reconstruction itself, $\hat{\fv}=\hat{\xv} + \Sm\fv$, determined by \eqref{eq:recon_sym_eq_con_zero}, are both not unique, and vice versa, consistently with the assumption \ref{AU}. This can happen, e.g.,\ if the number of samples is too small or when the guiding reconstruction space is too large. A~similar issue appears in \cite{Hirabayashi-Unser-2007}, dealing with non-unique strictly consistent reconstructions in $\Tc$ by choosing a subspace in $\Tc$, i.e.\ constraining the guiding reconstruction space. Here, we propose a different approach, constraining the orthogonal complement  $\Sc^\perp$ of the sampling subspace $\Sc$. 

The reconstruction $\hat{\fv}$ is determined up to an arbitrary signal from the intersection $\Sc^\perp\cap\Tc \neq \{\zerov\}$. In section~\ref{s:overview}, we treat the plane $\hat{\fv} + \{\Sc^\perp\cap\Tc\}$ as a unique element of the quotient space $\Hc/\{\Sc^\perp\cap\Tc\}$, factoring out the intersection $\Sc^\perp\cap\Tc \neq \{\zerov\}$. A quotient space is mathematically powerful, but may be impractical in some applications. In practice, it may be desired to choose, by imposing further restrictions on the reconstruction, a single solution representing the equivalence class---the whole plane of solutions. The minimum norm solution is one such choice of a unique representative obtained by restricting the solution to be in $\left(\Sc^\perp\cap\Tc\right)^\perp$, suggested in Theorem~\ref{thm:uniqueness}. However, the minimum norm requirement may not be relevant for properties of the signal to be reconstructed. 

Alternatively, we can obtain a well-defined unique reconstruction by choosing 
 the solution $\hat{\xv}$ to the reconstruction problem~\eqref{eq:recon_sym_eq_bl} and the reconstruction $\hat{\fv}=\hat{\xv} + \Sm\fv$ determined by \eqref{eq:recon_sym_eq_con_zero} in a given closed subspace $\Mc$. 
The normal solution is a special case, where $\Mc = \left(\Sc^\perp\cap\Tc\right)^\perp$. 
We note that if the uniqueness condition $\Sc^\perp\cap\Tc = \{\zerov\}$ is satisfied, then $\Mc$ has to be equal to $\Hc$ so that there is no restriction on the reconstruction, which is consistent, e.g., with the choice of the normal solution. 

It is known that in order to be  isomorphic to the quotient space $\Hc/\{\Sc^\perp\cap\Tc\}$ it is necessary and sufficient for the subspace $\Mc$ to be complimentary to $\Sc^\perp\cap\Tc$,~i.e.
\begin{equation}
\Mc + \left(\Sc^\perp\cap\Tc\right) = \Hc \text{ and } \Mc \cap \left(\Sc^\perp\cap\Tc\right) = \{\zerov\},
\label{eq:complimentary-M}
\end{equation}
which we assume to hold for the rest of the section.

Assumptions \eqref{eq:complimentary-M} imply that the solution $\xv_\Mc$ of \eqref{eq:recon_sym_eq_bl} in $\Mc$ is unique; cf. \cite[Proposition 2]{Hirabayashi-Unser-2007}. 
%
%
With the additional constraint $\hat\xv\in\Mc$ that makes the reconstruction unique, reconstruction problem \eqref{eq:opt_energy} becomes
\begin{equation}
\inf_{\hat\xv \in \Mc\cap\Sc^\perp} \Braket{\left(\hat\xv+\Sm\fv\right), \Tm^\perp \left(\hat\xv+\Sm\fv\right)}.
\label{eq:recon-energy-non-unique}
\end{equation}
In order to write problem \eqref{eq:recon-energy-non-unique} in an unconstrained form similar to \eqref{eq:recon_sym_eq_bl}, we introduce orthogonal projectors $\Mm$ onto $\Mc$ and $\Fm$ onto the subspace $\Fc = \Mc \cap \Sc^\perp$.  The projector onto an  intersection of two subspaces has a closed form expression in terms of the projectors for the individual subspaces, given by the Anderson-Duffin formula~\cite{Anderson-JMA-69}, 
$
\Fm = 2\Mm \left(\Mm + \Sm^\perp\right)^\dagger\Sm^\perp.
$  
We note that the pseudo-inverse $\left(\Mm + \Sm^\perp\right)^\dagger$ above is bounded if and only if the minimal gap between subspaces $\Mc$ and $\Sc^\perp$ is positive; 
see, e.g.,\ \cite[Theorems 2.15 and 2.18]{Knyazev-JFA-10}.

Having the projector $\Fm$, we rewrite equation \eqref{eq:recon-energy-non-unique}, by analogy with \eqref{eq:recon_sym_eq_bl}, in an equivalent form
\begin{equation}
\left(\Fm \Tm^\perp\right) \big| _{\Fc}\, \xv = - \Fm\Tm^\perp\Sm\fv, 
\label{eq:recon-non-unique}
\end{equation}
which can be solved via a conjugate gradient (CG) method. 
Let us note that in the case of the normal solution, where  
$\Mc=\left(\Sc^\perp\cap\Tc\right)^\perp$, the CG method can find the normal solution being applied directly to \eqref{eq:recon_sym_eq_bl}, not needing \eqref{eq:recon-non-unique}.

Our reconstruction satisfies \ref{ASC} and  \ref{AECR}, which imply full conditional reconstruction, i.e.\ $\hat\fv-\fv\in\Sc^\perp\cap\Tc,~\forall\fv\in\Tc$, as we already know.
We now select a unique representative reconstruction $\hat\fv$ by restricting the solution $\hat{\xv}=\Sm^\perp \hat\fv$ to be in $\Mc$. But the original signal $\fv$ itself is only 
a representative of the plane $\fv+\Sc^\perp\cap\Tc$ of signals, which are indistinguishable within our assumptions. In order to match our unique representative $\hat\fv$ satisfying   
$\Sm^\perp \hat\fv \in \Mc$ to some representative of the plane $\fv+\Sc^\perp\cap\Tc$ of original signals, we make a natural assumption on the unmeasured component $\Sm^\perp \fv \in \Mc$, to make $\fv \in \Tc$ fully recoverable in the next theorem. 
\begin{theorem}\label{thm:AECRwithM}
Let $\fv \in \Tc$ and $\Sm^\perp \fv \in \Mc$. If $\hat{\xv}$ is the unique solution of \eqref{eq:recon-energy-non-unique}, then the reconstruction $\hat{\fv} = \hat{\xv} + \Sm\fv = \fv$.
\end{theorem}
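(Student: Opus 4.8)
The plan is to exhibit one explicit feasible point at which the objective of \eqref{eq:recon-energy-non-unique} attains its global minimum, and then use the uniqueness of the constrained minimizer to identify it with $\hat{\xv}$. The natural candidate is $\tilde{\xv} := \Sm^\perp\fv$, suggested by the orthogonal splitting $\fv = \Sm\fv + \Sm^\perp\fv$, which holds for every $\fv$. If this candidate is feasible and makes the energy vanish, we are done, since the energy is nonnegative.

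First I would verify feasibility, i.e.\ that $\tilde{\xv} \in \Mc \cap \Sc^\perp$. The inclusion $\tilde{\xv} = \Sm^\perp\fv \in \Sc^\perp$ is automatic, because $\Sm^\perp$ is the orthoprojector onto $\Sc^\perp$; and $\tilde{\xv} \in \Mc$ is exactly the standing hypothesis $\Sm^\perp\fv \in \Mc$. Hence $\tilde{\xv}$ lies in the constrained feasible set $\Mc \cap \Sc^\perp$ of \eqref{eq:recon-energy-non-unique}. Next I would evaluate the objective there. Since $\tilde{\xv} + \Sm\fv = \Sm^\perp\fv + \Sm\fv = \fv$ and $\fv \in \Tc$, the projector $\Tm^\perp$ annihilates it, so $\Tm^\perp(\tilde{\xv}+\Sm\fv) = \Tm^\perp\fv = \zerov$ and the energy equals $\Braket{\fv, \Tm^\perp\fv} = 0$. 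As $\Tm^\perp$ is an orthogonal projector, it is self-adjoint and nonnegative, so the objective is nonnegative over the whole feasible set; therefore the value $0$ achieved at $\tilde{\xv}$ is the global minimum and $\tilde{\xv}$ is a minimizer of \eqref{eq:recon-energy-non-unique}.

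Finally, the solution of \eqref{eq:recon-energy-non-unique} is unique—this is both granted in the statement and guaranteed by the complementarity assumption \eqref{eq:complimentary-M}—so $\hat{\xv} = \tilde{\xv} = \Sm^\perp\fv$, whence $\hat{\fv} = \hat{\xv} + \Sm\fv = \Sm^\perp\fv + \Sm\fv = \fv$. There is no genuinely hard step: the only thing that could fail is feasibility of the candidate $\tilde{\xv}$, which is precisely why the hypothesis $\Sm^\perp\fv \in \Mc$ is imposed. Without it, $\tilde{\xv}$ might leave $\Mc$, and the constrained minimizer could differ from $\Sm^\perp\fv$ by an element of $\Sc^\perp\cap\Tc$, recovering $\fv$ only modulo that intersection, exactly the weaker conclusion of the unconstrained Theorem~\ref{thm:AECR}. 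The added assumption $\Sm^\perp\fv\in\Mc$ is what promotes that ``full conditional reconstruction'' to the exact equality $\hat{\fv}=\fv$.
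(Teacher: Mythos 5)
Your proposal is correct and follows essentially the same route as the paper's proof: exhibit $\Sm^\perp\fv$ as a feasible point where the nonnegative objective vanishes, then invoke uniqueness to identify it with $\hat{\xv}$. You simply spell out the feasibility and optimality checks that the paper compresses into the word ``clearly.''
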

\begin{proof}
Under given conditions $\fv \in \Tc$ and $\Sm^\perp \fv \in \Mc$, clearly $\hat{\xv}=\Sm^\perp \fv$ is a solution of  \eqref{eq:recon-energy-non-unique}. But since $\hat{\xv}$ is unique, we have $\hat{\xv} = \Sm^\perp \fv$. Thus, $\hat{\fv} = \hat{\xv} + \Sm\fv = \fv$.   
\end{proof}
Even if the unmeasured part of the true signal has some energy outside of $\Mc$, this formulation ensures that the components in $\Mc$ are fully recovered. This would be beneficial if $\Mc$ is chosen such that large portion of the signal energy is expected to be contained in it.   

Next, we discuss conditions of existence and stability of the reconstructed signal. 

\section{Existence and Stability}\label{sec:es}
We begin by stating conditions for wellposedness, i.e.\ existence and stability of a solution, of problem~\eqref{eq:recon_sym_eq_con} since it is later used to give us a bound on a reconstruction error. We denote operator $\Km = \left(\Sm^\perp\Am\right)\big|_{\Sc^\perp}$ obtaining
\[R(\Km) = \Sm^\perp\Am \Sc^\perp,~N(\Km) = N(\Sm^\perp \Am)\cap \Sc^\perp = N(\Am) \cap \Sc^\perp.\]
A normal solution of equation $\Km\xv=\bv$ depends continuously on $\bv\in R(\Km)$ if and only if the pseudo-inverse operator $\Km^\dagger:R(\Km) \rightarrow \Sc^\perp/N(\Km)$ is bounded. Here $\Sc^\perp/N(\Km)$ denotes the quotient space such that $\yv, \zv \in \Sc^\perp$ are equivalent if and only if $\yv - \zv \in N(\Km)$. The operator $\Km^\dagger$ is bounded iff $R(\Km)$ is closed. The following theorem restates these conditions in terms of $\Am$ and $\Sm$ for problem~\eqref{eq:recon_sym_eq_con}. 
\begin{theorem}{(Based on~\cite[Theorem~4.3]{Knyazev-CMM-07})}
A normal solution $\hat{\fv}_n=\hat{\xv}_n+\Sm\fv$ to \eqref{eq:recon_sym_eq_con} with $\hat{\xv}_n=\Sm^\perp\hat{\fv} \in \Sm^\perp R(\Am)$ exists and depends continuously on arbitrary $\hv \in R(\Am) + \Sc$ and $\fv\in\Hc$ if and only if 
\begin{equation}
\frac{1}{\rho}:=
\inf_{\xv\in\Sm^\perp R(\Am),~\xv\neq\zerov} 
\frac{\Braket{\xv,\Am\xv}}{\Braket{\xv,\xv}}>0
\label{eq:wellposedness-1}
\end{equation} 
Moreover, condition \eqref{eq:wellposedness-1} implies
\begin{equation}
\|\hat{\xv}_n\|^2 \leq \rho^2 \|\Sm^\perp(\hv - \Am\Sm\fv)\|^2,
\label{eq:wellposedness-2}
\end{equation} 
that also leads to an upper bound for  
$\|\hat{\fv}_n\|^2 = \|\Sm\fv\|^2 + \|\hat{\xv}_n\|^2.$
\label{thm:existence}
\end{theorem}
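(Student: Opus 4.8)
The plan is to reduce the entire statement to the spectral analysis of the single operator $\Km = \left(\Sm^\perp\Am\right)\big|_{\Sc^\perp}$ acting on the Hilbert space $\Sc^\perp$. First I would record that $\Km$ is bounded, self-adjoint, and non-negative on $\Sc^\perp$: for $\xv,\yv\in\Sc^\perp$ one has $\Sm^\perp\xv=\xv$, so $\Braket{\Km\xv,\yv}=\Braket{\Am\xv,\yv}=\Braket{\xv,\Am\yv}=\Braket{\xv,\Km\yv}$ and $\Braket{\Km\xv,\xv}=\Braket{\Am\xv,\xv}\geq 0$ since $\Am=\Am^\star\geq 0$. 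Writing equation \eqref{eq:recon_sym_eq} as $\Km\xv=\bv$ with $\bv=\Sm^\perp(\hv-\Am\Sm\fv)$, the sentence preceding the theorem already supplies the chain: the normal solution exists and depends continuously on the data $\iff$ $\Km^\dagger$ is bounded $\iff$ $R(\Km)$ is closed. Hence everything comes down to characterizing closedness of $R(\Km)$ by \eqref{eq:wellposedness-1}.

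For that characterization I would invoke the spectral theory of a non-negative self-adjoint operator. Since $\Km=\Km^\star\geq 0$ on $\Sc^\perp$, its range is closed if and only if its reduced minimum modulus $\gamma(\Km)$ is positive, i.e.\ the bottom of the spectrum of $\Km$ restricted to $\overline{R(\Km)}=N(\Km)^\perp$ is strictly positive. By the spectral theorem this bottom equals both $\inf\|\Km\xv\|/\|\xv\|$ and $\inf\Braket{\xv,\Km\xv}/\Braket{\xv,\xv}$ over nonzero $\xv\in N(\Km)^\perp$, so $R(\Km)$ is closed iff the quadratic form $\Braket{\xv,\Km\xv}/\Braket{\xv,\xv}$ is bounded away from zero on $N(\Km)^\perp$.

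The one genuinely delicate point—and the step I expect to be the main obstacle—is reconciling the index set $\Sm^\perp R(\Am)$ appearing in \eqref{eq:wellposedness-1} with the spectral subspace $N(\Km)^\perp$. Here I would prove the identity $\overline{\Sm^\perp R(\Am)}=N(\Km)^\perp$ inside $\Sc^\perp$: a vector $\yv\in\Sc^\perp$ is orthogonal to $\Sm^\perp R(\Am)=\Sm^\perp\Am\Hc$ iff $\Braket{\yv,\Sm^\perp\Am\zv}=\Braket{\Am\yv,\zv}=0$ for all $\zv\in\Hc$, i.e.\ iff $\Am\yv=\zerov$, i.e.\ iff $\yv\in N(\Am)\cap\Sc^\perp=N(\Km)$; thus $\Sm^\perp R(\Am)$ is dense in $N(\Km)^\perp$. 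Because the Rayleigh quotient $\xv\mapsto\Braket{\xv,\Am\xv}/\Braket{\xv,\xv}$ is continuous off the origin and coincides with $\Braket{\xv,\Km\xv}/\Braket{\xv,\xv}$ on $\Sc^\perp$, its infimum over the dense subset $\Sm^\perp R(\Am)$ equals its infimum over $N(\Km)^\perp$. Combining with the previous paragraph yields the desired equivalence: $R(\Km)$ is closed iff $1/\rho>0$ as in \eqref{eq:wellposedness-1}. This also shows that, under \eqref{eq:wellposedness-1}, $R(\Km)=\Sm^\perp R(\Am)$ is closed and the normal solution $\hat{\xv}_n$ lies in it, as claimed; moreover the hypothesis $\hv\in R(\Am)+\Sc$ guarantees $\bv=\Sm^\perp\Am(\wv-\Sm\fv)\in\Sm^\perp R(\Am)\subseteq R(\Km)$, so the equation is solvable.

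Finally, for the quantitative bounds I would note that the bottom-of-spectrum computation gives $\gamma(\Km)=1/\rho$, hence $\|\Km^\dagger\|=\rho$; applying this to $\hat{\xv}_n=\Km^\dagger\bv$ gives $\|\hat{\xv}_n\|\leq\rho\,\|\bv\|=\rho\,\|\Sm^\perp(\hv-\Am\Sm\fv)\|$, which squared is \eqref{eq:wellposedness-2}. The remaining identity is just the Pythagorean theorem: $\hat{\fv}_n=\Sm\fv+\hat{\xv}_n$ with $\Sm\fv\in\Sc$ orthogonal to $\hat{\xv}_n\in\Sc^\perp$, so $\|\hat{\fv}_n\|^2=\|\Sm\fv\|^2+\|\hat{\xv}_n\|^2$, yielding the stated upper bound.
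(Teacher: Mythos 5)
Your proof is correct. Note that the paper itself does not prove this theorem: it is imported from the cited reference, and the paper only supplies the reduction in the preceding paragraph (well-posedness of the normal solution $\Leftrightarrow$ $\Km^\dagger$ bounded $\Leftrightarrow$ $R(\Km)$ closed). Your argument follows exactly that reduction and then fills in the missing content, namely the spectral characterization of closedness of the range of the non-negative self-adjoint operator $\Km$ and the identification $\overline{\Sm^\perp R(\Am)} = N(\Km)^\perp$ inside $\Sc^\perp$ via the orthogonality computation $\Braket{\yv,\Sm^\perp\Am\zv}=\Braket{\Am\yv,\zv}$. You also correctly handle the one point that is easy to overlook: the admissible right-hand sides $\bv=\Sm^\perp(\hv-\Am\Sm\fv)$ range over $\Sm^\perp R(\Am)$, which a priori only contains $R(\Km)=\Sm^\perp\Am\,\Sc^\perp$ as a subset, so existence for all admissible data requires the observation that closedness of $R(\Km)$ forces $R(\Km)=\Sm^\perp R(\Am)$; this is exactly where the hypothesis $\hv\in R(\Am)+\Sc$ earns its keep. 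The only cosmetic flaw is that $\wv$ in the decomposition $\hv=\Am\wv+\sv$, $\sv\in\Sc$, is used before being introduced.
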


 Taking $\hv = \zerov$ and $\Am = \Tm^\perp$, we obtain system~\eqref{eq:recon_sym_eq_con_zero} and $N(\Am)=\Tc$.
 Condition \eqref{eq:wellposedness-1} with $\Am = \Tm^\perp$
 is equivalent to  
 \begin{equation}
{\nu} := \inf_{\xv \in \Sm^\perp \Tc^\perp} \frac{\|\Tm^\perp \xv\|}{\|\xv\|} > 0, \text{ where } \rho=\frac{1}{\nu^2}, 
\label{eq:rho-projector-norm}
\end{equation}
which becomes the key assumption. Let us describe \eqref{eq:rho-projector-norm} via concepts of the minimal gap $\gamma$ and angles $\Theta$ between subspaces.

\begin{theorem}{(Based on~\cite[Lemma~4.6]{Knyazev-CMM-07})}\label{thm:kappa}
Let $\nu$ be defined by~\eqref{eq:rho-projector-norm}. Then 
\[
\nu = \gamma(\Sc,\Tc^\perp) = \cos\left(\theta_{\max}\right), 
\]
where 
 \begin{equation}
 \gamma\left(\Sc,\Tc^\perp\right) := \inf_{\fv \in \Sc, \fv \notin \Tc^\perp} \frac{\text{dist}\left(\fv,\Tc^\perp\right)}{\text{dist}\left(\fv, \Sc \cap \Tc^\perp\right)},
 \label{eq:def_mingap}
\end{equation}
 is the minimal gap between closed subspaces $\Sc$ and $\Tc^\perp$, and
 \begin{equation}
\theta_{\max} = \sup \{ \Theta(\Sc,\Tc) \setminus \{\pi/2\}\},
\label{eq:theta_max}
\end{equation}
is the largest non-trivial angle between closed subspaces $\Sc$ and $\Tc$.
 \end{theorem}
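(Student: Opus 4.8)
The plan is to express both $\nu$ and $\gamma(\Sc,\Tc^\perp)$ as infima of one and the same Rayleigh quotient, restricted to appropriate subspaces, then to identify each infimum with the cosine of the largest nontrivial principal angle of a pair of subspaces, and finally to tie the two together through the invariance of the nontrivial angles under orthogonal complementation.

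First I would reduce the minimal gap \eqref{eq:def_mingap}. For $\fv\in\Sc$ one has $\text{dist}(\fv,\Tc^\perp)=\|\Tm\fv\|$, and, splitting $\Sc=(\Sc\cap\Tc^\perp)\oplus\Sc_0$ with $\Sc_0:=\Sc\ominus(\Sc\cap\Tc^\perp)$, also $\text{dist}(\fv,\Sc\cap\Tc^\perp)=\|\Pm_{\Sc_0}\fv\|$. Writing $\fv=\fv_0+\fv_1$ accordingly and using $\Tm\fv_1=\zerov$, the ratio collapses to $\|\Tm\fv_0\|/\|\fv_0\|$, while $\fv\notin\Tc^\perp$ becomes $\fv_0\neq\zerov$, so that
\[
\gamma(\Sc,\Tc^\perp)=\inf_{\fv\in\Sc_0,\ \fv\neq\zerov}\frac{\|\Tm\fv\|}{\|\fv\|}.
\]
Next I would put $\nu$ from \eqref{eq:rho-projector-norm} into the same shape. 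The quotient $\|\Tm^\perp\xv\|/\|\xv\|$ is continuous and scale invariant, hence its infimum over $\Sm^\perp\Tc^\perp$ equals its infimum over the closure $\overline{\Sm^\perp\Tc^\perp}$. A one-line orthogonality argument gives $\Sc^\perp\ominus\overline{\Sm^\perp\Tc^\perp}=\Sc^\perp\cap\Tc$, whence $\overline{\Sm^\perp\Tc^\perp}=\Sc^\perp\ominus(\Sc^\perp\cap\Tc)=:\Sc^\perp_0$ and
\[
\nu=\inf_{\xv\in\Sc^\perp_0,\ \xv\neq\zerov}\frac{\|\Tm^\perp\xv\|}{\|\xv\|}.
\]

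I would then identify these infima spectrally. For $\fv\in\Sc_0\subseteq\Sc$ one has $\|\Tm\fv\|^2=\langle\fv,\Sm\Tm\Sm\fv\rangle$, so $\gamma(\Sc,\Tc^\perp)^2$ is the bottom of the spectrum of $\Sm\Tm\Sm$ compressed to $\Sc_0$. The spectrum of $\Sm\Tm\Sm$ on $\Sc$ consists of the squared cosines of the angles $\Theta(\Sc,\Tc)$, and passing to $\Sc_0$ deletes exactly the kernel $\Sc\cap\Tc^\perp$, i.e.\ the angles equal to $\pi/2$; the bottom of what is left is $\cos^2\theta_{\max}$. This already gives $\gamma(\Sc,\Tc^\perp)=\cos\theta_{\max}$, the second asserted equality. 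The verbatim argument applied to $\Sm^\perp\Tm^\perp\Sm^\perp$ on $\Sc^\perp_0$, whose kernel in $\Sc^\perp$ is $\Sc^\perp\cap\Tc$, yields $\nu=\cos\theta_{\max}(\Sc^\perp,\Tc^\perp)$, the cosine of the largest nontrivial angle between $\Sc^\perp$ and $\Tc^\perp$.

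It remains to equate the two cosines, and here I would invoke the classical fact, provable via the Halmos two-subspace (CS) decomposition, that the principal angles lying in $(0,\pi/2)$ between $\Sc$ and $\Tc$ coincide with those between $\Sc^\perp$ and $\Tc^\perp$, the trivial angles $0$ and $\pi/2$ being carried solely by the four intersection subspaces and thus irrelevant to the suprema. Hence $\theta_{\max}(\Sc,\Tc)=\theta_{\max}(\Sc^\perp,\Tc^\perp)$ and therefore $\nu=\gamma(\Sc,\Tc^\perp)=\cos\theta_{\max}$. The hard part will be the analytic bookkeeping rather than the geometry: I must argue throughout at the level of the spectrum, not eigenvalues, since in infinite dimensions the infima need not be attained; I must justify the passage from the non-closed image $\Sm^\perp\Tc^\perp$ to the closed subspace $\Sc^\perp_0$ (via continuity and $0$-homogeneity of the quotient); and I must handle the degenerate case where the supremum of the nontrivial angles is an unattained limit equal to $\pi/2$. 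The complementation symmetry of the angles is the conceptual linchpin that connects $\nu$ with $\gamma(\Sc,\Tc^\perp)$.
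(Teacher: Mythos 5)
Your proposal is correct, but it takes a genuinely more self-contained route than the paper. The paper's proof of Theorem~\ref{thm:kappa} is essentially a chain of citations: it quotes \cite[Lemma~4.6]{Knyazev-CMM-07} for the identity $\nu=\gamma(\Sc,\Tc^\perp)$, then \cite[Theorem~2.15]{Knyazev-JFA-10} for $\gamma(\Sc,\Tc^\perp)=\sin\left(\inf\{\Theta(\Sc,\Tc^\perp)\setminus\{0\}\}\right)$, and finally \cite[Theorem~2.7]{Knyazev-JFA-10} to convert $\inf\{\Theta(\Sc,\Tc^\perp)\setminus\{0\}\}=\pi/2-\theta_{\max}$. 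You instead prove the first identity rather than citing it: you reduce $\gamma(\Sc,\Tc^\perp)$ to the Rayleigh-quotient infimum of $\Sm\Tm\Sm$ on $\Sc\ominus(\Sc\cap\Tc^\perp)$ and $\nu$ to that of $\Sm^\perp\Tm^\perp\Sm^\perp$ on $\Sc^\perp\ominus(\Sc^\perp\cap\Tc)$ (the closure and orthogonality arguments you give for $\overline{\Sm^\perp\Tc^\perp}=\Sc^\perp\ominus(\Sc^\perp\cap\Tc)$ are sound), identify each with the squared cosine of the largest nontrivial angle of the corresponding pair, and close the loop with the complement symmetry $\Theta(\Fc,\Gc)\setminus\{0\}=\Theta(\Fc^\perp,\Gc^\perp)\setminus\{0\}$, which is exactly identity \eqref{eq:angle-relations} in the paper's appendix, resting on the same Halmos two-subspace decomposition that underlies the cited results. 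A pleasant byproduct of your route is that the second asserted equality, $\gamma(\Sc,\Tc^\perp)=\cos\theta_{\max}$, falls out directly from the spectrum of $\left(\Sm\Tm\Sm\right)\big|_{\Sc}$ without ever introducing the Friedrichs angle of the pair $(\Sc,\Tc^\perp)$; the complement relation is needed only to tie $\nu$, which naturally lives on the pair $(\Sc^\perp,\Tc^\perp)$, back to $\theta_{\max}$. When writing this up, two caveats you already flag should be made explicit: (i) removing the kernel $\Sc\cap\Tc^\perp$ deletes the eigenvalue $0$ but not a possible accumulation of the spectrum at $0$, so the ``bottom of what is left'' must be phrased as $\inf\left(\Sigma\setminus\{0\}\right)$ rather than a smallest nonzero eigenvalue --- in the degenerate case both sides consistently give $\cos\theta_{\max}=0$ because an isolated spectral point of a self-adjoint operator is necessarily an eigenvalue; and (ii) Definition~\ref{def:angles} distinguishes $\hat{\Theta}$ from $\Theta$, but the two can differ only at $\pi/2$, which is excluded from the supremum defining $\theta_{\max}$, so the distinction is harmless.
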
 
 \begin{proof}
According to \cite[Lemma~4.6]{Knyazev-CMM-07}, 
\begin{equation}
\inf_{\xv \in \Sm^\perp \Tc^\perp} \frac{\|\Tm^\perp \xv\|}{\|\xv\|} = 
\gamma(\Sc,\Tc^\perp),
\label{eq:projector-norm-min-gap}
\end{equation}
 where $\gamma(\Sc,\Tc^\perp)$ is the \emph{minimal gap}~\cite[Sec.~IV-4]{kato}, equal to the sine the \emph{Friedrichs angle} between subspaces $\Sc$ and $\Tc^\perp$,
\begin{equation}
\gamma\left(\Sc,\Tc^\perp\right) = \sin\left(\inf \{\Theta\left(\Sc,\Tc^\perp\right) \setminus \{\zerov\}\}\right);
\label{eq:rho-angle}
\end{equation}
see, e.g.,\ \cite[Theorem~2.15]{Knyazev-JFA-10}. 
Relationships between $\Theta\left(\Fc,\Gc\right)$, $\Theta\left(\Fc,\Gc^\perp\right)$, and $\Theta\left(\Fc^\perp,\Gc^\perp\right)$ are given in~\cite[Theorem~2.7]{Knyazev-JFA-10}). In particular, 
\begin{equation}
\inf \{\Theta(\Sc,\Tc^\perp) \setminus \{0\}\} = \pi/2 - \theta_{\max}.
\label{eq:cos_theta_max}
\end{equation}
 \end{proof}

Let us also note that by \cite[Lemma~4.6]{Knyazev-CMM-07} we have
\begin{equation} 
\gamma(\Sc,\Tc^\perp) =  \gamma(\Tc^\perp,\Sc) = \gamma(\Sc^\perp,\Tc) = \gamma(\Tc,\Sc^\perp).
\label{eq:projector-norm-min-gap1}
\end{equation}
The proof of Theorem~\ref{thm:kappa} is included in the supplementary material. The assumption $\gamma(\Tc,\Sc^\perp)>0$ is equivalent to assuming that the sum  $\Sc^\perp+\Tc$ is closed. The latter is automatically satisfied if $\Sc^\perp+\Tc=\Hc$ as traditionally assumed in reconstruction literature; see, e.g.,\ \cite{Unser-TSP-94, Eldar-JFA-03}.

If $\dim\Hc<\infty$, which is the case, e.g., in graph-based signal processing, every subspace is automatically closed, i.e., 
the assumptions in our existence theorems automatically hold, in contrast to, e.g., \cite[Theorem 4.1]{7480396} requiring 
that $\Sc^\perp\cap\Tc=\{\zerov\}$. 
There is no contradiction, however, since \cite[Theorem 4.1]{7480396} postulates an existence of an exact reconstruction, i.e.,\ $\hat{\fv}=\fv$,
and correctly argues that a signal $\fv\in\Sc^\perp\cap\Tc$ cannot be exactly reconstructed, since $\Sm\fv=\zerov,$ unless $\fv=\zerov.$ 
While we merely claim the existence of a solution to equation \eqref{eq:recon_sym_eq_con_zero} and deal with issues stemming from  $\Sc^\perp\cap\Tc\neq\{\zerov\}$ separately in Sec.~\ref{s:uniquness}. 

Theorems \ref{thm:existence} and \ref{thm:kappa} immediately imply 
\begin{theorem}\label{thm:rec}
If $\cos \theta_{\max} > 0$, then there exists a solution of the reconstruction problem \eqref{eq:recon_sym_eq_con_zero} for any signal $\fv$;
the normal solution $\hat{\fv}_n$ of  \eqref{eq:recon_sym_eq_con_zero} is unique and bounded by 
\begin{equation}
\|\hat{\fv}_n\|^2 \leq \|\Sm\fv\|^2 + \|\Sm^\perp\Tm^\perp\Sm\fv\|^2/\cos^4\theta_{\max}.
\label{eq:wellposedness-fnog}
\end{equation}
\end{theorem}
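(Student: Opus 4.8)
The plan is to obtain Theorem~\ref{thm:rec} as a direct specialization of Theorems~\ref{thm:existence} and~\ref{thm:kappa}, exactly as the surrounding text anticipates with the phrase ``immediately imply.'' First I would instantiate the general system~\eqref{eq:recon_sym_eq_con} at $\hv=\zerov$ and $\Am=\Tm^\perp$, checking that $\Tm^\perp$ satisfies the standing hypothesis $\Am=\Am^\star\geq 0$ (an orthoprojector is bounded, self-adjoint, and nonnegative) and that $R(\Am)=\Tc^\perp$, so that~\eqref{eq:recon_sym_eq_con} collapses precisely to the reconstruction system~\eqref{eq:recon_sym_eq_con_zero}. Since Theorem~\ref{thm:existence} admits arbitrary $\hv\in R(\Am)+\Sc$ and arbitrary $\fv\in\Hc$, and $\zerov\in R(\Am)+\Sc$ trivially, the choice $\hv=\zerov$ is legitimate and covers every signal $\fv$.

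Next I would translate the well-posedness criterion. With $\Am=\Tm^\perp$, the Rayleigh quotient in~\eqref{eq:wellposedness-1} reads $\Braket{\xv,\Tm^\perp\xv}/\Braket{\xv,\xv}=\|\Tm^\perp\xv\|^2/\|\xv\|^2$ over $\xv\in\Sm^\perp\Tc^\perp$, so condition~\eqref{eq:wellposedness-1} reduces to the positivity of $\nu$ in~\eqref{eq:rho-projector-norm}, with $\rho=1/\nu^2$. Theorem~\ref{thm:kappa} then identifies $\nu=\cos\theta_{\max}$. Hence the hypothesis $\cos\theta_{\max}>0$ is literally condition~\eqref{eq:wellposedness-1}, and Theorem~\ref{thm:existence} yields existence of the normal solution $\hat{\fv}_n=\hat{\xv}_n+\Sm\fv$ for every $\fv$, together with its continuous dependence on the data, which is the stability assertion. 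Uniqueness of $\hat{\fv}_n$ is automatic: as the minimal-norm representative of the solution plane $\hat{\fv}+\{\Sc^\perp\cap\Tc\}$ supplied by Theorem~\ref{thm:uniqueness}, it is unique even when $\Sc^\perp\cap\Tc\neq\{\zerov\}$, so no extra assumption on the intersection is required here.

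Finally I would derive the bound~\eqref{eq:wellposedness-fnog} by substitution. With $\hv=\zerov$ and $\Am=\Tm^\perp$ we have $\Sm^\perp(\hv-\Am\Sm\fv)=-\Sm^\perp\Tm^\perp\Sm\fv$, while $\rho=1/\nu^2$ gives $\rho^2=1/\nu^4=1/\cos^4\theta_{\max}$; thus~\eqref{eq:wellposedness-2} becomes $\|\hat{\xv}_n\|^2\leq \|\Sm^\perp\Tm^\perp\Sm\fv\|^2/\cos^4\theta_{\max}$. Because $\hat{\xv}_n\in\Sc^\perp$ and $\Sm\fv\in\Sc$ are orthogonal, the Pythagorean identity $\|\hat{\fv}_n\|^2=\|\Sm\fv\|^2+\|\hat{\xv}_n\|^2$ holds, and adding $\|\Sm\fv\|^2$ to both sides of the previous inequality produces exactly~\eqref{eq:wellposedness-fnog}.

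There is essentially no deep obstacle; the statement is a corollary rather than an independent result. The only points demanding care are bookkeeping: tracking the exponent through $\rho=1/\nu^2$ so that $\rho^2$ becomes $\cos^{-4}\theta_{\max}$ rather than $\cos^{-2}\theta_{\max}$, confirming that $\hv=\zerov$ lies in the admissible data class $R(\Am)+\Sc$, and keeping clear that ``uniqueness'' here refers to the normal (minimum-norm) solution, not to uniqueness of the entire solution set, which would instead require the stronger $\Sc^\perp\cap\Tc=\{\zerov\}$.
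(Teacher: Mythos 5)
Your proposal is correct and follows exactly the route the paper intends: the paper gives no separate proof, stating only that Theorems~\ref{thm:existence} and~\ref{thm:kappa} ``immediately imply'' the result, and your write-up is precisely the expansion of that one-liner (specialize to $\hv=\zerov$, $\Am=\Tm^\perp$, identify $\rho=1/\nu^2=1/\cos^2\theta_{\max}$ so that $\rho^2$ gives the $\cos^{-4}\theta_{\max}$ factor, and add $\|\Sm\fv\|^2$ via the Pythagorean identity). Your closing remark that ``unique'' refers to the minimum-norm representative rather than to the whole solution set is also the correct reading, consistent with Theorem~\ref{thm:uniqueness}.
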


Let us note that Theorem \ref{thm:rec} applies Theorem \ref{thm:existence} with $g=\zerov $ and leaves open a question whether condition \eqref{eq:wellposedness-1} or
condition \eqref{eq:rho-projector-norm} is still necessary in this case. 
In the rest of the section, we go beyond the results presented in \cite{Knyazev-CMM-07} and address this question, 
using a powerful theory for a pair of two orthogonal projectors; see, e.g.,\
 \cite{Knyazev-JFA-10}.
\begin{theorem}\label{thm:Lidentity}
We denote by $\Hc_0$ the subspace of $\Hc$ that is orthogonal to all four subspaces 
$\Sc\cap\Tc$, $\Sc^\perp\cap\Tc$, $\Sc\cap\Tc^\perp$, and $\Sc^\perp\cap\Tc^\perp$,
as introduced in \cite{Halmos-1969}. Let $\Pm_0$ be the orthogonal projector onto the subspace $\Hc_0$.

The assumption $\cos \theta_{\max} > 0$ is necessary and sufficient for existence of a solution of the reconstruction problem \eqref{eq:recon_sym_eq_con_zero} for any signal $\fv$. 
A normal solution $\hat{\xv}_n$ to \eqref{eq:recon_sym_eq_bl}, giving the normal sample consistent reconstruction 
$\hat{\fv}_n=\hat{\xv}_n+\Sm\fv$ and the normal strictly guided reconstruction 
\[\hat{\tv}_n=\Tm\hat{\fv}_n= \hat{\fv}_n-\Pm_{\Sc\cap\Tc^\perp}\fv,\]  
 exists and depends continuously on arbitrary $\fv\in\Hc$ if and only if $\cos \theta_{\max} > 0$. 
If $\cos \theta_{\max} > 0$, bound \eqref{eq:wellposedness-fnog} holds and
\begin{equation}\label{eq:cosbound}
\|\hat{\xv}_n\| \leq \|\Tm^\perp \Sm \Pm_0\fv\|/\cos\theta_{\max},
\end{equation}  
as well as
\begin{equation}\label{eq:tanbound}
\|\hat{\xv}_n\| \leq \|\Sm \Pm_0\fv\|\tan\theta_{\max},
\end{equation}  
in  
$\|\hat{\fv}_n\|^2 = \|\hat{\tv}_n\|^2 + \|\Pm_{\Sc\cap\Tc^\perp}\fv\|^2=\|\Sm\fv\|^2 + \|\hat{\xv}_n\|^2$.
\end{theorem}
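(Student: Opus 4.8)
The plan is to reduce everything to the Halmos decomposition of $\Hc$ into the four trivial intersections $\Sc\cap\Tc$, $\Sc^\perp\cap\Tc$, $\Sc\cap\Tc^\perp$, $\Sc^\perp\cap\Tc^\perp$ together with the generic part $\Hc_0$, on which the pair $\Sm,\Tm$ admits the standard two-projector block representation; see \cite{Halmos-1969,Knyazev-JFA-10}. All five subspaces reduce both $\Sm$ and $\Tm$, hence reduce every operator appearing in \eqref{eq:recon_sym_eq_con_zero} and \eqref{eq:recon_sym_eq_bl}, so I would first split the reconstruction equation along this orthogonal decomposition. On $\Sc^\perp\cap\Tc^\perp$ one has $\Sm^\perp=\Tm^\perp=\mathbf I$ while $\Sm\fv$ has no component, so the equation forces the solution to vanish there; normality, i.e. orthogonality to $N(\Km)=\Sc^\perp\cap\Tc$, kills the $\Sc^\perp\cap\Tc$ component; and $\hat{\xv}_n\in\Sc^\perp$ has no component in $\Sc$, hence none in $\Sc\cap\Tc$ or $\Sc\cap\Tc^\perp$. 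This shows that $\hat{\xv}_n$ lives entirely in $\Hc_0$, which is exactly why $\Pm_0$ appears in the bounds \eqref{eq:cosbound}--\eqref{eq:tanbound}.

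On $\Hc_0$ I would use the representation in which, writing $C=\cos\Theta$ and $S=\sin\Theta$ for the strictly $(0,\pi/2)$-valued angle operator $\Theta$ between $\Sc$ and $\Tc$,
\[
\Sm=\begin{pmatrix}\mathbf I&0\\0&0\end{pmatrix},\qquad \Tm^\perp=\begin{pmatrix}S^2&-CS\\-SC&C^2\end{pmatrix}.
\]
A direct block multiplication shows that $\Km=\left(\Sm^\perp\Tm^\perp\right)\big|_{\Sc^\perp}$ acts on the $\Sc^\perp$-block as multiplication by $C^2$, while the right-hand side $-\Sm^\perp\Tm^\perp\Sm\fv$ of \eqref{eq:recon_sym_eq_bl} becomes $SC\,a$, where $a$ is the first block of $\Pm_0\fv$ so that $\|a\|=\|\Sm\Pm_0\fv\|$. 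When $\cos\theta_{\max}>0$, i.e. $C$ is bounded below (by Theorem~\ref{thm:kappa}, $\cos\theta_{\max}=\inf C$ on $\Hc_0$), the operator $C^2$ is boundedly invertible, so $\Km$ is onto the $\Sc^\perp$-block, giving a solution for every $\fv$, and its pseudo-inverse is bounded, giving continuous dependence; this recovers the sufficiency already contained in Theorem~\ref{thm:rec} together with \eqref{eq:wellposedness-fnog}. For necessity I would assume $\cos\theta_{\max}=0$ and use the spectral measure of $\Theta$ accumulating at $\pi/2$ to choose $a$ with $\|C^{-1}Sa\|=\infty$; the corresponding $\fv$ then admits no finite-norm solution of \eqref{eq:recon_sym_eq_bl}, while a companion Weyl-type sequence shows that $R(\Km)$ is not closed, destroying continuous dependence.

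The explicit solution on $\Hc_0$ is $\hat{\xv}_n=C^{-2}SC\,a=\tan(\Theta)\,a$. Since $\tan$ is increasing, $\|\hat{\xv}_n\|\le\tan\theta_{\max}\,\|a\|=\tan\theta_{\max}\,\|\Sm\Pm_0\fv\|$, which is \eqref{eq:tanbound}. Using $C^2+S^2=\mathbf I$ and the self-adjointness and commutativity of $C,S$ one computes $\|\Tm^\perp\Sm\Pm_0\fv\|=\|S a\|$, and since $C^{-1}$ commutes with $S$ with $\|C^{-1}\|=1/\cos\theta_{\max}$ on $\Hc_0$, one gets $\|\hat{\xv}_n\|=\|C^{-1}(S a)\|\le\|S a\|/\cos\theta_{\max}$, which is \eqref{eq:cosbound}. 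Finally the norm identity is two Pythagorean splittings: the orthogonal sum $\hat{\fv}_n=\Sm\fv+\hat{\xv}_n$ with $\Sm\fv\in\Sc$ and $\hat{\xv}_n\in\Sc^\perp$ gives $\|\hat{\fv}_n\|^2=\|\Sm\fv\|^2+\|\hat{\xv}_n\|^2$, and the orthogonal decomposition \eqref{eq:tf}, $\hat{\fv}_n=\hat{\tv}_n+\Pm_{\Sc\cap\Tc^\perp}\fv$ with $\hat{\tv}_n\in\Tc$ and $\Pm_{\Sc\cap\Tc^\perp}\fv\in\Sc\cap\Tc^\perp\subseteq\Tc^\perp$, gives $\|\hat{\fv}_n\|^2=\|\hat{\tv}_n\|^2+\|\Pm_{\Sc\cap\Tc^\perp}\fv\|^2$.

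The main obstacle I anticipate is the necessity direction together with the rigorous infinite-dimensional bookkeeping: one must invoke the operator-valued (spectral) form of the two-projector theory so that $C$, $S$, and $\tan\Theta$ are genuine Borel functions of the self-adjoint angle operator $\Theta$, and then, when $\cos\theta_{\max}=0$, build an explicit approximate-null sequence from the spectral projections of $\Theta$ accumulating at $\pi/2$ to certify that $R(\Km)$ is not closed and that solvability fails for some $\fv$. By contrast, the sufficiency direction and all the bounds are immediate once the block form on $\Hc_0$ is in place.
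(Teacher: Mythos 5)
Your proposal is correct, and it shares the paper's backbone: both proofs rest on the Halmos five-fold decomposition, the observation that the right-hand side of \eqref{eq:recon_sym_eq_bl} and the normal solution live in $\Hc_0\cap\Sc^\perp$, the identification of the restricted operator $\Km_\star$ whose spectrum is bounded below by $\cos^2\theta_{\max}$, and the two Pythagorean splittings for the norm identities. Where you genuinely diverge is in the computational engine on $\Hc_0$. The paper keeps everything at the level of operator products and pseudo-inverses: it proves \eqref{eq:cosbound} by showing $\|\Km_0^\dagger\Sm_0^\perp\Tm_0^\perp\|^2=\|\Km_\star^{-1}\|$, and obtains \eqref{eq:tanbound} by a chain of pseudo-inverse identities reducing $\Km_0^\dagger\Sm_0^\perp\Tm_0^\perp$ to $(\Tm_0^\perp\Sm_0^\perp)^\dagger$ and then citing an external theorem (Zhu--Knyazev) that the singular values of $(\Tm_0^\perp\Sm_0^\perp)^\dagger\Sm_0$ are tangents of the principal angles. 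You instead instantiate the explicit two-projector block (CS) form on $\Hc_0$, which turns $\Km_\star$ into $C^2$ and the normal solution into $\tan(\Theta)\,a$ with $a=\Sm\Pm_0\fv$; both bounds then drop out of $\hat{\xv}_n=C^{-1}(Sa)$ and the identity $\|\Tm^\perp\Sm\Pm_0\fv\|=\|Sa\|$ without any pseudo-inverse algebra or external singular-value theorem. Your route is more self-contained and arguably safer (the paper's pseudo-inverse manipulations such as $(AB)^\dagger=B^\dagger A^\dagger$ require justification that is only implicit there), at the price of having to set up the operator-valued CS representation rigorously in infinite dimensions. Your necessity argument is also sharper than the paper's: you exhibit a concrete $a$ outside the domain of the unbounded operator $\tan\Theta$ via spectral projections accumulating at $\pi/2$, whereas the paper argues somewhat loosely through "closed and unbounded implies range not closed." The one point to make explicit when writing this up is that $\sup$ of the spectrum of the angle operator $\Theta$ on $\Hc_0$ equals $\theta_{\max}$ of \eqref{eq:theta_max} (so that $\|C^{-1}\|=1/\cos\theta_{\max}$ and $\|\tan\Theta\|=\tan\theta_{\max}$), which is the content of Theorem~\ref{thm:kappa} and should be cited rather than assumed.
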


\begin{proof}
Recall that $\Hc_0$ is a subspace of $\Hc$ that is orthogonal to all four subspaces 
$\Sc\cap\Tc$, $\Sc^\perp\cap\Tc$, $\Sc\cap\Tc^\perp$, and $\Sc^\perp\cap\Tc^\perp$.

On the one hand, the right-hand side of equation \eqref{eq:recon_sym_eq_bl} is in $\Hc_0\cap\Sc^\perp$, 
i.e.\ $\Sm^\perp\Tm^\perp\Sm\fv\in\Hc_0\cap\Sc^\perp$, and the set of all possible right-hand sides 
$\Sm^\perp\Tm^\perp \Sc$ in equation \eqref{eq:recon_sym_eq_bl}
is a proper, in general, subspace of  $\Hc_0\cap\Sc^\perp$.

On the other hand, each of the five spaces, including $\Hc_0$, is invariant under 
both orthogonal projectors $\Sm$ and $\Tm$, and hence their complements $\Sm^\perp$ and $\Tm^\perp$.
Let us denote by
$\Sm_0$, $\Tm_0$, $\Sm_0^\perp,$ and $\Tm_0^\perp$ the corresponding restrictions on $\Hc_0$. 
The product $ \Sm^\perp\Tm^\perp$ is also $\Hc_0$-invariant. 
Thus, the closed subspace $\Hc_0\cap\Sc^\perp$ is invariant under the operator $\Km=\left(\Sm^\perp\Tm^\perp\right) \big| _{\Sc^\perp}$. Denoting the restriction of $\Km$ to $\Hc_0\cap\Sc^\perp$ by $\Km_\star,$ 
we observe that the operator $\Km$ is a sum of the operator $\Km_\star$ and an orthogonal projector onto $\Sc^\perp\cap\Tc^\perp$.
Both operators $\Km$ and $\Km_\star$ are bounded and self-adjoint, with the same spectrum, included in the interval $[0,1]$, except that $\Km$ has an extra eigenvalue $1$, if $\Sc^\perp\cap\Tc^\perp\neq\{\zerov \},$
and an extra eigenvalue $0$, if $\Sc^\perp\cap\Tc\neq\{\zerov \},$
The smallest point of the spectrum of $\Km_\star$ is $\nu^2=\cos^2 \theta_{\max}$ defined by \eqref{eq:rho-projector-norm} and characterized in Theorem~\ref{thm:kappa}, thus,
\begin{equation}
\left\|\Km^{\dagger}\right\|=\left\|\Km_\star^{-1}\right\|=1/\nu^2,
\label{eq:1overkappa}
\end{equation}
where the sign ${}^\dagger$ means the Moore–Penrose pseudoinverse.

Therefore, we can substitute $\Km_\star$  for $\Km$  in equation \eqref{eq:recon_sym_eq_bl}, where 
the normal solution of \eqref{eq:recon_sym_eq_bl} satisfies $\hat{\xv}_n \in \Hc_0\cap\Sc^\perp$,
if it exists. Assuming $\nu=\cos \theta_{\max}>0$, we obtain the bound 
$\left\|\hat{\xv}_n\right\|\leq\rho\|\Sm^\perp\Tm^\perp\Sm\fv\|$ with $\rho = 1/\nu^2$, which is sharp, due to \eqref{eq:1overkappa},
and is equivalent to \eqref{eq:wellposedness-fnog} by the Pythagorean theorem. 

We us now manipulate the expression of 
\begin{equation*}
\begin{array}{ll}
\hat{\xv}_n &= \Km_\star^{-1}\left(\Sm^\perp\Tm^\perp\Sm\fv\right)\\
 &= \Km_0^\dagger\left(\Sm_0^\perp\Tm_0^\perp\Sm_0\Pm_0\fv\right)\\
 &= \Km_0^\dagger\Sm_0^\perp\Tm_0^\perp\Sm_0\Pm_0\fv,
\end{array}
\end{equation*}
where the newly introduced 
operator $\Km_0:\Hc_0\to\Hc_0$ is defined as a bounded extension by zero of $\Km_\star$ from 
$\Hc_0\cap\Sc^\perp$ to $\Hc_0$, i.e. $\Km_0u=\Km_\star u,\, \forall u \in \Hc_0\cap\Sc^\perp$
and $\Km_0u=\zerov,\, \forall u \in \Hc_0$ orthogonal to the subspace $\Hc_0\cap\Sc^\perp$.

We then have
\begin{equation*}
\begin{array}{ll}
	\|\hat{\xv}_n\| &= \|\Km_0^\dagger\Sm_0^\perp\Tm_0^\perp\Sm_0\Pm_0\fv\| \\
	&= \|\Km_0^\dagger\Sm_0^\perp\Tm_0^\perp\Tm_0^\perp\Sm_0\Pm_0\fv\| \\
			&\leq \|\Km_0^\dagger\Sm_0^\perp\Tm_0^\perp\|\|\Tm_0^\perp\Sm_0\Pm_0\fv\| \\
			&= {\|\Tm^\perp \Sm \Pm_0 \fv\|}/{\cos\theta_{\max}},
\end{array}
\end{equation*}
since $\|\Km_0^\dagger\Sm_0^\perp\Tm_0^\perp\|^2$ can be written as
\[
\begin{array}{ll}
\|\Km_0^\dagger\Sm_0^\perp\Tm_0^\perp\|^2 &=
\|\Km_0^\dagger\Sm_0^\perp\Tm_0^\perp \Tm_0^\perp\Sm_0^\perp\Km_0^\dagger\|  \\
		& = \|\left.\Km_0\right|_{\Hc_0 \cap \Sc^\perp}^\dagger \left.(\Sm_0^\perp\Tm_0^\perp)\right|_{\Hc_0 \cap \Sc^\perp} \left.\Km_0\right|_{\Hc_0 \cap \Sc^\perp}^\dagger\| \\
		& = \|\Km_\star^{-1} \|. 
\end{array}
\]
The second equality above arises from writing the orthogonal decomposition $\Hc_0 = (\Hc_0\cap\Sc) \oplus (\Hc_0\cap\Sc^\perp)$ and noting that $\Sm_0^\perp\Tm_0^\perp\Sm_0^\perp$ vanishes on $\Hc_0\cap\Sc$. 

Alternatively, we may split the product in step two above as follows
\begin{equation*}
\begin{array}{ll}
	\|\hat{\xv}_n\| &=\|\Km_0^\dagger\Sm_0^\perp\Tm_0^\perp\Sm_0\Pm_0\fv\| \\
	&= \|\Km_0^\dagger\Sm_0^\perp\Tm_0^\perp\Sm_0\Sm_0\Pm_0\fv\| \\
	&\leq \|\Km_0^\dagger\Sm_0^\perp\Tm_0^\perp\Sm_0\|\|\Sm_0\Pm_0\fv\| \\
	& =  \|\Sm \Pm_0\fv\|\tan\theta_{\max}.
\end{array}
\end{equation*}
The last equality follows from writing 
\[
\begin{array}{ll}
\Km_0^\dagger &= (\Sm_0^\perp\Tm_0^\perp\Sm_0^\perp)^\dagger \\
			&= (\Sm_0^\perp\Tm_0^\perp\Tm_0^\perp\Sm_0^\perp)^\dagger,
\end{array}
\]
hence 
\[
\begin{array}{ll}
\Km_0^\dagger\Sm_0^\perp\Tm_0^\perp &= (\Tm_0^\perp\Sm_0^\perp)^\dagger(\Sm_0^\perp\Tm_0^\perp)^\dagger\Sm_0^\perp\Tm_0^\perp\\
			&= (\Tm_0^\perp\Sm_0^\perp)^\dagger\Tm_0^\perp\\
			&= (\Sm_0^\perp\Tm_0^\perp\Tm_0^\perp\Sm_0^\perp)^\dagger\Sm_0^\perp\Tm_0^\perp\Tm_0^\perp\\
			&= (\Sm_0^\perp\Tm_0^\perp\Tm_0^\perp\Sm_0^\perp)^\dagger\Sm_0^\perp\Tm_0^\perp\\
			&= (\Tm_0^\perp\Sm_0^\perp)^\dagger.
\end{array}
\] 
Finally, the $\tan\theta_{\max}$ follows from \cite[Theorem 4.1]{Zhu2013nov} which shows that the positive singular values of the operator $(\Tm_0^\perp\Sm_0^\perp)^\dagger\Sm_0$ are equal to the tangent of the angles between the subspaces $\Tc_0$ and $\Sc_0$. 

If $\nu=\cos \theta_{\max}=0$, it remains to show that the solution of the reconstruction problem \eqref{eq:recon_sym_eq_con_zero} may fail to exist for some signal $\fv$,  i.e. the equation 
$\Km_\star\xv=-\Sm^\perp\Tm^\perp\Sm\fv$ may have no solution. Since the operator $\Km_\star$ is bounded, then it is closed and its inverse $\Km_\star^{-1}$ is closed. If $\nu = 0$, then $\Km_0^\dagger\Sm_0^\perp\Tm_0^\perp\Sm_0$ is closed and unbounded. Hence, basic results in functional analysis state that if an operator is closed and unbounded, then its range is not closed. Thus, it is a proper subset of $\Hc_0$, and consequently a solution fails to exist for some $\Pm_0\fv$.

We complete the proof by noting that the theorem claims for the normal (with the smallest norm) strictly guided reconstruction $\hat{\tv}_n$ follow from 
\eqref{eq:mixed} and \eqref{eq:tf}.
\end{proof}

We finally underline that none of the bounds \eqref{eq:wellposedness-fnog}, \eqref{eq:cosbound}, and \eqref{eq:tanbound} can be derived from the other one, i.e. not one of them is in general sharper than the other. 


\section{Reconstruction Error Bounds}
If the original signal satisfies $\fv \in \Tc$ and $\Sc^\perp\cap\Tc = \{\zerov\}$, then the proposed reconstruction \eqref{eq:recon_sym_eq_con_zero} perfectly recovers it. Suppose now that we obtain a reconstruction $\hat{\fv}$ of some $\fv \notin \Tc$ by solving~\eqref{eq:recon_sym_eq_con_zero}.  
An important question in this context is to bound the error $\hat{\fv} - \fv$. 

If $\Sc^\perp\cap\Tc \neq \{\zerov\}$ then the solution to reconstruction problem \eqref{eq:recon_sym_eq_con_zero} is evidently not unique. In this case, it is still possible to bound the reconstruction error, but in the factor space \hbox{$\Hc/\left(\Sc^\perp\cap\Tc\right)$}. Let $\Mm$ be an orthogonal projector onto $\left(\Sc^\perp\cap\Tc\right)^\perp = \overline{\Sc+\Tc^\perp}$, such that $\Mm = \Pm_{\Hc} - \Pm_{\Sc^\perp\cap\Tc}$. Then the norm of the error in the factor space equals the norm of a projection of the error on the subspace $\overline{\Sc+\Tc^\perp}$, representing the factor space \hbox{$\Hc/\left(\Sc^\perp\cap\Tc\right)$}. In other words, we need to bound above the quantity $\left\|M\left(\hat{\fv} - \fv\right)\right\|$, 
removing from the consideration the $\Pm_{\Sc^\perp\cap\Tc}\fv$ part of the original signal $\fv$ and ignoring 
the non-unique part $\Pm_{\Sc^\perp\cap\Tc}\hat{\fv}$ of the reconstructed signal $\hat{\fv}$.  
If the uniqueness condition holds, we have $\left(\Sc^\perp\cap\Tc\right)^\perp = \Hc$ and $\Mm\left(\hat{\fv} - \fv\right) = \hat{\fv} - \fv$. 

The unique normal solution $\hat{\fv}_n$ of problem \eqref{eq:recon_sym_eq_con_zero} simply drops the $\Pm_{\Sc^\perp\cap\Tc}\fv$ part of the original signal $\fv$, Thus, the term $\left\|\Pm_{\Sc^\perp\cap\Tc}\fv\right\|$ appears in the upper bound for $\|\hat{\fv}_n-\fv\|$, but not for $\left\|M\left(\hat{\fv} - \fv\right)\right\|$. 

The $\Pm_{\Sc^\perp\cap\Tc^\perp}\fv$ part of the original signal $\fv$ is visible neither in the sample $\Sm\fv$, nor to the guiding orthoprojector $\Tm$, thus the term $\|\Pm_{\Sc^\perp\cap\Tc^\perp}\fv\|$ is expected in any error bound. 

The following theorem gives reconstruction error bounds.
\begin{theorem}\label{thm:errbnds}
Let $\cos \theta_{\max} > 0$.
In the notation of Theorem \ref{thm:Lidentity}, let us consider the normal solution $\hat{\xv}_n$ to \eqref{eq:recon_sym_eq_bl}, giving the normal reconstruction $\hat{\fv}_n=\hat{\xv}_n+\Sm\fv$ as well as any 
reconstruction $\hat{\fv}$, obtained by solving~\eqref{eq:recon_sym_eq_con_zero}. Let $\Mm$ be the orthoprojector onto $\overline{\Sc+\Tc^\perp}$ and $\Pm_0$ be defined as in Theorem \ref{thm:Lidentity}. Then, 
\[
 \left\|M\left(\hat{\fv} - \fv\right)\right\|^2 =  \left\|\Pm_{\Sc^\perp\cap\Tc^\perp}\fv\right\|^2 +\left\|\hat{\xv}_n-\Sm^\perp\Pm_0\fv\right\|^2
\]
and 
\[
 \left\|\hat{\fv}_n - \fv\right\|^2 =  \left\|\Pm_{\Sc^\perp\cap\Tc}\fv\right\|^2+ \left\|\Pm_{\Sc^\perp\cap\Tc^\perp}\fv\right\|^2 +\left\|\hat{\xv}_n-\Sm^\perp\Pm_0\fv\right\|^2,
\]
and the following bounds hold
\begin{equation}
\left\|\hat{\xv}_n-\Sm^\perp\Pm_0\fv\right\|\leq\|\Sm^\perp \Tm^\perp\Pm_0\fv\|/\cos^{2}\theta_{\max},
\label{eq:error_bound_Me}
\end{equation}
and
\begin{equation}
 \left\|\hat{\xv}_n-\Sm^\perp\Pm_0\fv\right\|\leq\|\Tm^\perp\Pm_0\fv\|/\cos\theta_{\max}.
\label{eq:error_bound_Me_2}
\end{equation}
Bounds \eqref{eq:error_bound_Me} and \eqref{eq:error_bound_Me_2} are sharp. 
\end{theorem}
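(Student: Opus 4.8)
The plan is to reduce everything to the five-term orthogonal decomposition
\[
\Hc = (\Sc\cap\Tc)\oplus(\Sc^\perp\cap\Tc)\oplus(\Sc\cap\Tc^\perp)\oplus(\Sc^\perp\cap\Tc^\perp)\oplus\Hc_0
\]
from Theorem~\ref{thm:Lidentity}, together with the fact that the normal solution satisfies $\hat{\xv}_n\in\Hc_0\cap\Sc^\perp$. First I would rewrite the error as $\hat{\fv}-\fv=\hat{\xv}-\Sm^\perp\fv$, note both terms lie in $\Sc^\perp$, and split the second along $\Sc^\perp=(\Sc^\perp\cap\Tc)\oplus(\Sc^\perp\cap\Tc^\perp)\oplus(\Hc_0\cap\Sc^\perp)$ as $\Sm^\perp\fv=\Pm_{\Sc^\perp\cap\Tc}\fv+\Pm_{\Sc^\perp\cap\Tc^\perp}\fv+\Sm^\perp\Pm_0\fv$. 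By Theorem~\ref{thm:uniqueness} any solution $\hat{\xv}$ differs from $\hat{\xv}_n$ by some $\wv\in\Sc^\perp\cap\Tc$, which is annihilated by $\Mm$ (the projector onto $(\Sc^\perp\cap\Tc)^\perp$); since also $\Mm\hat{\xv}_n=\hat{\xv}_n$, applying $\Mm$ collapses the error to $(\hat{\xv}_n-\Sm^\perp\Pm_0\fv)-\Pm_{\Sc^\perp\cap\Tc^\perp}\fv$. These two summands sit in the mutually orthogonal subspaces $\Hc_0$ and $\Sc^\perp\cap\Tc^\perp$, so Pythagoras yields the first identity. The second identity is the same computation for $\hat{\fv}_n-\fv$ without applying $\Mm$, where the extra orthogonal term $\Pm_{\Sc^\perp\cap\Tc}\fv$ survives.

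For the bounds I would set $\ev:=\hat{\xv}_n-\Sm^\perp\Pm_0\fv\in\Hc_0$ and derive the clean operator equation $\Km_\star\ev=-\Sm^\perp\Tm^\perp\Pm_0\fv$ on $\Hc_0\cap\Sc^\perp$, where $\Km_\star=(\Sm^\perp\Tm^\perp)|_{\Hc_0\cap\Sc^\perp}$ as in Theorem~\ref{thm:Lidentity}. This follows by restricting \eqref{eq:recon_sym_eq_bl} to $\Hc_0$, giving $\Km_\star\hat{\xv}_n=-\Sm_0^\perp\Tm_0^\perp\Sm_0\Pm_0\fv$, and subtracting $\Km_\star(\Sm^\perp\Pm_0\fv)=\Sm_0^\perp\Tm_0^\perp\Sm_0^\perp\Pm_0\fv$, so that the two right-hand sides combine through $\Sm_0+\Sm_0^\perp=\Id$. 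Since $\cos\theta_{\max}>0$ makes $\Km_\star$ boundedly invertible with $\|\Km_\star^{-1}\|=1/\cos^2\theta_{\max}$ by \eqref{eq:1overkappa}, bound \eqref{eq:error_bound_Me} is then immediate from $\|\ev\|=\|\Km_\star^{-1}\Sm^\perp\Tm^\perp\Pm_0\fv\|\leq\|\Km_\star^{-1}\|\,\|\Sm^\perp\Tm^\perp\Pm_0\fv\|$.

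For \eqref{eq:error_bound_Me_2} I would instead insert the idempotent, writing $\ev=-\Km_\star^{-1}\Sm_0^\perp\Tm_0^\perp(\Tm_0^\perp\Pm_0\fv)$, and bound it by the operator norm $\|\Km_\star^{-1}\Sm_0^\perp\Tm_0^\perp\|$ applied to $\Tm_0^\perp\Pm_0\fv=\Tm^\perp\Pm_0\fv$. The required identity $\|\Km_\star^{-1}\Sm_0^\perp\Tm_0^\perp\|=1/\cos\theta_{\max}$ is precisely the computation already performed in the proof of Theorem~\ref{thm:Lidentity}, where $\|\Km_0^\dagger\Sm_0^\perp\Tm_0^\perp\|^2=\|\Km_\star^{-1}\|$ was established using $\Sm_0^\perp\Tm_0^\perp\Sm_0^\perp|_{\Hc_0\cap\Sc^\perp}=\Km_\star$, so I would simply invoke it. Finally, sharpness of both bounds follows by choosing $\fv$ so that $\Pm_0\fv$ is spectrally concentrated on the part of $\Hc_0$ where the angle operator attains, or approaches, $\theta_{\max}$; there $\Km_\star^{-1}$ acts essentially as multiplication by $1/\cos^2\theta_{\max}$ and all the inequalities above become equalities, or are approached along a sequence.

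The main obstacle I anticipate is the second bound: obtaining the clean exponent $\cos^{-1}\theta_{\max}$ rather than the naive $\cos^{-2}\theta_{\max}$ requires the non-obvious half-power cancellation $\|\Km_\star^{-1}\Sm_0^\perp\Tm_0^\perp\|^2=\|\Km_\star^{-1}\|$, which hinges on $\Sm_0^\perp\Tm_0^\perp\Sm_0^\perp$ restricting back to $\Km_\star$ on $\Hc_0\cap\Sc^\perp$ and is ultimately the tangent relation for angles in generic position. Verifying that this gain is genuine---and that neither bound dominates the other---is most transparent in the $2\times 2$ canonical representation of the projector pair $(\Sm_0,\Tm_0)$ on $\Hc_0$, which I would keep at hand as a cross-check.
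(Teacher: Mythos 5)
Your proposal is correct and follows essentially the same route as the paper's proof: the two identities come from the same five-term orthogonal decomposition plus the Pythagorean theorem, the bounds come from the same operator equation $\Km_\star(\hat{\xv}_n-\Sm^\perp\Pm_0\fv)=-\Sm^\perp\Tm^\perp\Pm_0\fv$ obtained by the same $\Sm_0+\Sm_0^\perp=\Id$ cancellation, and the two norm estimates invoke exactly the quantities $\|\Km_\star^{-1}\|=1/\cos^2\theta_{\max}$ and $\|\Km_0^\dagger\Sm_0^\perp\Tm_0^\perp\|=1/\cos\theta_{\max}$ already computed in the proof of Theorem~\ref{thm:Lidentity}. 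The only cosmetic difference is that the paper establishes sharpness by the explicit $4$D-in-$8$D matrix example of Sec.~\ref{sec:moreexamples} (equality when $a=b$), whereas you sketch the equivalent spectral-concentration argument.
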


\begin{proof}
All the reconstructions are sample consistent, i.e. we have $\Sm\fv=\Sm\hat{\fv}_n=\Sm\hat{\fv}_n$. Using 
\[\Mm = \Pm_{\Hc} - \Pm_{\Sc^\perp\cap\Tc} =\Pm_0 + \Pm_{\Sc^{\perp}\cap\Tc^\perp} + \Pm_{\Sc\cap\Tc^\perp} + \Pm_{\Sc\cap\Tc},\]
we obtain the orthogonal decomposition 
\[M\left(\hat{\fv} - \fv\right)= -\Pm_{\Sc^\perp\cap\Tc^\perp}\fv+\Sm^\perp\Pm_0\left(\hat{\fv} - \fv\right).\]
Similarly, the orthogonal decomposition of the error of the normal reconstruction is
\[\hat{\fv}_n - \fv= -\Pm_{\Sc^\perp\cap\Tc}\fv-\Pm_{\Sc^\perp\cap\Tc^\perp}\fv+\Sm^\perp\Pm_0\left(\hat{\fv}_n - \fv\right).\]
In the last term of the both identities above, we have 
\[
\Sm^\perp\Pm_0\hat{\fv} = \Sm^\perp\Pm_0\hat{\fv}_n = \Sm^\perp\Pm_0\hat{\xv}_n=\hat{\xv}_n\in\Hc_0\cap\Sc^\perp.
\]
The Pythagorean theorem thus proves both identities in the statement of the theorem. 

Following algebraic transformations from the proof of Theorem~\ref{thm:Lidentity}, where $\Km_\star = \left.\left(\Sm^\perp\Tm^\perp\right)\right| _{\Sc^\perp\cap \Hc_0}$, we get
\begin{equation*}
\begin{array}{ll}
	\hat{\xv}_n - \Sm^\perp\Pm_0\fv &= - \Km_\star^{-1}\left(\Sm^\perp\Tm^\perp\Sm\Pm_0\fv\right) - \Sm^\perp\Pm_0\fv\\
				&= - \Km_\star^{-1}\left(\Sm^\perp\Tm^\perp\Sm\Pm_0\fv + \Km_\star\Sm^\perp\Pm_0\fv\right)\\
				&= - \Km_\star^{-1}\left(\Sm^\perp\Tm^\perp\Sm\Pm_0\fv + \Sm^\perp\Tm^\perp\Sm^\perp\Pm_0\fv\right)\\
				&= - \Km_\star^{-1}\left(\Sm^\perp\Tm^\perp\Pm_0\fv\right)\\
				&= - \Km_0^{\dagger}\Sm_0^\perp\Tm_0^\perp\Pm_0\fv.
\end{array}
\end{equation*}
Finally, using arguments similar to those in the proof of Theorem~\ref{thm:Lidentity}, we obtain the bounds
\begin{equation*}
\begin{array}{ll}
 \left\|\Km_\star^{-1}\left(\Sm^\perp\Tm^\perp\Pm_0\fv\right)\right\| &\leq \left\|\Km_\star^{-1}\right\|\left\|\Sm^\perp\Tm^\perp\Pm_0\fv\right\|\\
 &={\|\Sm^\perp\Tm^\perp\Pm_0 \fv\|}/{\cos^2\theta_{\max}}
\end{array},
\end{equation*}
\begin{equation*}
\begin{array}{ll}
 \left\|\Km_0^{\dagger}\Sm_0^\perp\Tm_0^\perp\Pm_0\fv\right\| &\leq \left\|\Km_0^{\dagger}\Sm_0^\perp\Tm_0^\perp\right\|\left\|\Tm_0^\perp\Pm_0\fv\right\| \\
 &={\|\Tm^\perp\Pm_0 \fv\|}/{\cos\theta_{\max}}
\end{array},
\end{equation*}
which complete the proof. 
The sharpness of bounds \eqref{eq:error_bound_Me} and \eqref{eq:error_bound_Me_2} is shown in Sec. \ref{sec:moreexamples}.
\end{proof}
The error bounds  of Theorem~\ref{thm:errbnds} based on \eqref{eq:error_bound_Me_2}, improve and extend to the most general case the bound $\|\Tm^\perp \fv\|/{\cos\theta_{\max}}$ obtained with the consistent reconstruction method  presented in~\cite{Unser-TSP-94, Eldar-TSP-06}, dropping all unnecessary assumptions on the sampling and guiding subspaces made in ~\cite{Unser-TSP-94, Eldar-TSP-06}. The error bounds of Theorem~\ref{thm:errbnds} based on \eqref{eq:error_bound_Me} are new. Neither of the bounds \eqref{eq:error_bound_Me} and \eqref{eq:error_bound_Me_2} can be derived from the other one.

\section{Alternative Equivalent Formulations}\label{s:GenRec} 
We assume  $\Sc^\perp\cap\Tc=\{\zerov\}$ for uniqueness in this section.
\subsection{Quotient Space Reconstruction} 

The oblique projector onto the subspace  $\Tc$ along the subspace $\Sc^\perp$, we denote by $\Pm_{\Tc\perp\Sc}$, 
is conventionally used to compute the reconstructed signal constrained to~$\Tc$.
The  existence of $\Pm_{\Tc\perp\Sc}$ relies on
the traditional assumption $\Sc^\perp+\Tc = \Hc$, made in \cite{Unser-TSP-94, Eldar-JFA-03},
which is equivalent to $\theta_{\max}<\pi/2$ and 
\begin{equation}
\{\zerov\}= \left\{\Sc^\perp+\Tc\right\}^\perp=\Tc^\perp\cap\Sc=\Sc\cap\Tc^\perp,
\label{eq:nonzero_assumption}
\end{equation}
The spectral norm of the oblique projector $\Pm_{\Tc\perp\Sc}$, determining stability of the reconstruction $\Pm_{\Tc\perp\Sc}\,\fv$  is 
equal (cf. \cite[Eq.~(6.2), attributed to Del Pasqua, 1955]{Szyld-06}) in this case to $1/\gamma(\Tc,\Sc^\perp)=1/\cos\theta_{\max}$; see Theorem \ref{thm:kappa} and~\eqref{eq:projector-norm-min-gap1}.

Oversampling can make the intersection $\Sc\cap\Tc^\perp$ nontrivial, i.e.\ $\Sc\cap\Tc^\perp\neq\{\zerov\}$, so  
there is a nontrivial orthogonal decomposition $\Hc=\overline{\left(\Sc^\perp+\Tc\right)}\oplus\left(\Sc\cap\Tc^\perp\right)$. 
In~this case, 
the oblique projector $\Pm_{\Tc\perp\Sc}$ cannot be defined in the whole space $\Hc$, but
it can be instead defined within the subspace $\Sc^\perp+\Tc\subseteq\overline{\Sc^\perp+\Tc}=\Hc\ominus\left(\Sc\cap\Tc^\perp\right)$, where the latter represents the quotient space $\Hc/\{\Sc \cap \Tc^\perp\}$.

        \begin{figure}
                \includegraphics[width=0.95\linewidth]{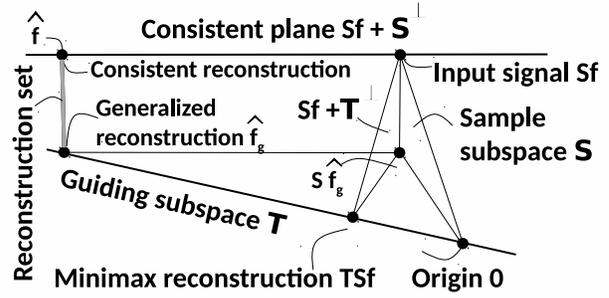}
                \vspace{-0.8in}
                \caption{3D example in details}\label{fig:3}
        \end{figure}

A specific reconstruction algorithm, implementing this idea, as illustrated in Fig. \ref{fig:3}, can be as follows. Let $\Pm_{\Sc\cap\Tc^\perp}\fv$ be an orthogonal projection of the original signal $\fv$
on the subspace $\Sc\cap\Tc^\perp$, then the difference $\fv-\Pm_{\Sc\cap\Tc^\perp}\fv\in\overline{\Sc^\perp+\Tc}$ is a reduced signal, representing the
original signal in the   quotient space $\Hc/\{\Sc \cap \Tc^\perp\}$.
The~oblique projector $\Pm_{\Tc\perp\Sc}$ onto the subspace  $\Tc$ along the subspace $\Sc^\perp$, defined within $\overline{\Sc^\perp+\Tc}$, 
acting on the reduced signal $\fv-\Pm_{\Sc\cap\Tc^\perp}\fv\in\overline{\Sc^\perp+\Tc}$, gives   
\begin{equation}
\hat{\fv}_g\equiv\Pm_{\Tc\perp\Sc}\left(\fv-\Pm_{\Sc\cap\Tc^\perp}\fv\right)\in\Tc\cap\overline{\Sc^\perp+\Tc},
\label{eq:fg}
\end{equation}
which is sample consistent with $\fv-\Pm_{\Sc\cap\Tc^\perp}\fv$. 
We call $\hat{\fv}_g$ a \emph{generalized reconstruction}~of~$\fv$, 
since it is the same as the generalized reconstruction in~\cite{Adcock:2013:BCR}; see the next section.

The sample consistent with $\fv$ reconstructed signal is finally obtained by adding the subtracted term $\Pm_{\Sc\cap\Tc^\perp}\fv$ back, i.e. by  
$\hat{\fv}_g+\Pm_{\Sc\cap\Tc^\perp}\fv.$
We next prove that this quotient space reconstruction method results in our previously defined sample consistent reconstructed signal $\hat{\fv} = \hat{\tv} + \Pm_{\Sc\cap\Tc^\perp}\fv$ in \eqref{eq:tf}, and that $\hat{\fv}_g=\hat{\tv}$, i.e., we obtain the same reconstructions as before.

\begin{theorem}
Let $\Sc^\perp\cap\Tc=\{\zerov\}$. The quotient space reconstruction method is equivalent to and gives the same reconstructed signal  $\hat\fv$ 
as solving problem~\eqref{eq:recon_problem}, while $\hat{\fv}_g=\hat{\tv}$.
\label{thm:QSCR_equivalence}
\end{theorem}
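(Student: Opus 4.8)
The plan is to reduce the entire statement to the single identity $\hat{\fv}_g=\hat{\tv}$. Indeed, the full sample-consistent reconstruction produced by the quotient space method is $\hat{\fv}_g+\Pm_{\Sc\cap\Tc^\perp}\fv$, whereas the least-squares reconstruction of \eqref{eq:recon_problem} obeys \eqref{eq:tf}, namely $\hat{\fv}=\hat{\tv}+\Pm_{\Sc\cap\Tc^\perp}\fv$; so once $\hat{\fv}_g=\hat{\tv}$ is in hand, adding $\Pm_{\Sc\cap\Tc^\perp}\fv$ to both sides yields equality of the two full reconstructions. Under the standing assumption $\Sc^\perp\cap\Tc=\{\zerov\}$, Theorem~\ref{thm:uniqueness} ensures uniqueness of the least-squares reconstruction, so $\hat{\fv}$ and $\hat{\tv}=\Tm\hat{\fv}$ are well defined objects.

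The core step is to verify $\hat{\fv}_g=\hat{\tv}$ \emph{without} computing the oblique projector explicitly, by checking that $\hat{\tv}$ meets the defining characterization of $\Pm_{\Tc\perp\Sc}\wv$ applied to the reduced signal $\wv:=\fv-\Pm_{\Sc\cap\Tc^\perp}\fv$ of \eqref{eq:fg}. That characterization is the pair of conditions $\Pm_{\Tc\perp\Sc}\wv\in\Tc$ and $\wv-\Pm_{\Tc\perp\Sc}\wv\in\Sc^\perp$, the splitting being unique precisely because $\Sc^\perp\cap\Tc=\{\zerov\}$. The first condition holds trivially since $\hat{\tv}\in\Tc$ by construction. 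For the second, I would use $\hat{\tv}=\hat{\fv}-\Pm_{\Sc\cap\Tc^\perp}\fv$ from \eqref{eq:tf} together with sample consistency $\Sm\hat{\fv}=\Sm\fv$ and the fact that $\Pm_{\Sc\cap\Tc^\perp}\fv\in\Sc$ (so $\Sm$ acts as the identity on it), giving $\Sm\wv=\Sm\fv-\Pm_{\Sc\cap\Tc^\perp}\fv=\Sm\hat{\tv}$, hence $\Sm(\wv-\hat{\tv})=\zerov$ and $\wv-\hat{\tv}\in N(\Sm)=\Sc^\perp$.

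The two verified conditions simultaneously exhibit the decomposition $\wv=\hat{\tv}+(\wv-\hat{\tv})\in\Tc+\Sc^\perp$, confirming that $\wv$ lies in the domain $\Sc^\perp+\Tc$ of $\Pm_{\Tc\perp\Sc}$, so the projection is genuinely defined; uniqueness of this decomposition then forces $\hat{\fv}_g=\Pm_{\Tc\perp\Sc}\wv=\hat{\tv}$. I expect the only delicate point, rather than any deep obstacle, to be this bookkeeping around the domain and the uniqueness of the oblique splitting: one must take care that $\Pm_{\Tc\perp\Sc}$ is defined only on $\overline{\Sc^\perp+\Tc}$ and that the argument places $\wv$ in $\Sc^\perp+\Tc$ with the splitting being the oblique one, both of which ride entirely on $\Sc^\perp\cap\Tc=\{\zerov\}$. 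I would then close the theorem by adding $\Pm_{\Sc\cap\Tc^\perp}\fv$ to $\hat{\fv}_g=\hat{\tv}$ and invoking \eqref{eq:tf} to identify the resulting signal with the solution $\hat{\fv}$ of \eqref{eq:recon_problem}.
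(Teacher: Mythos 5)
Your identification of $\hat{\fv}_g$ with $\hat{\tv}$ is correct, but it runs in the opposite direction from the paper's argument, and the two routes are genuinely different. You start from the least-squares solution $\hat{\fv}$ of \eqref{eq:recon_problem}, set $\hat{\tv}=\Tm\hat{\fv}$, and verify directly that the reduced signal $\wv=\fv-\Pm_{\Sc\cap\Tc^\perp}\fv$ splits as $\hat{\tv}+(\wv-\hat{\tv})$ with $\hat{\tv}\in\Tc$ and $\wv-\hat{\tv}\in\Sc^\perp$; your computation $\Sm(\wv-\hat{\tv})=\zerov$ is right, and you correctly note that this same decomposition certifies $\wv\in\Sc^\perp+\Tc$, so that $\Pm_{\Tc\perp\Sc}$ is genuinely defined at $\wv$ and uniqueness of the oblique splitting under $\Sc^\perp\cap\Tc=\{\zerov\}$ forces $\Pm_{\Tc\perp\Sc}\wv=\hat{\tv}$. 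The paper instead works from the quotient-space output toward the minimization problem: it decomposes the objective $\|\Tm^\perp\hat{\fv}\|^2$ orthogonally as in \eqref{eq:constant_term}, observes that the first term is a constant of the minimization, shows that the second term vanishes on $\hat{\fv}_g+\Pm_{\Sc\cap\Tc^\perp}\fv$ while sample consistency is preserved, and only then reads off $\hat{\fv}_g=\hat{\tv}$ by comparing with \eqref{eq:tf}. Your route buys a shorter and more transparent identification; the paper's route buys independence from the prior existence of a minimizer of \eqref{eq:recon_problem}, since it exhibits the quotient-space output as a minimizer rather than assuming one is given.

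The one substantive omission concerns the word ``equivalent'' in the statement. The paper's proof devotes its first two thirds to showing that the two methods are applicable under the same conditions: the same uniqueness condition $\Sc^\perp\cap\Tc=\{\zerov\}$, and the same existence and stability condition $\cos\theta_{\max}>0$, obtained by relating closedness of $\Sc^\perp+\Tc$ to the minimal gap via Theorem~\ref{thm:kappa} and comparing with Theorem~\ref{thm:existence}. Your argument presupposes that the minimizer $\hat{\fv}$ of \eqref{eq:recon_problem} exists and that the optimality identities \eqref{eq:mixed} and \eqref{eq:tf} are available, so it establishes agreement of the outputs whenever both are defined but says nothing about when that is. To match the full claim you should either add that discussion or state the existence hypothesis explicitly.
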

\begin{proof}
We first discuss that the conditions of the reconstructed signal uniqueness are the same in both approaches. 
As in \cite[Proposition 2]{Hirabayashi-Unser-2007}, the assumption $\Sc^\perp\cap\Tc=\{\zerov\}$ is necessary and sufficient for the quotient space constrained reconstruction uniqueness, since the subspaces $\Sc^\perp\cap\Tc$ and $\Sc\cap\Tc^\perp$ are orthogonal, thus the 
the former is not affected by vanishing of the latter in the quotient space $\Hc/\{\Sc \cap \Tc^\perp\}$.
All our arguments of Sec.~\ref{s:uniquness} are  applicable as well for the quotient space constrained reconstruction, 
and can be viewed as extensions of the arguments from \cite{Hirabayashi-Unser-2007} to the quotient space $\Hc/\{\Sc \cap \Tc^\perp\}$.

Second, we compare the conditions of the reconstructed signal  $\hat\fv$ existence and continuous dependence on the original signal  $\fv$. The subspace $\Sc^\perp+\Tc$, not necessarily closed, is a domain of the oblique projector $\Pm_{\Tc\perp\Sc}$, but the reduced signal $\fv-\Pm_{\Sc\cap\Tc^\perp}\fv$, which we need to apply $\Pm_{\Tc\perp\Sc}$ to, can be arbitrary in the closure $\overline{\Sc^\perp+\Tc}$. 

Thus, it is necessary and sufficient for the existence of the reconstructed signal, using the quotient space constrained reconstruction, for an arbitrary original signal $\fv\in\Hc$ that  $\Sc^\perp+\Tc=\overline{\Sc^\perp+\Tc}.$ 
A sum of two closed subspaces is closed iff the \emph{minimal gap}~\cite[Sec.~IV-4]{kato} $\gamma$ between them is positive. 
In our case, $\Sc^\perp+\Tc=\overline{\Sc^\perp+\Tc}$ iff $\gamma\left(\Tc,\Sc^\perp\right)>0,$ where 
$\gamma(\Tc,\Sc^\perp)=\cos\theta_{\max}$ by Theorem~\ref{thm:kappa} and identities~\eqref{eq:projector-norm-min-gap1}.

Moreover, by definition \eqref{eq:def_mingap}, the minimal gap  $\gamma(\Sc,\Tc^\perp)$
is essentially defined in a quotient space $\Hc/\{\Sc \cap \Tc^\perp\}$, factoring out the intersection $\Sc\cap\Tc^\perp$, 
if it is nontrivial, $\Sc\cap\Tc^\perp\neq\{\zerov\}$, which we allow. 
This implies that the formula $1/\gamma(\Tc,\Sc^\perp)$ from \cite[Equation (6.2)]{Szyld-06} of the spectral norm of the oblique projector $\Pm_{\Tc\perp\Sc}$, defined within the subspace $\Sc^\perp+\Tc$, remains valid even if $\Sc\cap\Tc^\perp\neq\{\zerov\}$.


We conclude that the assumption $\theta_{\max}<\pi/2$ is necessary and sufficient for  existence of the reconstructed signal  $\hat\fv$ 
using the quotient space reconstruction, for an arbitrary original signal $\fv\in\Hc$, as well as it guarantees the stability of the reconstruction.  
Comparing this assumption to those of Theorem~\ref{thm:existence}, while taking into account 
\eqref{eq:rho-projector-norm} and \eqref{eq:projector-norm-min-gap}, we also conclude that $\theta_{\max}<\pi/2$ is necessary and sufficient for  existence of the reconstructed signal  $\hat\fv$ in both approaches, the quotient space reconstruction and minimization in~\eqref{eq:recon_problem}. 

It remains to prove that both approaches also give the same reconstructed signal  $\hat\fv$, if it exists. Let   $\hat\fv$ be the reconstructed signal obtained by the quotient space constrained reconstruction. We analyze the square of the function 
$\|\Tm^\perp\hat{\fv}\|$
minimized in \eqref{eq:recon_problem}, using the following identities,
\begin{align}
\left\|\Tm^\perp\hat{\fv}\right\|^2&=\left\|\Tm^\perp\Pm_{\Sc\cap\Tc^\perp}\fv\right\|^2+\left\|\Tm^\perp\left(\hat{\fv}-\Pm_{\Sc\cap\Tc^\perp}\fv\right)\right\|^2 \nonumber \\
&=\left\|\Pm_{\Sc\cap\Tc^\perp}\fv\right\|^2+\left\|\Tm^\perp\left(\hat{\fv}-\Pm_{\Sc\cap\Tc^\perp}\fv\right)\right\|^2.
\label{eq:constant_term}
\end{align}
The first identity in \eqref{eq:constant_term} holds, because both vector sums $\Pm_{\Sc\cap\Tc^\perp}\fv+\left(\fv-\Pm_{\Sc\cap\Tc^\perp}\fv\right)$ and $\Pm_{\Sc\cap\Tc^\perp}\fv+\left(\hat{\fv}-\Pm_{\Sc\cap\Tc^\perp}\fv\right)$ are orthogonal, where 
$\Pm_{\Sc\cap\Tc^\perp}\fv\in\Sc\cap\Tc^\perp$, while also $\fv-\Pm_{\Sc\cap\Tc^\perp}\fv\in\overline{\Sc^\perp+\Tc}$ and $\hat{\fv}-\Pm_{\Sc\cap\Tc^\perp}\fv\in\overline{\Sc^\perp+\Tc},$
consistently with the orthogonal decomposition of the Hilbert space $\Hc=\left(\Sc\cap\Tc^\perp\right)\oplus\overline{\left(\Sc^\perp+\Tc\right)}$, 
by construction of the quotient space reconstruction. Moreover, the subspace $\Sc\cap\Tc^\perp$ is trivially invariant 
with respect to the orthogonal projector $\Tm^\perp$, consequently, its orthogonal complement $\overline{\left(\Sc^\perp+\Tc\right)}$
is also  $\Tm^\perp$-invariant, as can be directly verified. Therefore, we conclude that  the sum
$\Tm^\perp\Pm_{\Sc\cap\Tc^\perp}\fv+\Tm^\perp\left(\hat{\fv}-\Pm_{\Sc\cap\Tc^\perp}\fv\right)$ in \eqref{eq:constant_term} is
also orthogonal, and the Pythagorean theorem is applicable. The second identity  in \eqref{eq:constant_term} trivially follows from 
$\Tm^\perp\Pm_{\Sc\cap\Tc^\perp}=\Pm_{\Sc\cap\Tc^\perp}$ since $\Sc\cap\Tc^\perp\subseteq\Tc^\perp$.

We observe that in identity \eqref{eq:constant_term}, the first term in the sums is a constant, not changing in minimization~\eqref{eq:recon_problem}, since 
 $\Pm_{\Sc\cap\Tc^\perp}\fv$ is simply the orthogonal projection of the original signal $\fv$ on the subspace $\Sc\cap\Tc^\perp$. 
 We now show that the second term vanishes on the minimizer $\hat\fv$. Indeed, we have by the definition of the quotient space reconstruction that 
  $\hat\fv=\hat{\fv}_g+\Pm_{\Sc\cap\Tc^\perp}\fv$, where by \eqref{eq:fg}
  $\hat{\fv}_g=\Pm_{\Tc\perp\Sc}\left(\fv-\Pm_{\Sc\cap\Tc^\perp}\fv\right)\in\Tc\cap\overline{\Sc^\perp+\Tc}$ 
is sample consistent with $\fv-\Pm_{\Sc\cap\Tc^\perp}\fv,$
i.e. the following holds,
$\Sm\hat{\fv}_g=\Sm\left(\fv-\Pm_{\Sc\cap\Tc^\perp}\fv\right)=\Sm\fv-\Pm_{\Sc\cap\Tc^\perp}\fv.$
We conclude that $\Tm^\perp\hat{\fv}_g=\zerov $ and the orthogonal sum $\Tm^\perp\fv=\Pm_{\Sc\cap\Tc^\perp}\fv+\Tm^\perp\hat{\fv}_g$
both have the smallest possible norms, while $\Sm\hat{\fv}=\Sm\hat{\fv}_g+\Pm_{\Sc\cap\Tc^\perp}\fv=\Sm\fv$, i.e. the reconstructed signal  $\hat\fv$ obtained by the quotient space reconstruction is a valid minimizer in \eqref{eq:recon_problem}. 

Finally, comparing the identity $\hat\fv=\hat{\fv}_g+\Pm_{\Sc\cap\Tc^\perp}\fv$ to \eqref{eq:tf}, i.e.,\ $\hat{\fv} = \hat{\tv} + \Pm_{\Sc\cap\Tc^\perp}\fv$,   
immediately implies that $\hat{\fv}_g=\hat{\tv}$. 
\end{proof}

\subsection{Comparison with Generalized Reconstruction}\label{sec:cgr}
An equivalent to the quotient space approach is proposed in~\cite{Adcock:2013:BCR}, where the~oblique projector $\Pm_{\Tc\perp\Sc}$ onto the subspace  $\Tc$ along the subspace $\Sc^\perp$, defined within $\overline{\Sc^\perp+\Tc}$, is substituted with the oblique projector $\Pm_{\Tc\perp\Sm\Tc}$ onto the subspace $\Tc$ along $\left(\Sm\Tc\right)^{\perp}$, 
resulting in the same \emph{generalized reconstruction} \[\hat{\fv}_g=\Pm_{\Tc\perp\Sm\Tc}\fv=\Pm_{\Tc\perp\Sc}\left(\fv-\Pm_{\Sc\cap\Tc^\perp}\fv\right).\]
Indeed, the traditional assumption $\Sc^\perp+\Tc = \Hc$ of \cite{Unser-TSP-94, Eldar-JFA-03} made for the oblique projector $\Pm_{\Tc\perp\Sc}$,
onto the subspace  $\Tc$ along the subspace $\Sc^\perp$, transforms here into the assumption $\left(\Sm\Tc\right)^\perp+\Tc = \Hc$ for the  oblique projector $\Pm_{\Tc\perp\Sm\Tc}$, which is equivalent to $\theta_{\max}<\pi/2$ and, by analogy with \eqref{eq:nonzero_assumption}, 
\begin{equation*}
\left\{\left(\Sm\Tc\right)^\perp+\Tc\right\}^\perp=\overline{\Sm\Tc}\cap\Tc^\perp=\Sc\cap\left(\Sc\cap\Tc^\perp\right)^\perp\cap\Tc^\perp=\{\zerov\},
\label{eq:nonzero_assumptiong}
\end{equation*}
but where the latter is automatically satisfied, in contrast to assumption \eqref{eq:nonzero_assumption}. 
The spectral norm of the oblique projector $\Pm_{\Tc\perp\Sm\Tc}$ is equal 
(cf. \cite[Lemma 4.4]{Adcock:2013:BCR}) in this case to $1/\gamma(\Tc,(\Sm\Tc)^\perp)=1/\cos\theta_{\max}$; see again Sec.~\ref{sec:es}.
We thus conclude that  $\Pm_{\Tc\perp\Sm\Tc}\fv=\hat{\tv}$.

By Theorem  \ref{thm:rcsize}, the reconstruction error is
\[\left\| \hat{\fv}_g - \fv\right\|^2 = \left\| \hat{\fv} - \fv\right\|^2 + \left\|\Pm_{\Sc\cap\Tc^\perp}\fv\right\|^2,\]
where $\left\| \hat{\fv} - \fv\right\|$ is bounded in Theorem~\ref{thm:errbnds}, extending and improving 
the bound $\|\hat{\fv}_g - \fv\| \leq \|\Tm^{\perp}\fv\|/\cos \theta_{\max}$ of  \cite{Adcock:2013:BCR}.

\section{Comparison with Regularized Reconstruction}\label{sec:regularization}

Regularization-based methods, suggested in \cite{Narang-GlobalSIP-13}, in our notation can be formulated using the following unconstrained quadratic minimization problem 
\begin{equation}
\inf_{\hat{\fv}_\rho} \quad \left\|\Sm\hat{\fv} _\rho-\Sm\fv\right\|^2 +\rho\left\|\Hm\hat{\fv}_\rho\right\|^2 ,\quad\rho>0,
\label{eq:reg_problem_orig}
\end{equation}
where the operator $\Hm$ is interpreted as a filter, e.g., it may approximate our $\Tm^\perp$,
in which case problem \eqref{eq:reg_problem_orig} approximates 
\begin{equation}
\inf_{\hat{\fv}_\rho} \quad \left\|\Sm\hat{\fv}_\rho-\Sm\fv\right\|^2 +\rho\left\|\left(\hat{\fv}_\rho-\Tm\hat{\fv}_\rho\right)\right\|^2 .
\label{eq:reg_problem}
\end{equation}
Problem  \eqref{eq:reg_problem} can be viewed as a relaxation of our \eqref{eq:recon_problem}. 

The authors of  \cite{Narang-GlobalSIP-13}  assume that there exists a unique intersection of the sample-consistent closed plane $\Sm\fv+\Sc^\perp$ and the guiding closed subspace $\Tc$ and claim, without proof, that the minimizer $\hat{\fv}_\rho$ of \eqref{eq:reg_problem} and $\rho\to\infty$, approximates this intersection. We prove below a surprising result that, under the assumption of the unique intersection, the minimizer $\hat{\fv}_\rho$ of \eqref{eq:reg_problem} is equal to this intersection, for any $\rho>0$, i.e.\ $\hat{\fv}_\rho$ does not actually depend on~$\rho$. This will be a trivial consequence of an even more stunning result that the set of all solutions of \eqref{eq:reg_problem} for varying $\rho>0$ in general is nothing but our reconstruction set with removed end points, belonging the sample-consistent plane $\Sm\fv+\Sc^\perp$ and the guiding subspace $\Tc$. 

\begin{theorem}\label{thm:reg}
Let our reconstruction set be given by formula $\hat{\fv}_\alpha=\alpha\hat{\fv}+(1-\alpha)\Tm\hat{\fv}$, where $0\leq\alpha\leq1,$ and $\hat\fv$ solves
 \eqref{eq:recon_problem}.
Then $\hat{\fv}_\alpha$ solves 
problem  \eqref{eq:reg_problem} with $\rho=(1-\alpha)/\alpha$.
\end{theorem}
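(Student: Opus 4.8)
The plan is to use convexity and verify a single first-order condition. The functional minimized in \eqref{eq:reg_problem}, namely $J_\rho(\uv)=\|\Sm\uv-\Sm\fv\|^2+\rho\|\Tm^\perp\uv\|^2$ with $\rho>0$, is a nonnegative quadratic form, hence convex, so a point is a global minimizer if and only if it is stationary. Since the minimization is unconstrained over $\Hc$, taking the Fréchet derivative and using self-adjointness and idempotency of the orthoprojectors, $\Sm^*=\Sm=\Sm^2$ and $(\Tm^\perp)^*=\Tm^\perp=(\Tm^\perp)^2$, the stationarity condition reads
\[
\Sm(\uv-\fv)+\rho\,\Tm^\perp\uv=\zerov.
\]
It therefore suffices to check that $\uv=\hat{\fv}_\alpha$ satisfies this identity when $\rho=(1-\alpha)/\alpha$; because we exhibit a stationary point directly, no separate existence argument is required.

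First I would handle the guiding term. From $\hat{\fv}_\alpha=\alpha\hat{\fv}+(1-\alpha)\Tm\hat{\fv}$ and $\Tm^\perp\Tm=\zerov$, the strictly guided endpoint drops out, giving $\Tm^\perp\hat{\fv}_\alpha=\alpha\,\Tm^\perp\hat{\fv}$, whence $\rho\,\Tm^\perp\hat{\fv}_\alpha=(1-\alpha)\,\Tm^\perp\hat{\fv}$. Next I would handle the sampling term. Sample consistency of $\hat{\fv}$ gives $\Sm\hat{\fv}=\Sm\fv$, while the decomposition \eqref{eq:tf}, $\hat{\fv}=\Tm\hat{\fv}+\Pm_{\Sc\cap\Tc^\perp}\fv$, together with $\Pm_{\Sc\cap\Tc^\perp}\fv\in\Sc$, yields $\Sm(\Tm\hat{\fv})=\Sm\fv-\Pm_{\Sc\cap\Tc^\perp}\fv$. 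Combining these, $\Sm(\hat{\fv}_\alpha-\fv)=-(1-\alpha)\,\Pm_{\Sc\cap\Tc^\perp}\fv$.

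The two pieces cancel once I invoke the identity $\Tm^\perp\hat{\fv}=\Pm_{\Sc\cap\Tc^\perp}\fv$, which is just \eqref{eq:tf} rearranged (equivalently, \eqref{eq:recon_sym_eq_con_zero} forces $\Tm^\perp\hat{\fv}\in\Sc\cap\Tc^\perp$, and sample consistency identifies it with the orthogonal projection of $\fv$). Substituting gives the stationarity residual
\[
\Sm(\hat{\fv}_\alpha-\fv)+\rho\,\Tm^\perp\hat{\fv}_\alpha=(1-\alpha)\bigl(\Tm^\perp\hat{\fv}-\Pm_{\Sc\cap\Tc^\perp}\fv\bigr)=\zerov,
\]
so $\hat{\fv}_\alpha$ is stationary and thus a global minimizer of \eqref{eq:reg_problem} for $\rho=(1-\alpha)/\alpha$.

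Since this is a direct verification, I expect no deep obstacle; the only items requiring care are bookkeeping rather than substance. One should note that $\rho=(1-\alpha)/\alpha$ is a decreasing bijection of $\alpha\in(0,1)$ onto $\rho\in(0,\infty)$, which legitimately excludes the endpoints $\alpha\in\{0,1\}$ and matches the accompanying claim that the solution set over all $\rho>0$ is the reconstruction set with its two endpoints removed. One should also invoke convexity explicitly to pass from stationarity to a global minimum in the possibly infinite-dimensional setting, and observe that uniqueness of the minimizer is not asserted, so any degenerate directions in $\Sc^\perp\cap\Tc$ leave the conclusion that $\hat{\fv}_\alpha$ solves \eqref{eq:reg_problem} intact.
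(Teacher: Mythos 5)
Your proposal is correct and follows essentially the same route as the paper: reduce \eqref{eq:reg_problem} to the normal (stationarity) equation $\left(\Sm+\rho\Tm^\perp\right)\hat{\fv}_\rho=\Sm\fv$ and verify it directly for $\hat{\fv}_\alpha$ using $\Sm\hat{\fv}=\Sm\fv$ and $\Sm^\perp\Tm^\perp\hat{\fv}=\zerov$ (which you invoke in the equivalent form \eqref{eq:tf}, i.e.\ $\Tm^\perp\hat{\fv}=\Pm_{\Sc\cap\Tc^\perp}\fv$). Your explicit appeal to convexity to pass from stationarity to a global minimizer, and your remark on the range of $\rho=(1-\alpha)/\alpha$, merely spell out details the paper leaves implicit.
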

\begin{proof}
On the one hand, minimization problem \eqref{eq:reg_problem} is equivalent to the following linear equation
$\left(\Sm+\rho\Tm^\perp\right)\hat{\fv}_\rho=\Sm\fv.$
On the other hand, the consistent reconstruction $\hat\fv$ solves \eqref{eq:recon_sym_eq_con_zero}, i.e.\
$\Sm^\perp\Tm^\perp \hat{\fv} = \zerov$ and $\Sm\hat{\fv} = \Sm\fv$. Taking $\rho=(1-\alpha)/\alpha$ and substituting
$\hat{\fv}_\alpha$ 
 for  $\hat{\fv}_\rho$, we obtain by elementary calculations
 \[
 \left(\Sm+\frac{1-\alpha}{\alpha}\Tm^\perp\right)\left(\alpha\hat{\fv}+(1-\alpha)\Tm\hat{\fv}\right) = \Sm\fv
 \]
using properties of $\Sm$ and $\Tm$ as projectors.
\end{proof}

Theorem~\ref{thm:reg} can be extended to the case, where the filter $\Hm$ approximates the orthoprojector $\Tm^\perp$, but may 
fail for more general filters, e.g.,\ for some practically important in graph-based setup polynomial \cite{Chebyshev:icme2014} 
and nonlinear \cite{KnyazevM15e} filters. 

If there exists a unique intersection of the sample-consistent closed plane $\Sm\fv+\Sc^\perp$ and the guiding closed subspace $\Tc$, as assumed in \cite{Narang-GlobalSIP-13}, then the intersection is $\hat{\fv}=\Tm\hat{\fv}$ and our reconstruction set is thus trivially reduced to this single element 
$\hat{\fv}=\Tm\hat{\fv}$, so, by Theorem~\ref{thm:reg}, the minimizer  $\hat{\fv}_\rho$ in \eqref{eq:reg_problem} is simply
 $\hat{\fv}_\rho=\hat{\fv}=\Tm\hat{\fv}$, no matter what the value of $\rho>0$ is. 

If our reconstruction set is nontrivial, we can intentionally move the reconstructed signal away from the sample-consistent reconstruction plane $\Sm\fv+ \Sc^\perp$ toward the guiding subspace $\Tc$, e.g.,\ assuming that the sampling procedure is noisy.  
The sum in  \eqref{eq:reg_problem} penalizes for moving the reconstructed signal away from the sample-consistent reconstruction plane $\Sm\fv+\Sc^\perp$ and from the guiding subspace~$\Tc$.  
A~specific value of the~regularization parameter need to be chosen {\it a~priori}, e.g.,\ according to a noise level, if 
problem  \eqref{eq:reg_problem} is solved directly. 

Theorem \ref{thm:reg} allows us to choose the value $\rho=(1-\alpha)/\alpha$ {\it a~posteriori}, after determining the 
 reconstruction set, as well as to try a variety of choices at no extra costs. For example, let the reconstruction set be the closed interval with the end points $\hat{\fv}\in\Sm\fv+\Sc^\perp$ and $\hat{\tv}=\Tm\hat{\fv}\in\Tc$
If we trust that the sample-consistent closed plane $\Sm\fv+\Sc^\perp$ is actually accurate, we can choose our reconstruction to be sample consistent,  $\hat{\fv}\in\Sm\fv+\Sc^\perp$ that solves, e.g., minimization problem \eqref{eq:recon_problem}. 
If there is noise in sample measurements, we may decide to trust the guiding closed subspace $\Tc$ more than the sample $\Sm\fv$ 
and choose as our output reconstruction a convex linear combination $\alpha\hat{\fv}+(1-\alpha)\Tm\hat{\fv}$ within the reconstruction set, 
where $0<\alpha<1,$ or use the extreme choice $\alpha=0$ that results in the strictly  guided reconstruction $\Tm\hat{\fv}$ of~\cite{Adcock:2013:BCR}.

Specifically, for reconstruction with noisy or otherwise inaccurate samples, where $\Sm\fv$ is substituted by $\Sm\fv+\nv$, 
and $\nv$ represents a deviation from the true sample $\Sm\fv$, we can select 
\begin{equation}
1-\alpha=\frac{\|\nv\|}{\|\hat{\fv}-\Tm\hat{\fv}\|}. \label{eqn:alpha_selection}
\end{equation}
In \eqref{eqn:alpha_selection}, the numerator $\|\nv\|$ may be known from specifications of a sampling sensor.
The denominator $\|\hat{\fv}-\Tm\hat{\fv}\|$ is easily computable directly.

In the next section, we present conjugate gradient based methods to solve the proposed reconstruction problem.

\section{Iterative Reconstruction Algorithms}\label{sec:CG}
An iterative algorithm based on projection on convex sets (POCS) for
reconstructing a band-limited graph signal is presented in~\cite{Narang-GlobalSIP-13}. Starting with an initial guess, at each iteration the algorithm projects the signal on $\Tc$ and then resets the signal samples on $\Sc$ to the given samples. The POCS method can be interpreted as a Richardson iterative method for solving~\eqref{eq:recon_sym_eq_bl},
\begin{equation}\label{POCS}
\xv_m = \left(\mathbf{I} - \Km\right)\xv_{m-1} + \bv.
\end{equation}
When $\Km = \left.\left( \Sm^\perp \Tm^\perp\right) \right|_{\Sc^\perp}$ and $\bv = - \Sm^\perp \Tm^\perp \Sm\fv$, as in the present context, this iteration becomes
\begin{equation*}
\xv_m = \Sm^\perp \Tm \fv_{m-1}, \text{ where } \fv_{m-1} = \Sm\fv + \xv_{m-1},
\end{equation*} 
which is POCS method in~\cite{Narang-GlobalSIP-13}.

Conjugate gradient (CG) is the optimal iterative method for solving linear systems $\Km\xv = \bv$, if $\Km$ is a linear self-adjoint non-negative operator with bounded (pseudo)inverse. The basics of CG are reviewed in Appendix~\ref{app:CG}. We would like to use CG to solve \eqref{eq:recon_sym_eq_bl}. The difficulty  lies in the fact that $\Sm^\perp \Tm^\perp$ is not self-adjoint in general. However, as shown below, the restriction $\Km$ of $\Sm^\perp \Tm^\perp$ to its invariant subspace $\Sc^\perp$ is self-adjoint and positive semi-definite. 
\begin{proposition}\label{p}
Let $\Sm^\perp$ and $\Tm^\perp$ be two orthoprojectors. Then the operator $K=\left.\left(\Sm^\perp \Tm^\perp\right)\right|_{\Sc^\perp}0$ is self-adjoint and the operator lower and upper bounds $0\leq\Km\leq \mathbf{I}$ hold. 
\end{proposition}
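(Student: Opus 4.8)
The plan is to exploit the fact that, although $\Sm^\perp\Tm^\perp$ fails to be self-adjoint on all of $\Hc$, its restriction to $\Sc^\perp = R(\Sm^\perp)$ coincides with the manifestly self-adjoint triple product $\Sm^\perp\Tm^\perp\Sm^\perp$. First I would check that $\Km$ is well defined, i.e.\ that $\Sc^\perp$ is invariant under $\Sm^\perp\Tm^\perp$: for any $\xv\in\Sc^\perp$ the image $\Sm^\perp\Tm^\perp\xv$ lies in $R(\Sm^\perp)=\Sc^\perp$, so the restriction makes sense. Moreover, since $\Sm^\perp\xv=\xv$ on $\Sc^\perp$, we may insert a free factor $\Sm^\perp$ on the right and write $\Km\xv=\Sm^\perp\Tm^\perp\xv=\Sm^\perp\Tm^\perp\Sm^\perp\xv$, so that $\Km$ agrees on $\Sc^\perp$ with the operator $\Sm^\perp\Tm^\perp\Sm^\perp$.

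Next, for self-adjointness I would use that the orthoprojectors $\Sm^\perp$ and $\Tm^\perp$ are self-adjoint, whence $\left(\Sm^\perp\Tm^\perp\Sm^\perp\right)^*=\Sm^\perp\Tm^\perp\Sm^\perp$. The restriction of a self-adjoint operator to an invariant closed subspace is again self-adjoint, because for $\xv,\yv\in\Sc^\perp$ the identity $\Braket{\Km\xv,\yv}=\Braket{\Sm^\perp\Tm^\perp\Sm^\perp\xv,\yv}=\Braket{\xv,\Sm^\perp\Tm^\perp\Sm^\perp\yv}=\Braket{\xv,\Km\yv}$ holds directly, using $\Sm^\perp\xv=\xv$, $\Sm^\perp\yv=\yv$, and the self-adjointness of $\Sm^\perp$ and $\Tm^\perp$. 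This settles the first claim.

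Finally, for the bounds $0\leq\Km\leq\mathbf{I}$ I would evaluate the quadratic form on $\Sc^\perp$. For $\xv\in\Sc^\perp$, using $\Sm^\perp\xv=\xv$ and the idempotency $(\Tm^\perp)^2=\Tm^\perp$ together with $(\Tm^\perp)^*=\Tm^\perp$,
\[
\Braket{\Km\xv,\xv}=\Braket{\Sm^\perp\Tm^\perp\xv,\xv}=\Braket{\Tm^\perp\xv,\Sm^\perp\xv}=\Braket{\Tm^\perp\xv,\xv}=\|\Tm^\perp\xv\|^2 .
\]
Since $\Tm^\perp$ is an orthogonal projector, it is nonexpansive, so $0\leq\|\Tm^\perp\xv\|^2\leq\|\xv\|^2$, giving $0\leq\Braket{\Km\xv,\xv}\leq\Braket{\xv,\xv}$ for every $\xv\in\Sc^\perp$, which is exactly $0\leq\Km\leq\mathbf{I}$.

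There is no serious obstacle here; the only point that requires care is the bookkeeping of which projector acts freely on $\Sc^\perp$ (namely $\Sm^\perp$, as $\Sc^\perp$ is its range). This is precisely what converts the non-symmetric product $\Sm^\perp\Tm^\perp$ into the symmetric triple product $\Sm^\perp\Tm^\perp\Sm^\perp$ and collapses the quadratic form to $\|\Tm^\perp\xv\|^2$, from which both the self-adjointness and the two-sided bound follow immediately.
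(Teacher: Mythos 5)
Your proof is correct and follows essentially the same route as the paper's: both arguments rest on the facts that $\Sm^\perp$ acts as the identity on $\Sc^\perp$, that orthoprojectors are self-adjoint and idempotent, and that the quadratic form collapses to $\|\Tm^\perp\xv\|^2$, which is squeezed between $0$ and $\|\xv\|^2$. Your explicit identification of $\Km$ with the triple product $\Sm^\perp\Tm^\perp\Sm^\perp$ is just a repackaging of the paper's chain of inner-product identities, so there is no substantive difference.
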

\begin{proof}
For any $\uv \in \Sc^\perp$, $\Sm^\perp \uv = \uv$. Since $\Sm^\perp$ is an orthoprojector, it is self-adjoint. Thus, for any $\uv, \vv \in \Sc^\perp$ we have 
$\Braket{\Sm^\perp \Tm^\perp \uv, \vv} = \Braket{\Tm^\perp\uv, \Sm^\perp \vv} 
= \Braket{\Tm^\perp\uv, \vv} = \Braket{\uv, \Tm^\perp\vv} = \Braket{\Sm^\perp\uv, \Tm^\perp\vv} = \Braket{\uv, \Sm^\perp\Tm^\perp\vv}$.

Taking above $\uv=\vv$ proves the both operator bounds, since  $0\leq\Braket{\uv, \Km\uv}=\Braket{\Tm^\perp\uv, \uv}=\Braket{\Tm^\perp\uv, \Tm^\perp\uv}\leq\Braket{\uv,\uv}.$
\end{proof}
We can use CG for solving \eqref{eq:recon_sym_eq_bl} thanks to Proposition \ref{p}. When the solution is not unique, CG converges to the unique normal solution $\hat{\xv}$ (with minimum norm), but it needs to be initialized with some $\xv_0 \in \Sc^\perp$. Since CG is the optimal iterative method, it computes the most efficient signal reconstruction. The solution $\xv_m$ after $m$ iterations of CG satisfies
\begin{equation*}
\xv_m = \argmin_{\xv \in \Sc^\perp \cap \bar{\Kc}_m} \Braket{\left(\xv+\Sm\fv\right), \Tm^\perp \left(\xv+\Sm\fv\right)},
\end{equation*} 
where $\bar{\Kc}_m$ is a plane defined as in \eqref{eq:krylov_hp} with $\Km = \left(\Sm^\perp \Tm^\perp\right)\big| _{\Sc^\perp}$ and $\bv = - \Sm^\perp \Tm^\perp \Sm\fv$.

We note that super-resolution using preconditioned CG has been suggested in \cite{yang2002super} for image reconstruction from multiple low-resolution frames in a video sequence, assuming explicitly known imaging models. 

\subsection*{Convergence Analysis}

Convergence speed of iterative methods for solving the linear system $\Km \xv = \bv$ depends on a condition number $\kappa$.
Since our operator  $\Km = \left(\Sm^\perp \Tm^\perp\right)\big| _{\Sc^\perp}$ is self-adjoint and positive semi-definite, but has 
a possibly non-trivial null-space $\Sc^\perp\cap\Tc$, special considerations apply; see, e.g., \cite{bk94} and references there. As in the proof of Theorem \ref{thm:Lidentity}, we have
$\bv = - \Sm^\perp \Tm^\perp \Sm\fv\in\Sm^\perp\Tm^\perp \Sc$  and can substitute $\Km_\star$, defined as the restriction of $\Km$ to $\Hc_0\cap\Sc^\perp$, for $\Km$ in equation \eqref{eq:recon_sym_eq_bl}. Even though in practical implementations of the Richardson iterative method~\eqref{POCS}, as well as CG, one simply multiplies vectors by $\Sm^\perp \Tm$, the convergence analysis can be 
based on $\Km_\star \xv = \bv$, as soon as all the iterative errors $\xv_m-\xv^*$ stay within the subspace $\Hc_0\cap\Sc^\perp$; see again \cite{bk94}. The latter can be easily achieved by choosing simply $\xv_0 = \zerov$ to initiate the iterative method. 
Then the convergence speed is determined by the spectral condition number $\kappa$ of the operator $\Km_\star$. 

From the proof of Theorem \ref{thm:Lidentity}, the smallest point $\sigma_{{\min}}$ of the spectrum of $\Km_\star$ is $\nu^2=\cos^2 \theta_{\max}$ defined by \eqref{eq:rho-projector-norm} and characterized in Theorem~\ref{thm:kappa}, while the largest point $\sigma_{{\min}}$ is bounded above by one. 
Hence, 
$\|\mathbf{I}-\Km_\star\|\leq  \cos^2 \theta_{\max}$ 
and the spectral condition number ${\kappa}$ of $\Km_\star$ can be bounded as
\begin{equation*}
{\kappa} = \frac{\sigma_{\max}}{\sigma_{\min}} 
\leq \frac{1}{\cos^2 \theta_{\max}}.
\end{equation*}

If the iteration is initialized with $\xv_0 = \zerov$, then the relative error in the solution $\xv_m$ after $m$ iterations of \eqref{POCS} satisfies
\begin{equation*}
\frac{\|\xv_m-\xv^*\|}{\|\xv^*\|} \leq \left(1-\cos^2 \theta_{\max}\right)^m,
\end{equation*}
where $\xv^*$ is the actual normal solution, since
$\|\xv_m-\xv^*\|=\|(\mathbf{I}-\Km_\star)\left(\xv_{m-1}-\xv^*\right)\|\leq \|\mathbf{I}-\Km_\star\|\|\xv_{m-1}-\xv^*\|.$

If CG is initialized with $\xv_0 = \zerov$, then it can be shown that the relative error in the solution $\xv_m$ obtained after $m$ CG iterations satisfies
\begin{equation*}
\frac{\|\xv_m-\xv^*\|_{\Km}}{\|\xv^*\|_{\Km}} \leq 2\left(\frac{\sqrt{\kappa}-1}{\sqrt{\kappa}+1}\right)^m
\leq 2\left(\frac{1-{\cos\,\theta_{\max}}}{1+{\cos\,\theta_{\max}}}\right)^m
;
\end{equation*}  
see, e.g.,\ \cite{bk94,daniel1967conjugate}.
The relative error with POCS and CG decreases geometrically at a rate that depends on $\theta_{\max}$, where 
CG is always faster than POCS. The acceleration provided by CG becomes more pronounced when $\Km_\star$ is ill-conditioned. i.e. $\theta_{\max}$
is not small enough.


\section{Reconstruction of Bandlimited Graph Signals}\label{sec:gs} 
\subsection{Notation and Preliminaries}
An undirected, weighted graph $G = \left(\Vc, \Ec\right)$ is a collection of nodes (or vertices) $\Vc = \{1,2, \ldots, n\}$ which are connected to each other by a set of edges (or links) $\Ec = \{\left(i,j,w_{ij}\right)\}_{i,j \in \Vc}$. $\left(i,j,w_{ij}\right)$ denotes an edge between nodes $i$ and $j$ with weight $w_{ij}$. The adjacency matrix $\Tm$ of the graph is a $n \times n$ matrix with entries $\Tm(i,j) = w_{ij}$. The degree $d_i$ of node $i$ is the sum of the weights of edges incident on $i$, i.e. $d_i = \sum_j w_{ij}$. The degree matrix is a diagonal matrix $\Km = \text{diag}\{d_1, d_2, \ldots, d_n\}$. The combinatorial Laplacian matrix of the graph is defined as $\Lm = \Km - \Tm$. 
We use the normalized form of the Laplacian given by $\Lcb = \Km^{-1/2}\Lm \Km^{-1/2}$. It is a symmetric positive semi-definite matrix and has a set of real eigenvalues $0 = \lambda_1 \leq \lambda_2 \leq \dots \leq \lambda_n \leq 2$ and a corresponding orthogonal set of eigenvectors denoted as $\Um = \{\uv_1,\uv_2,\dots,\uv_n\}$~\cite{Chung-AMS-97}.
A graph signal is a function $f:\Vc \rightarrow \mathbb{R}$ defined on the nodes of the graph, such that $f(i)$ is the value of the signal at node $i$. Thus, a graph signal can also be represented as a vector $\fv$ in $\mathbb{R}^n$, with indices corresponding to the nodes in the graph.
%
We denote a subset of nodes of the graph as a collection of indices $\Sc \subset \Vc$, with $\Scc = \Vc \setminus \Sc$ denoting its complement set. 
A downsampled signal $\fv(\Sc)$, which is a vector of reduced length $|\Sc|$, is obtained by taking samples of $\fv$ on subset $\Sc$ of $\Vc$.
We denote the space of signals which may have non-zero values on $\Sc$ but are identically zero on $\Scc$ by $l_2(\Sc)$.
%

It is known that the eigenvalues and eigenvectors of $\Lcb$ provide a spectral interpretation (i.e. a notion of frequency) for a graph signal, similar to the Fourier transform in traditional signal processing.  The eigenvalues of $\Lcb$ can be thought of as frequencies:
a high eigenvalue implies higher variation in the corresponding eigenvector \cite{Shuman-SPM-13}.
Every graph signal can be represented in the eigenvector basis as $\fv = \sum_i \tilde{f}\left(\lambda_i\right) \uv_i$, where $\tilde{f}\left(\lambda_i\right) = \braket{\fv,\uv_i}$ (or more compactly, $\tilde{\fv} = \Um^T\fv$) is the \emph{Graph Fourier Transform} (GFT). 
In this setting, an \emph{$\omega$-bandlimited signal} on a graph is defined as a signal with zero GFT coefficients at frequencies greater than its bandwidth $\omega$, i.e. its spectral support is restricted to the set of frequencies $[0,\omega]$. In other words, an $\omega$-bandlimited signal has its energy contained within the subspace spanned by the eigenvectors of the Laplacian with eigenvalues less than $\omega$.
The space of all $\omega$-bandlimited signals is known as the \emph{Paley-Wiener} space and is denoted by $PW_\omega(G)$ \cite{Pesenson-AMS-08}. 

\subsection{Reconstruction Problem}
We consider the problem of reconstructing a graph signal $\fv$ from its subsampled version $\fv(\Sc)$ under the assumption that $\fv$ is band-limited, i.e. $\fv \in PW_{\omega}(G)$. 
%
Thus, $l_2(\Sc) = \Sc$ is the sampling subspace and $PW_\omega(G) = \Tc$ is the reconstruction subspace. Under a permutation which groups together nodes in $\Sc$ (and $\Scc$), we can represent the projectors $\Sm$ and $\Sm^\perp$ of $l_2(\Sc)$ and $l_2(\Scc)$ as
\begin{equation}
\Sm = 
\begin{pmatrix}
\mathbf{I} & \zerov \\
\zerov & \zerov
\end{pmatrix}, \quad
\Sm^{\perp} = 
\begin{pmatrix}
\zerov & \zerov \\
\zerov & \mathbf{I}
\end{pmatrix},
\end{equation}
so that $\Sm\fv = [\fv^\top(\Sc), \zerov]^\top$. $\Sm$ preserves the samples of a signal on $\Sc$ and sets the samples on $\Scc$ to zero. 

The projector $\Tm$ for $PW_{\omega}(G)$ is a low-pass filter which can be written in the graph spectral domain as 
\begin{equation}
\Tm = \sum_{i=1}^n h(\lambda_i) \uv_i\uv_i^t, \quad \text{where } 
h(\lambda) = 
\begin{cases}
   1 & \text{if } \lambda \leq \omega \\
   0 & \text{if } \lambda > \omega 
\end{cases}.
\end{equation}
The projector on the orthogonal complement of $PW_{\omega}(G)$ will be a high pass filter, $\Tm^{\perp} = \mathbf{I} - \Tm$.
The condition under which a bandlimited graph signal can be uniquely recovered from its samples on $\Sc$ is given in~\cite{Anis-ICASSP-14} as $PW_{\omega}(G) \cap l_2(\Scc) = \{\zerov\}$, which is equivalent to the one presented in Theorem~\ref{thm:uniqueness}.



\section{Numerical illustrations}
In this section, we apply the proposed reconstruction approaches to the image magnification problem.

\subsection{Problem set-up}
Let $\fv$ be the high resolution image of size $w \times w$. We assume that the samples (i.e., the low resolution version) of $\fv$ are obtained by a sampling operator $\Bm^*_\Sc$ which downsizes the image by a factor of $r$ using $r \times r$ averaging and then downsampling. Its adjoint $\Bm_\Sc$ upsamples a low resolution image by simply copying each pixel value in a $r \times r$ block to get back a $w \times w$ image. Thus, the sampling subspace $\Sc \subset \mathbb{R}^{w \times w}$ is a space of images which take a constant value in each $r \times r$ block. Note that $\dim \Sc = w/r$. The projection $\Sm\fv = \Bm_\Sc\Bm^*_\Sc \fv$ of $\fv$ on $\Sc$ is obtained replacing the values in each of its $r\times r$ blocks by their average. Our goal is to estimate $\fv$ having the input signal $\Sm\fv$.


We know that the DCT captures most of the energy of natural images into a first few low frequency coefficients. Thus, a reasonable guiding subspace $\Tc$ is a space of images which are bandlimited to the lowest $k \times k$ frequencies. The projector $\Tm$ for this subspace is simply a low pass filter which sets the higher frequency components of the image to zero. $\Tm$ can also be decomposed as $\Bm_\Tc\Bm^*_\Tc$. Here $\Bm^*_\Tc\fv$ involves taking the DCT of $\fv$ and setting the high frequency coefficients to zero whereas $\Bm_\Tc$ converts these DCT coefficients to spatial domain to get a low frequency image. 
%

In our experiments, we study the effect of $\dim \Tc = k \times k$ on quality of reconstruction.  
We define $k_{\text{scale}} = (w / r)/k$ which compares the dimensionality of the sampling and guiding subspace. $k_{\text{scale}} < 1$ corresponds to an undersampling problem,  while $k_{\text{scale}} > 1$, corresponds to an oversampling scenario.
A shorthand $\fv_d$ is used to denote the low resolution image $\Bm^*_\Sc \fv$ and $\fv_{du}$ to denote the projection $\Sm\fv$.
%
We also consider the scenario where the samples are contaminated by noise, i.e., $\fv^n_{d} = \Bm^*_\Sc \fv + \ev$, where $\ev$ is i.i.d. Gaussian noise. As a result, the input image becomes $\fv^n_{du} = \Bm_\Sc \fv_d^n$.

\subsection{Approaches under study}

We compare four reconstruction approaches, namely, the consistent reconstruction $\hat{\fv}_c$, the generalized reconstruction $\hat{\fv}_g$, the regularized reconstruction $\hat{\fv}_\alpha$ and the minimax regret reconstruction $\hat{\fv}_m = \Tm\fv_{du}$.

Consistent reconstruction $\hat{\fv}_c$ is calculated as $\hat{\fv}_c = \hat{\xv} + \fv_{du}$, where $\hat{\xv}$ is the solution to problem $\Sm^\perp \Tm^\perp \xv = - \Sm^\perp \Tm^\perp \fv_{du}$ obtained using the conjugate gradient method. 

Generalized reconstruction $\hat{\fv}_g$ is computed using three different implementations. In the first implementation, we solve the problem $\Bm^*_\Tc \Sm \Bm_\Tc \yv = \Bm^*_\Tc \fv_{du}$ using a conjugate gradient method to obtain $\hat{\yv}$. The final reconstruction is then given by $\hat{\fv}_{g1} = \Bm_\Tc \hat{\yv}$. The second implementation uses the projector $\Tm$ instead of the sampling operator $\Bm^*_\Tc$, and the reconstruction $\hat{\fv}_{g2}$ is the conjugate gradient solution to the problem $\Tm \Sm \Tm \fv = \Tm \fv_{du}$. In the third implementation, $\hat{\fv}_c$ is supposed to be available, and the generalized reconstruction is then computed by $\hat{\fv}_{g3} = \Tm \hat{\fv}_c$. Mathematically, it can be proved that all these implementations would produce identical reconstructions when the conjugate gradient algorithm converges. However, these methods are algorithmically distinct and may converge at different rates as shown in the tests later.

Regularized reconstruction $\hat{\fv}_r$, as posed in \eqref{eq:reg_problem}, can be computed by solving $(\Sm + \rho \Tm^\perp) \fv = \fv_{du}$ via conjugate gradient. If $\hat{\fv}_c$  and $\hat{\fv}_g$ are available, we can simply take the convex combination $\hat{\fv}_\alpha = \alpha \hat{\fv}_c + (1-\alpha) \hat{\fv}_g$ with $\rho = ({1-\alpha})/{\alpha}$ and because of Theorem~\ref{thm:reg}, we have $\hat{\fv}_r = \hat{\fv}_\alpha$. Although these two solutions are mathematically equivalent (upon convergence of conjugate gradient), they are not similar algorithmically and show different behavior and robustness against noise for a small fixed number of CG steps. 

%

\subsection{Experiments and observations}

We conduct four sets of experiments to study different aspects of the reconstruction methods such as the effect of under/oversampling, effect of noise and convergence behavior.

\subsubsection{Experiment 1}

In the first experiment, we take a noise free signal $\fv_{du}$ as input and observe the PSNR of reconstruction for different methods as the value of $k_{\text{scale}}$ (i.e., amount of under/oversampling) varies. For computing $\fv_\alpha$, we first fix $\alpha = 0.7$. Fig.~\ref{fig:e1}(a) shows the plot of PSNR against $k_{\text{scale}}$. We observe that in the undersampling regime, i.e. when $k_{\text{scale}} < 1$, $\hat{\fv}_c$ equals $\hat{\fv}_g$ and performs better than $\hat{\fv}_m$. In case of oversampling, however, it shows $\hat{\fv}_c$ offers better PSNR than $\hat{\fv}_g$  which, in turn, performs better than $\hat{\fv}_m$.
This is because sampling is noise free and a method which keeps the samples unchanged is expected to perform better. 
Example reconstructed images $\hat{\fv}_g$ and $\hat{\fv}_c$ with $\dim \Hc = 256 \times 256$, $\dim \Sc = 128 \times 128$ and $k_{\text{scale}}=4$ are shown in Fig.~\ref{fig:e2}
The effect of $\alpha$ on the reconstruction quality is illustrated in Fig.~\ref{fig:e1}(b). Once again we observe that as $\alpha$ increases (i.e., the samples are trusted more), the reconstruction quality improves.


\begin{figure}
\centering
       \begin{subfigure}[b]{0.2\textwidth}\center
               \includegraphics[height=.125\textheight]{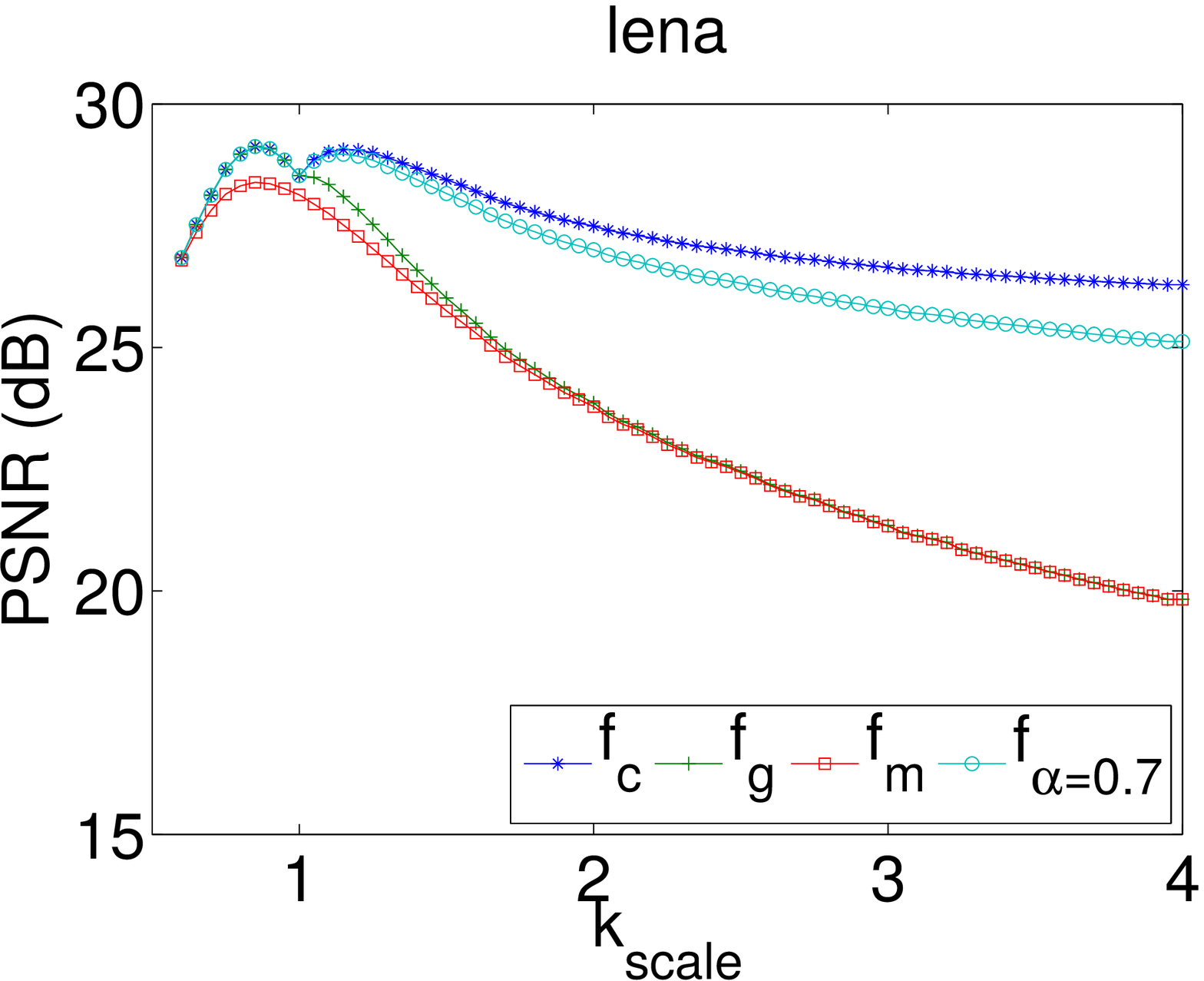}
               \caption{$\alpha=0.7$}
       \end{subfigure}
       \hspace{0.03cm}
       \begin{subfigure}[b]{0.2\textwidth}\center
               \includegraphics[height=.125\textheight]{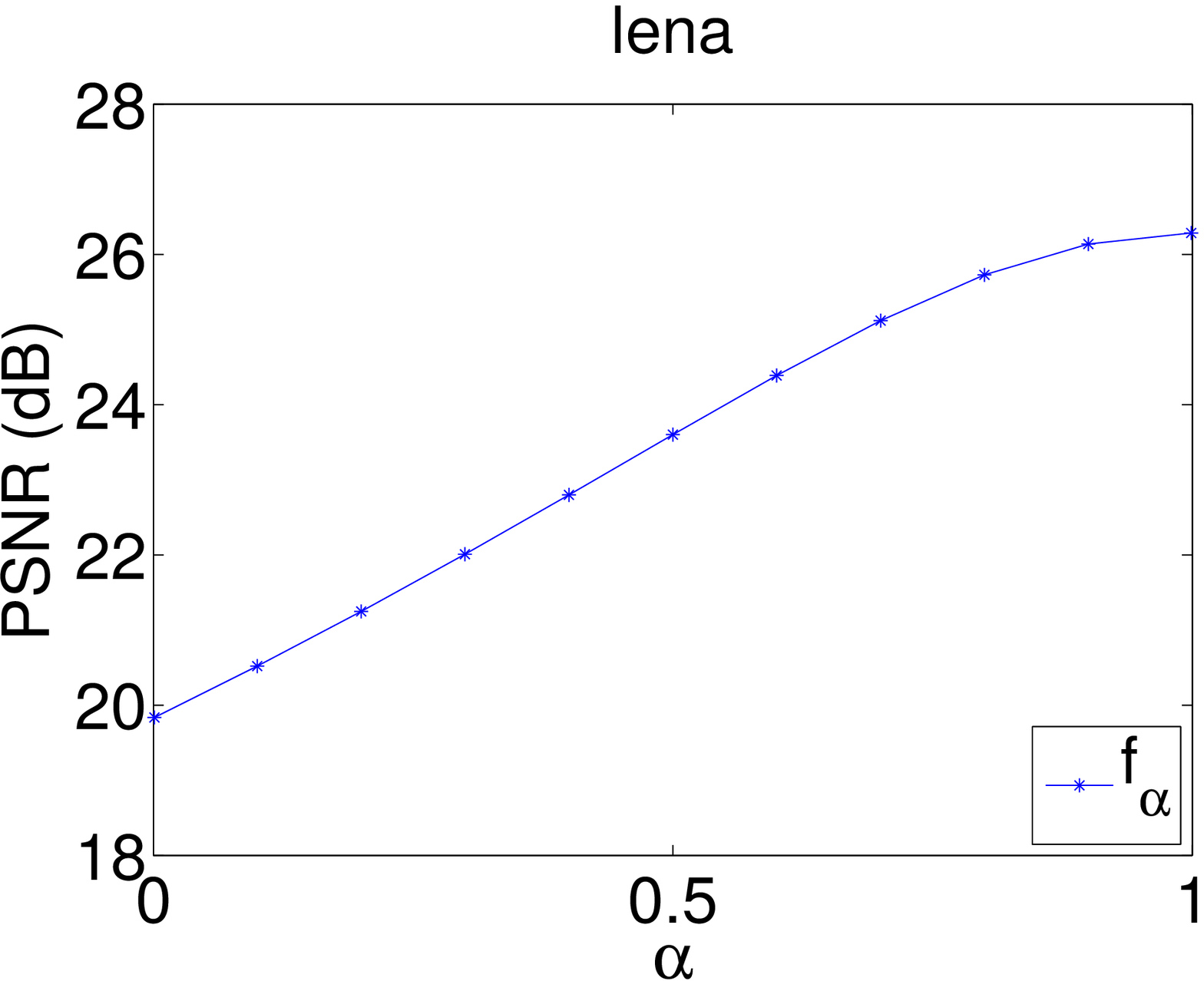}
               \caption{$k_{\text{scale}} = 4$}
       \end{subfigure}
\caption{Effects of $k_{\text{scale}}$ and $\alpha$ on noise-free reconstruction}
\label{fig:e1}
\end{figure}

\begin{figure}
\centering
       \begin{subfigure}[b]{0.2\textwidth}\center
               \includegraphics[height=.125\textheight]{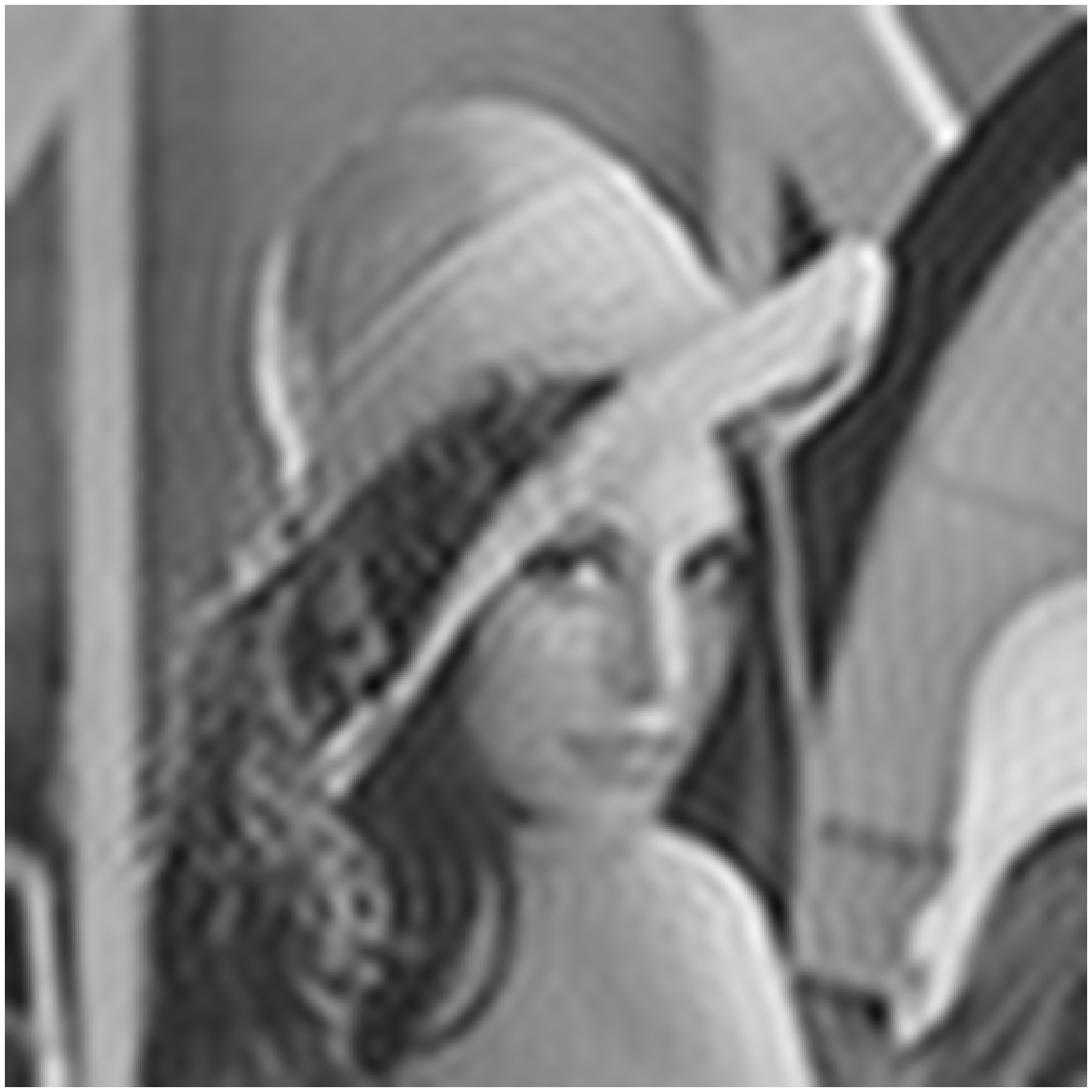}
               \caption{$\hat{\fv}_g$, PSNR=19.83dB}
       \end{subfigure}
       \begin{subfigure}[b]{0.2\textwidth}\center
               \includegraphics[height=.125\textheight]{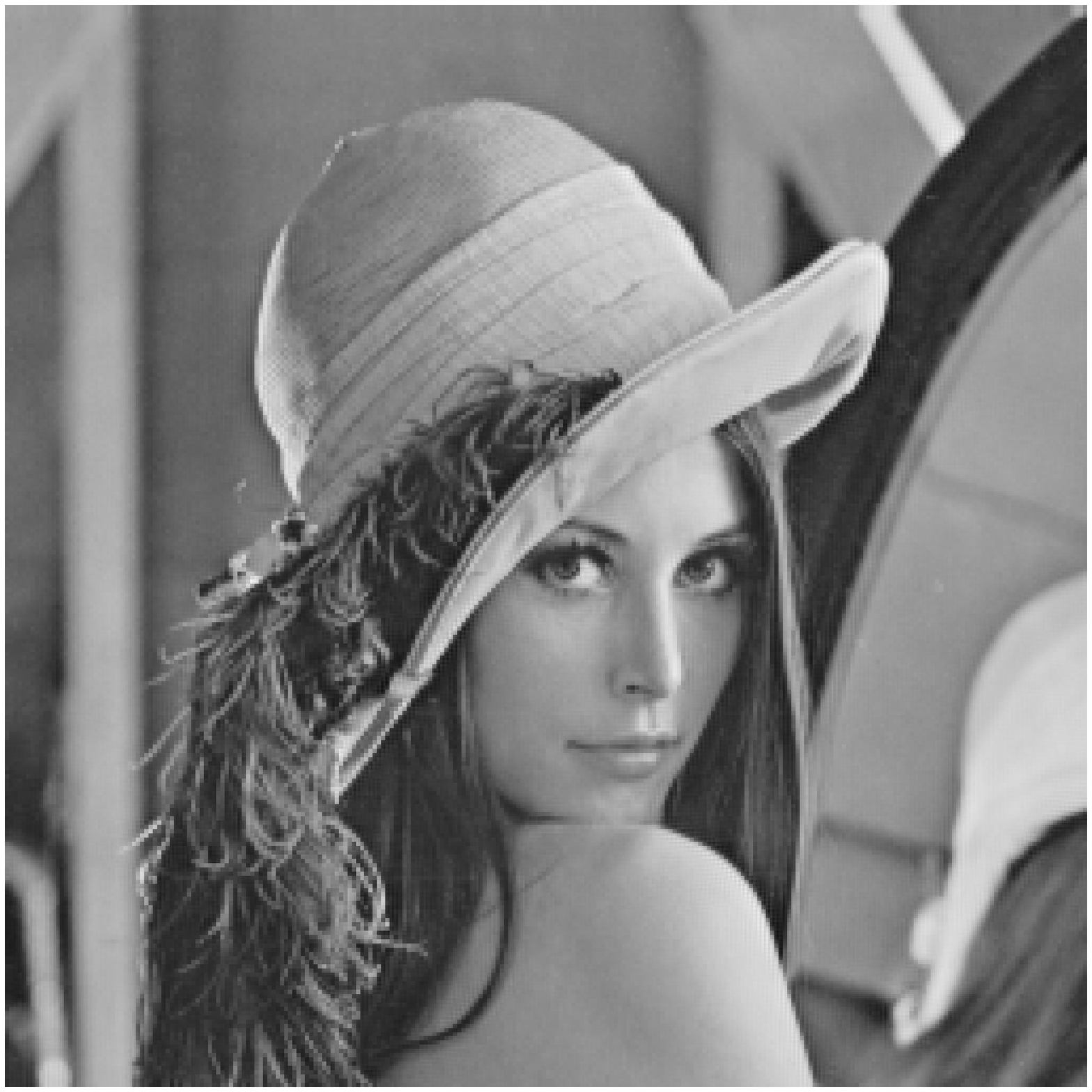}
               \caption{$\hat{\fv}_c$, PSNR=26.29dB}
       \end{subfigure}
\caption{Noise-free reconstruction with $\dim \Sc=128\times 128$, $\dim \Tc = 32 \times 32$, $k_{\text{scale}} = 4$}
\label{fig:e2}
\end{figure}

\subsubsection{Experiment 2}
%
In this experiment, we assume that the input $\fv_{du}^n = \Sm \fv + \ev$ is noisy, where $\ev$ is i.i.d. Gaussian  with zero mean and variance $0.001$.

\begin{figure}
\centering
       \begin{subfigure}[b]{0.2\textwidth}\center
               \includegraphics[height=.125\textheight]{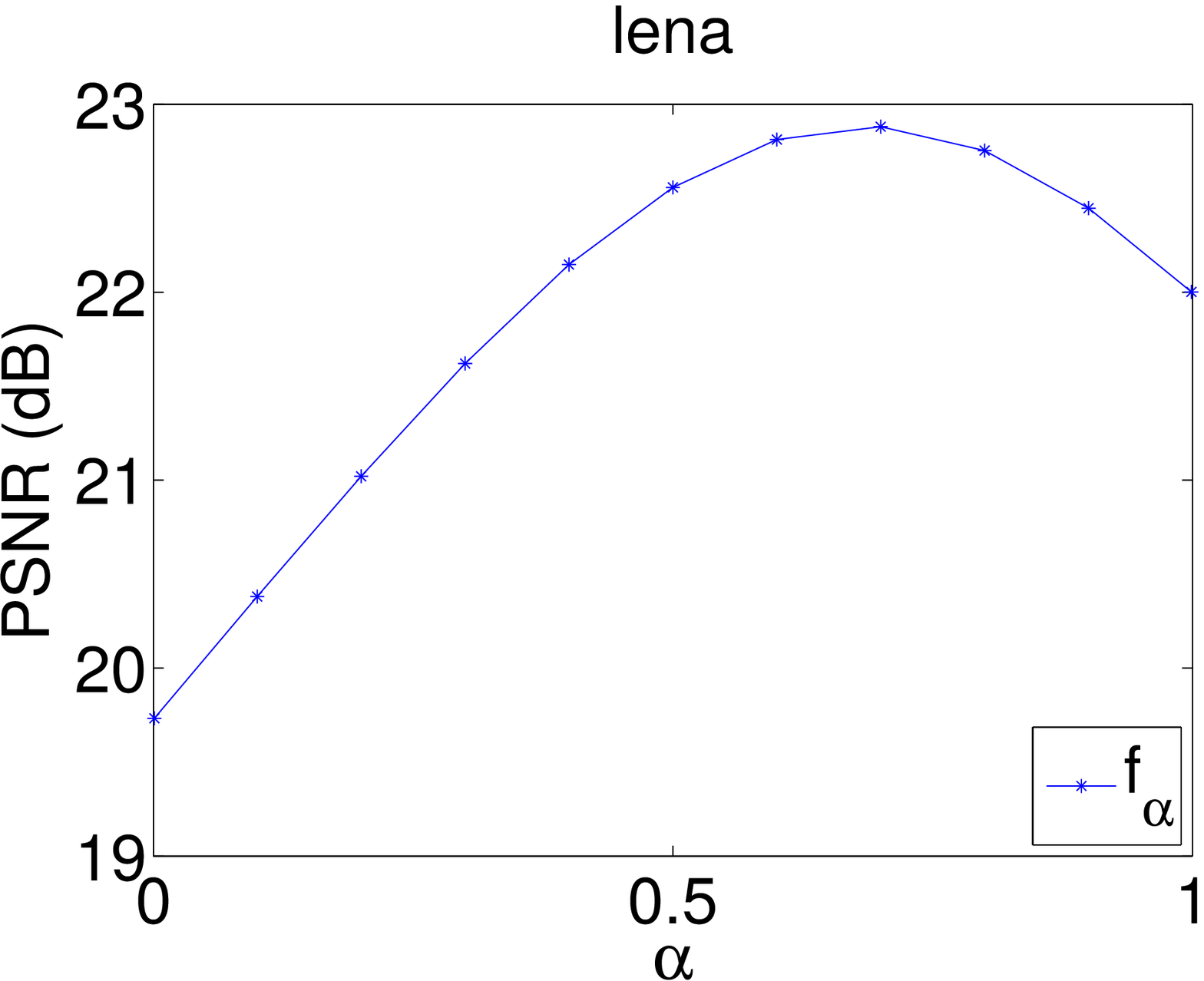}
               \caption{$k_{\text{scale}} = 4$}
       \end{subfigure}
       \hspace{0.03cm}
       \begin{subfigure}[b]{0.2\textwidth}\center
               \includegraphics[height=.125\textheight]{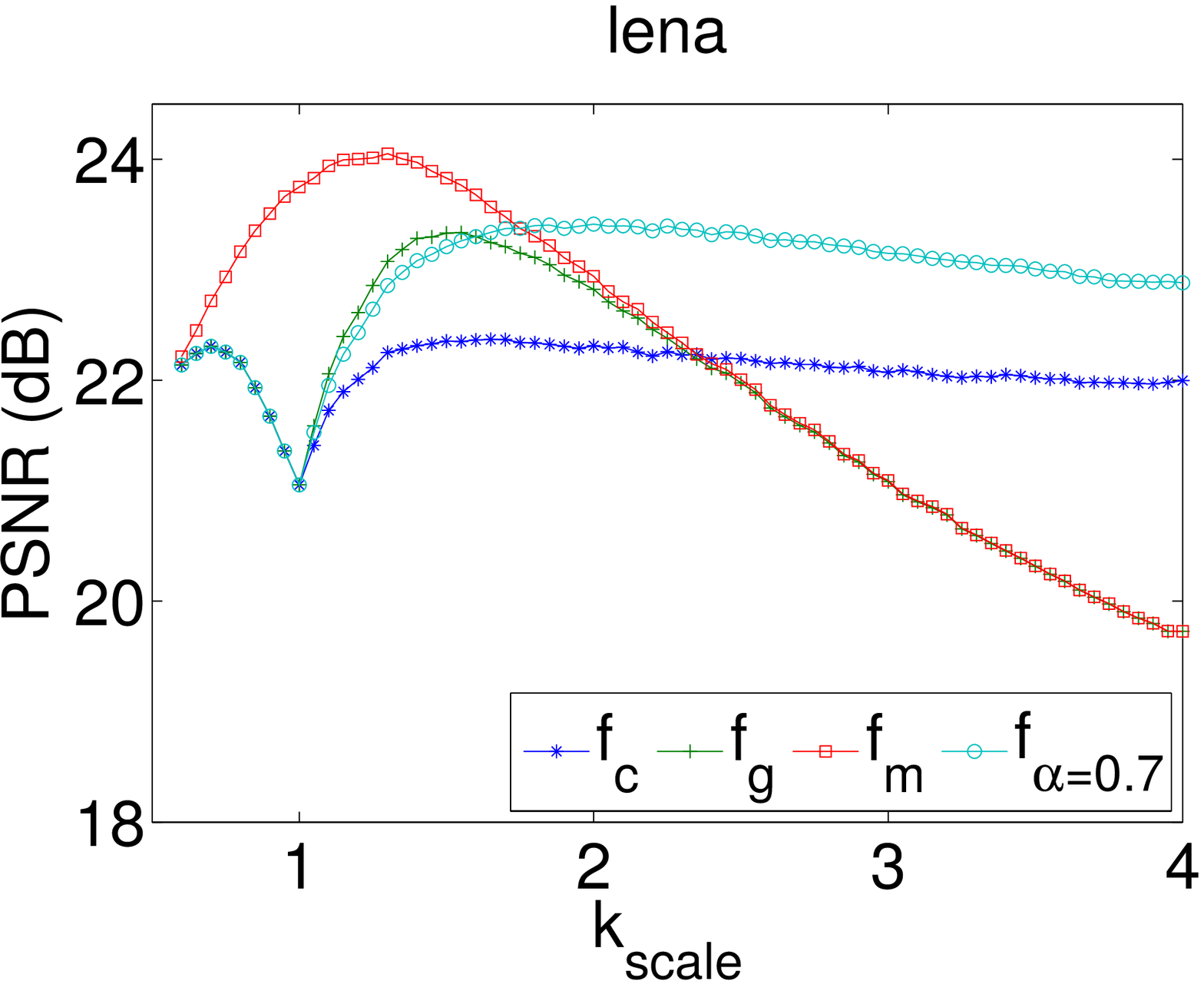}
               \caption{$\alpha=0.7$}
       \end{subfigure}
\caption{Effects of $k_{\text{scale}}$ and $\alpha$ on noisy reconstruction}
\label{fig:e3}
\end{figure}

We first focus on the performance of $\hat{\fv}_\alpha$ as $\alpha$ varies in case of oversampling by a factor $k_{\text{scale}} = 4$. From the results shown in Fig. \ref{fig:e3}(a), the best reconstruction is obtained with $\alpha = 0.7$. This observation agrees with the theoretically suggested optimal value $\alpha_{\text{opt}} = 1 - \|\ev\|^2/\|\hat{\fv}_g - \hat{\fv}_c\|^2 = 0.7$.  

We next analyze performance of $\hat{\fv}_g$, $\hat{\fv}_c$, $\hat{\fv}_m$ and $\hat{\fv}_{\alpha=0.7}$ for different values of $k_\text{scale}$, in Fig. \ref{fig:e3}(b). The minimax regret reconstruction $\hat{\fv}_m = \Tm\fv_{du}$, in contrast to the noise-free case displayed in Fig.~\ref{fig:e2}, 
produces the best PSNR  if $k_{\text{scale}}< 1.8$, which can be easily explained since $\Tm$ is a low-pass filter, performing image denoising. It can be thus recommended to combine the reconstruction procedure with pre- and possibly post-denoising, e.g., using \cite{KnyazevM15e}. 

As opposed to the previous noise free experiment, we notice in Fig.~\ref{fig:e3}(b) that $\hat{\fv}_c$ cannot always beat $\hat{\fv}_g$ when noise is present. $\hat{\fv}_c$ only performs better than $\hat{\fv}_g$ in the heavy oversampling regime, in this example ($k_{\text{scale}} > 2.5$). This observation indicates that, in case of slight oversampling, the noise filtering effect of the projection on guiding subspace offsets the loss due to sample inconsistency. On the other hand, for heavy oversampling, the sample consistency requirement is more important. 
We also observe that $\hat{\fv}_\alpha$ which is a weighted combination of $\hat{\fv}_c$ and $\hat{\fv}_g$ can beat both $\hat{\fv}_c$ and $\hat{\fv}_g$ for  $k_{\text{scale}} > ~1.5$ for this example image. This is because it offers some noise suppression while not deviating much from the consistency requirement. 
%
Fig.~\ref{fig:e4} shows an example of the noisy input image and reconstructed images.

\begin{figure}
\centering
       \begin{subfigure}[b]{0.2\textwidth}\center
               \includegraphics[height=.125\textheight]{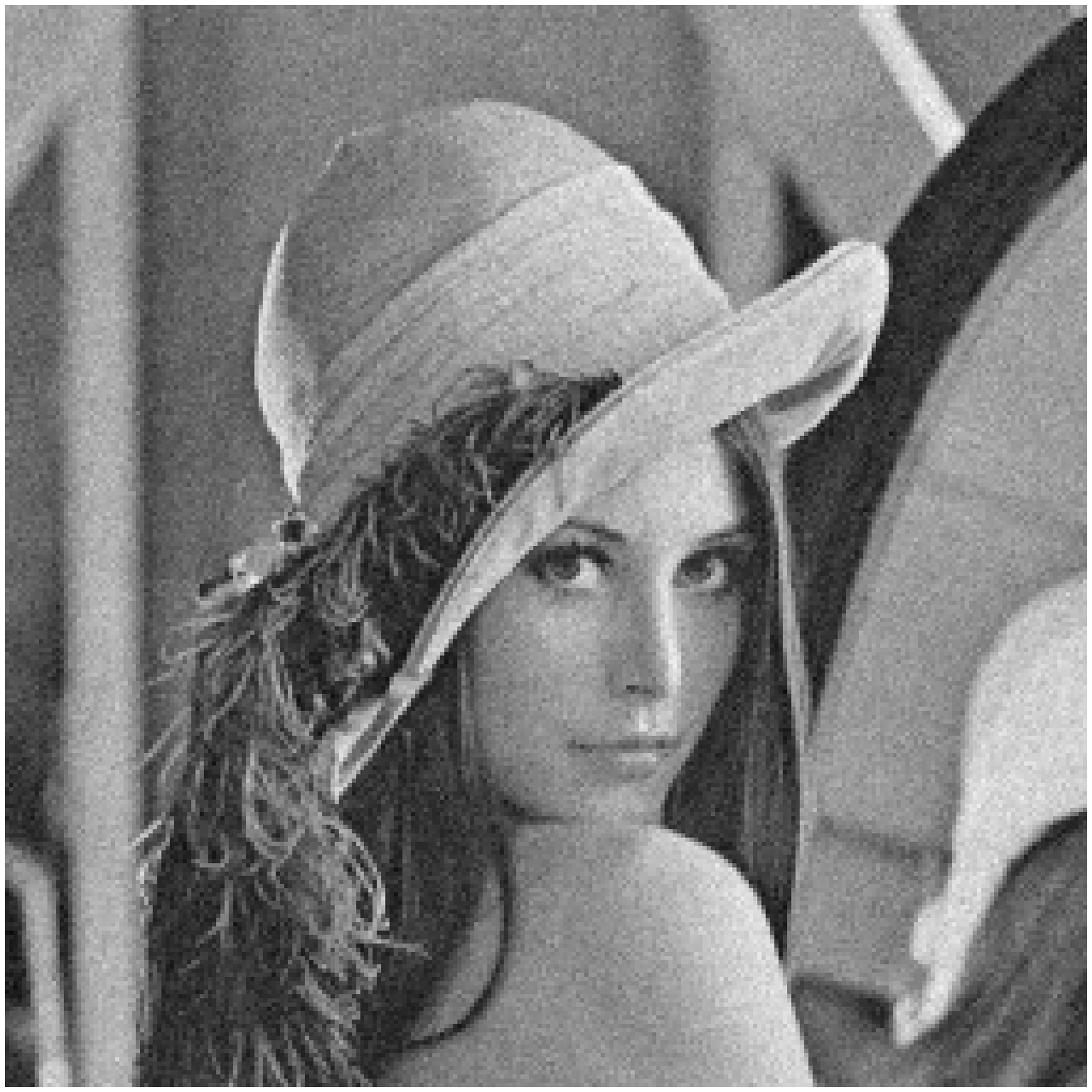}
               \caption{$\fv_{du}^*$, PSNR=21.69dB}
       \end{subfigure}
       \begin{subfigure}[b]{0.2\textwidth}\center
               \includegraphics[height=.125\textheight]{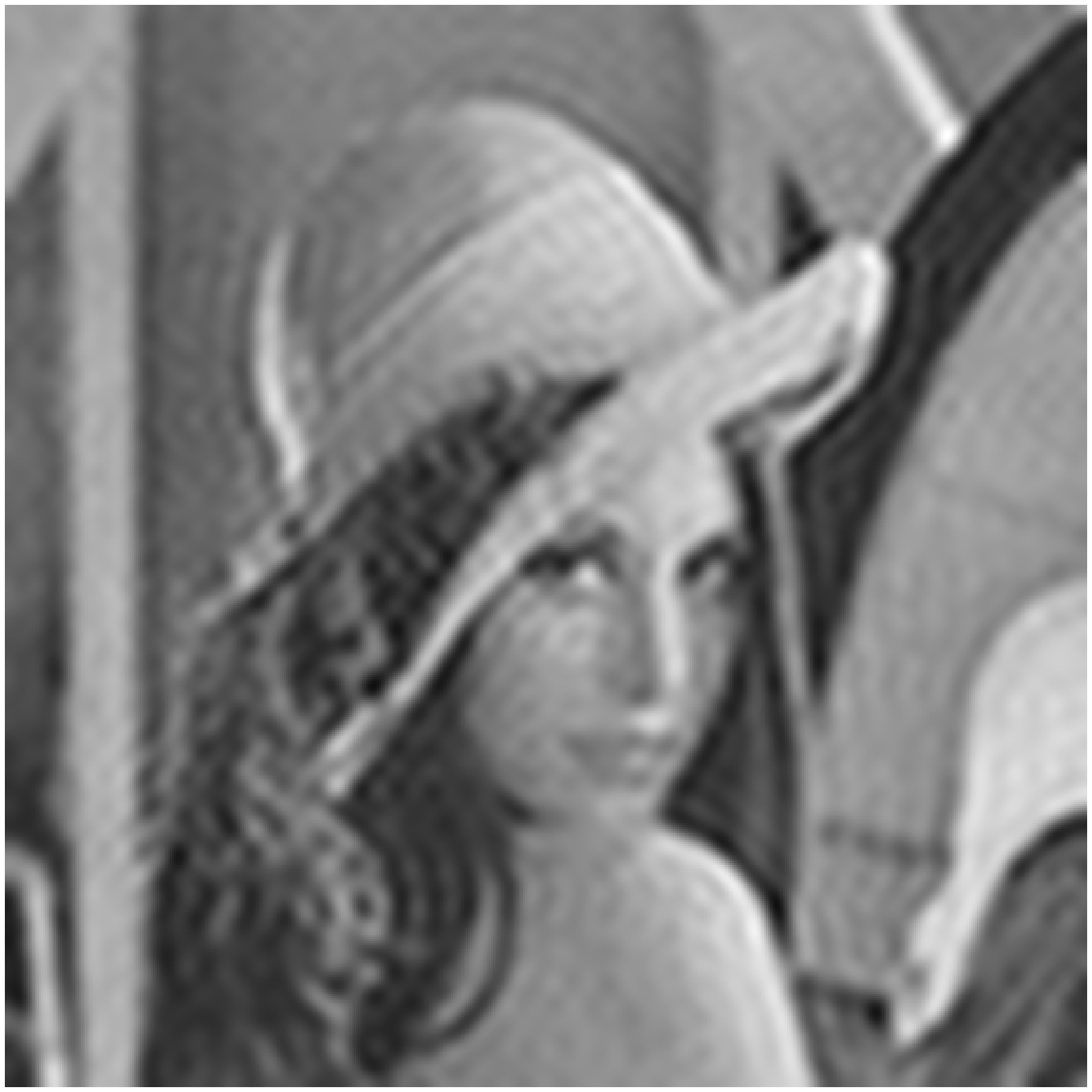}
               \caption{$\hat{\fv}_g$, PSNR=19.73dB}
       \end{subfigure}

       \begin{subfigure}[b]{0.2\textwidth}\center
               \includegraphics[height=.125\textheight]{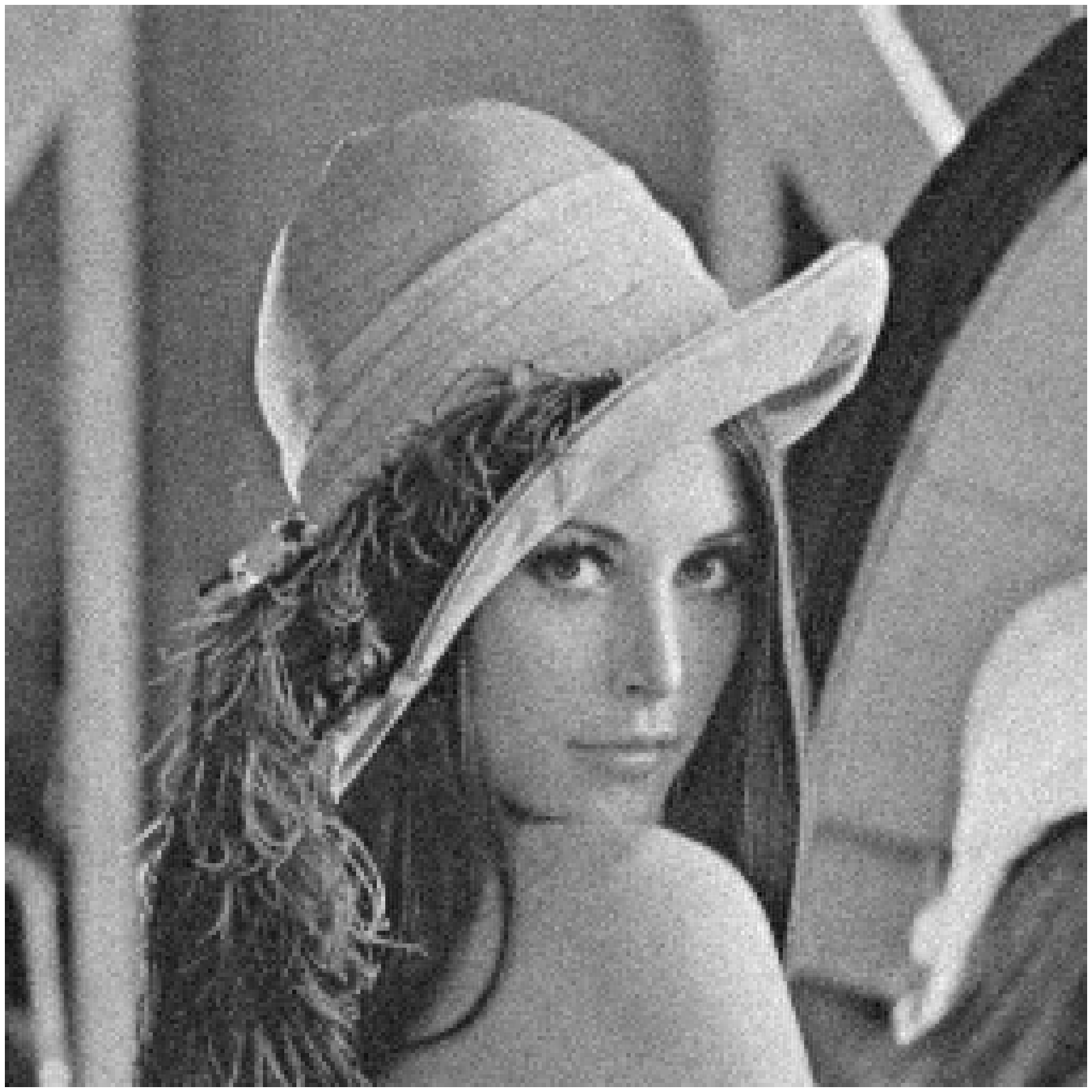}
               \caption{$\hat{\fv}_c$, PSNR=22.00dB}
       \end{subfigure}
       \begin{subfigure}[b]{0.2\textwidth}\center
               \includegraphics[height=.125\textheight]{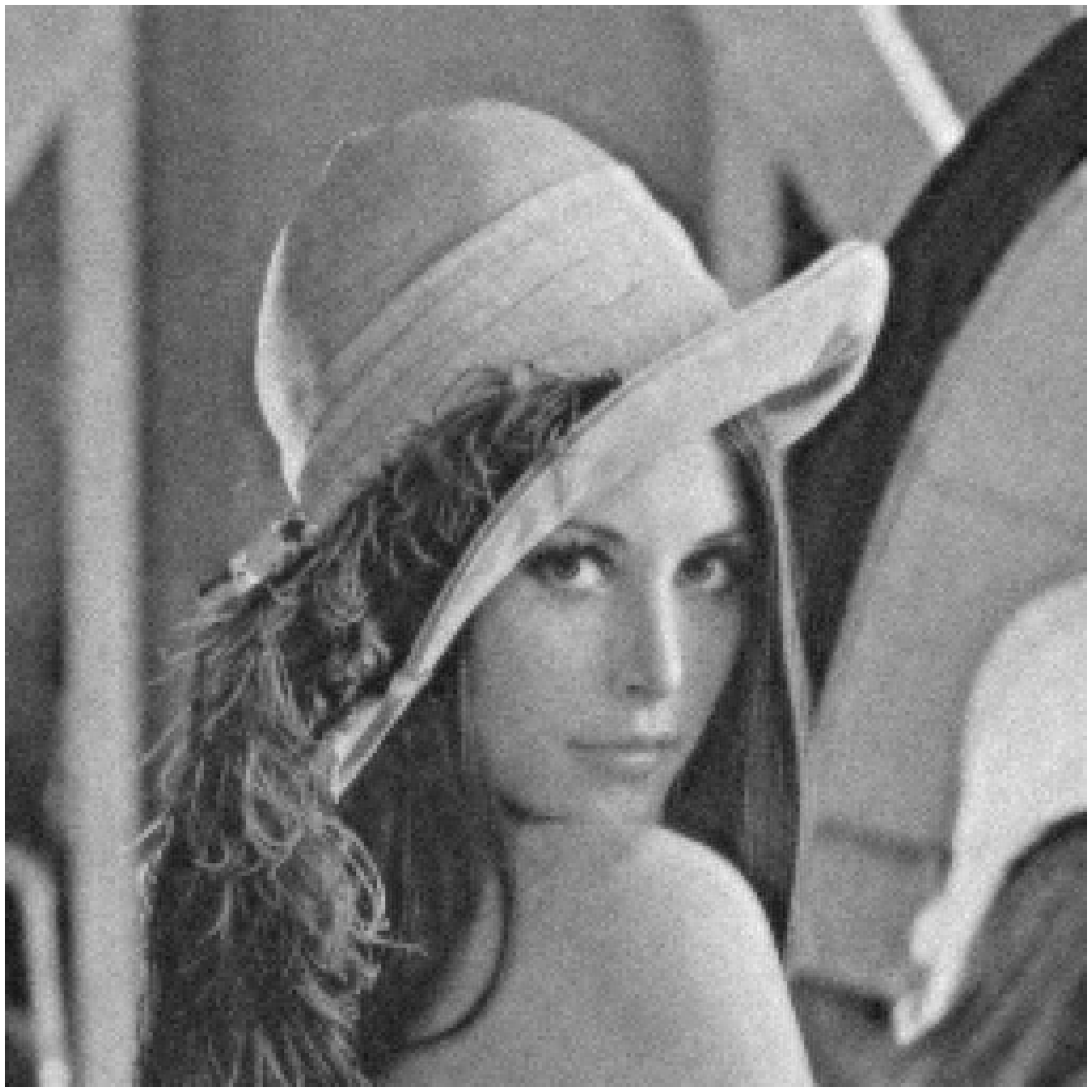}
               \caption{$\hat{\fv}_{\alpha=0.7}$, PSNR=22.88dB}
       \end{subfigure}
\caption{Reconstruction results with noisy inputs, $k_{\text{scale}} = 4$}
\label{fig:e4}
\end{figure}

\subsubsection{Experiment 3}

In this experiment, we study the relationship between $\hat{\fv}_\alpha$ and $\hat{\fv}_r$ in case of noisy inputs. 
Numerical results confirm that if the parameter $\rho$ or $\alpha$ is known beforehand and are fixed, the two approaches, despite having different implementations, give identical reconstructions. 
%
However, if the parameter $\rho$ or $\alpha$ needs to be determined on the fly in an application, $\hat{\fv}_\alpha$ is clearly favorable than $\hat{\fv}_r$ in terms of computation complexity.
For computing the whole set of solution $\{\hat{\fv}_\alpha\}$, for $\alpha \in (0, 1)$, only \emph{one} least squares problem needs to be solved which is to compute $\hat{\fv}_c$. All other candidate solution points can be calculated by $\alpha \hat{\fv}_c + (1-\alpha)\Tm\hat{\fv}_c$ since $\hat{\fv}_g = \Tm \hat{\fv}_c$.
On the other hand to search through the full set of $\{\hat{\fv}_r\}$, for $\rho \in (0, \infty)$, \emph{one} least squares problem needs to be solved for \emph{each} candidate solution which may not be computationally feasible.

\subsubsection{Experiment 4}

In the previous experiments, all the conjugate gradient algorithms used to solve a least squares problem are allowed to converge. The purpose of this experiment is to compare how the reconstruction methods perform each iteration of conjugate gradient.

As described before, $\hat{\fv}_g$ has three different implementations, represented by $\hat{\fv}_{g1}$, $\hat{\fv}_{g2}$, and $\hat{\fv}_{g3}$. In Fig.~\ref{fig:e5}, with noisy input, we compare the three implementations with maximum number of CG iterations, $MaxIter$,  set to 1 and 2. We observe that $\hat{\fv}_{g1} = \hat{\fv}_{g2}$ in both cases. Although $\hat{\fv}_{g3}$ is different when the number of iterations is 1, as seen in Fig.~\ref{fig:e5}(a), the difference becomes very minor when the number of iterations equals 2. This observation also holds for noise free inputs.

\begin{figure}
\centering
       \begin{subfigure}[b]{0.2\textwidth}\center
               \includegraphics[height=.125\textheight]{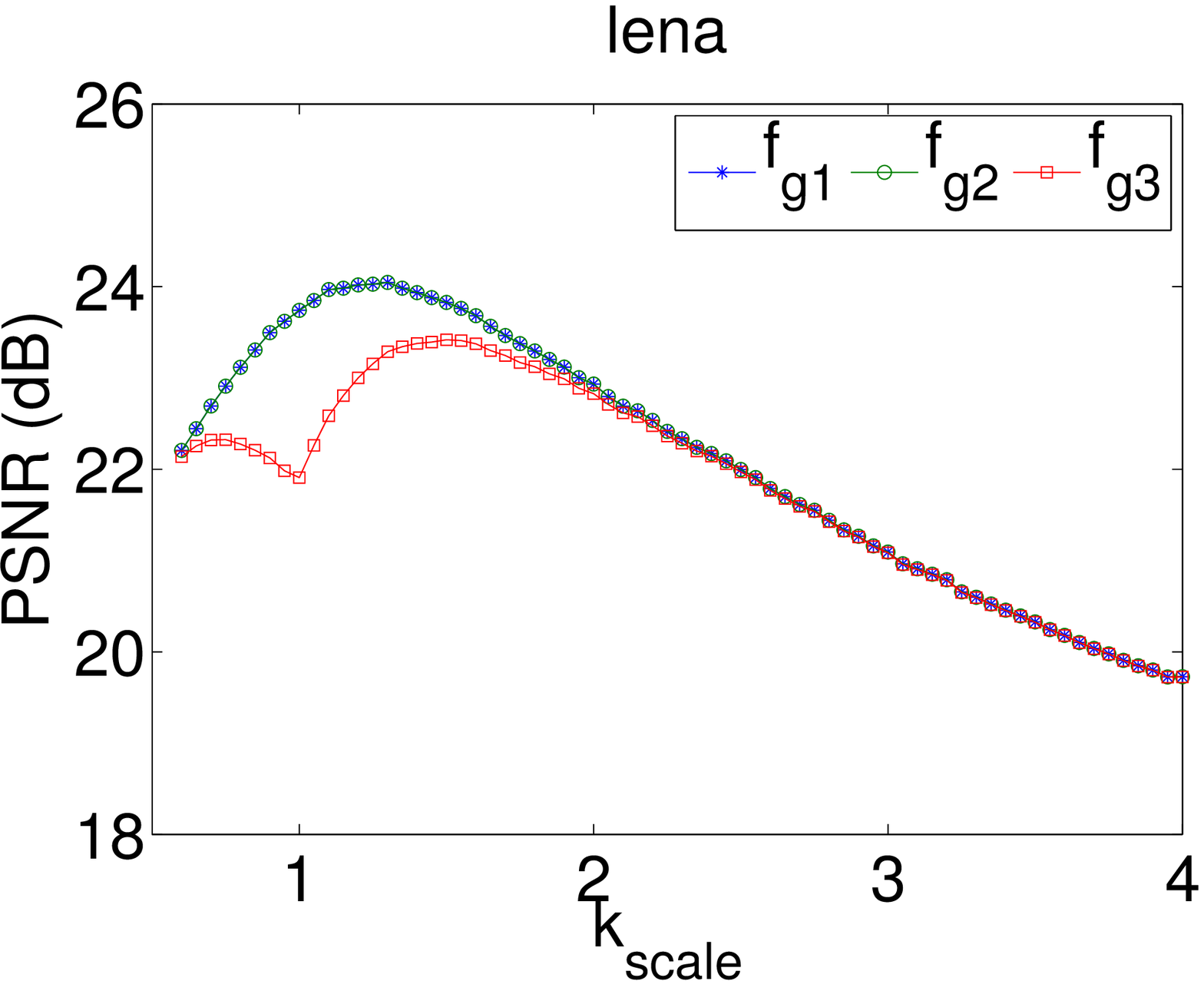}
               \caption{$MaxIter=1$}
       \end{subfigure}
       \hspace{0.03cm}
       \begin{subfigure}[b]{0.2\textwidth}\center
               \includegraphics[height=.125\textheight]{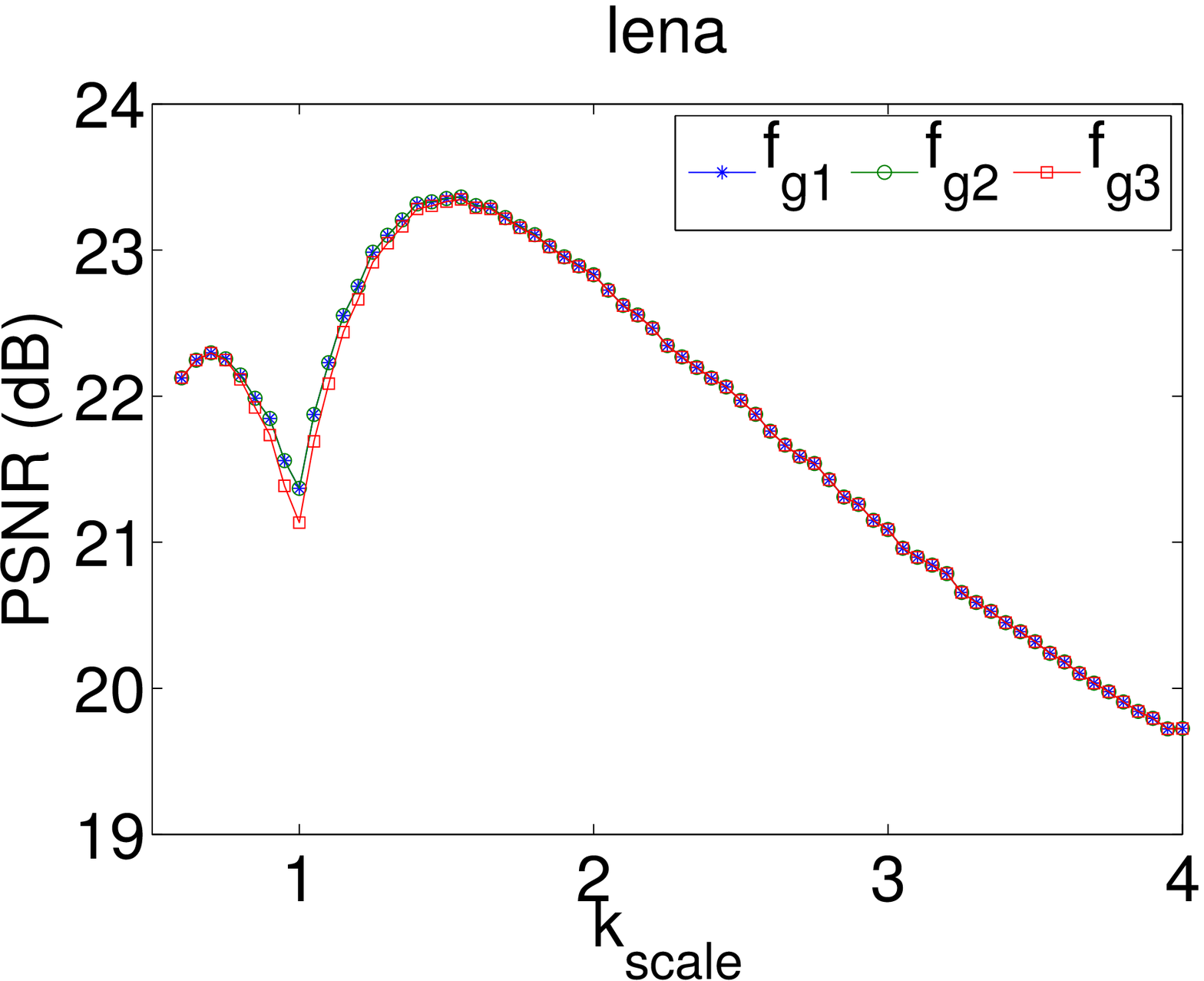}
               \caption{$MaxIter=2$}
       \end{subfigure}
\caption{Performance of different implementations of $\hat{\fv}_{g}$}
\label{fig:e5}
\end{figure}

Since $\hat{\fv}_g$ has three implementations, $\hat{\fv}_\alpha$ can also have different corresponding implementations, given by $\hat{\fv}_{\alpha i}=\alpha \hat{\fv}_c + (1-\alpha) \hat{\fv}_{gi}$ with $i=1,2,3$. The performance of all the reconstruction methods with different implementations is shown in Fig. \ref{fig:e6}. All the algorithms are configured to use $MaxIter$ number of CG iterations (except $\hat{\fv}_m$ since it does not need to solve a least square problem). $\hat{\fv}_{\alpha 2}$ is omitted as it is always equal to $\hat{\fv}_{\alpha 1}$. 
We observe that $\hat{\fv}_{\alpha1} = \hat{\fv}_{\alpha2}$ performs better than $\hat{\fv}_{\alpha3}$. In case of heavier oversampling, $\hat{\fv}_\alpha$ is more favorable than $\hat{\fv}_r$. Finally, $\hat{\fv}_r$  shows worse performance compared to other approaches.

\begin{figure}
\centering
       \begin{subfigure}[b]{0.2\textwidth}\center
               \includegraphics[height=.125\textheight]{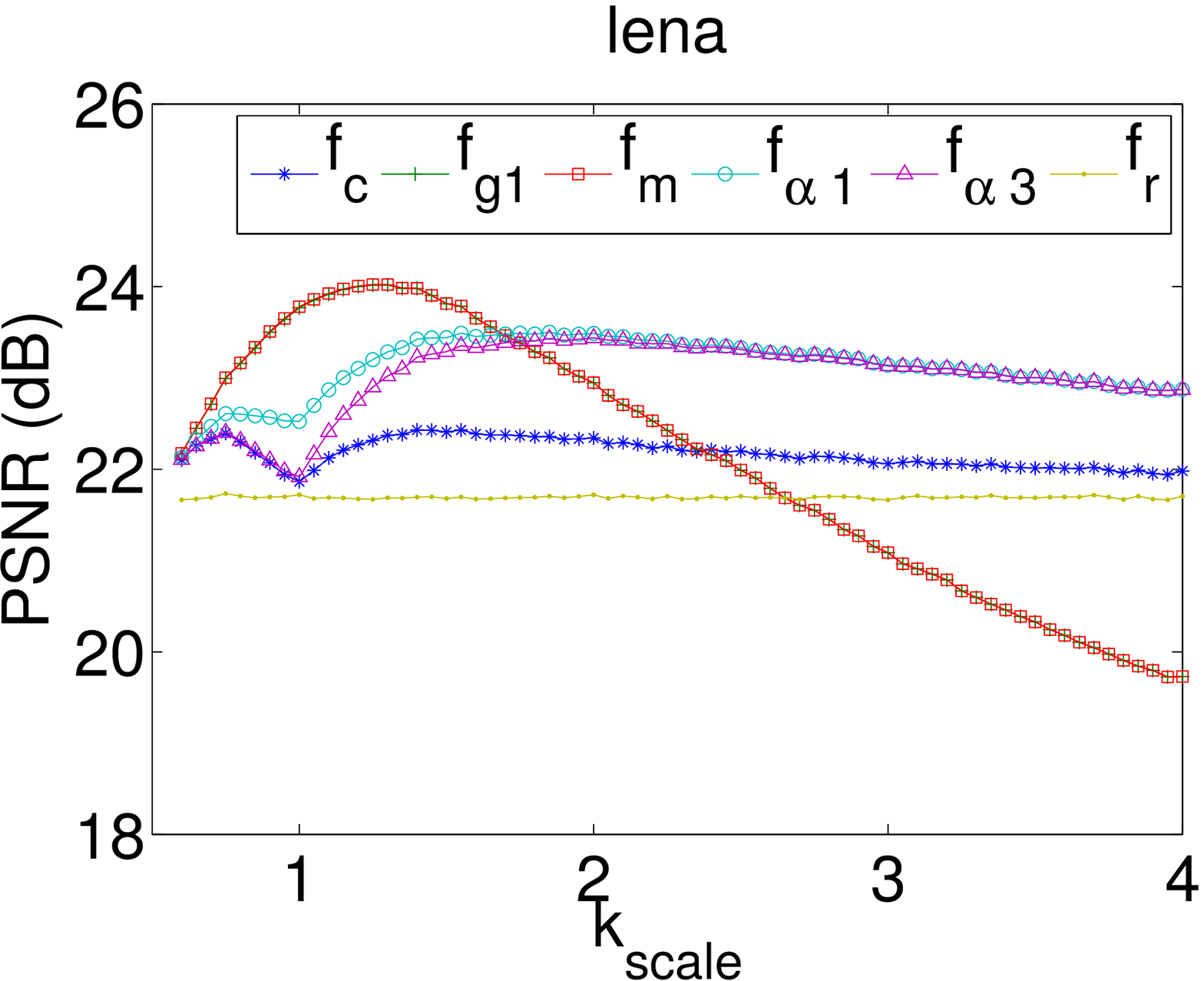}
               \caption{$MaxIter=1$}
       \end{subfigure}
       \hspace{0.03cm}
       \begin{subfigure}[b]{0.2\textwidth}\center
               \includegraphics[height=.125\textheight]{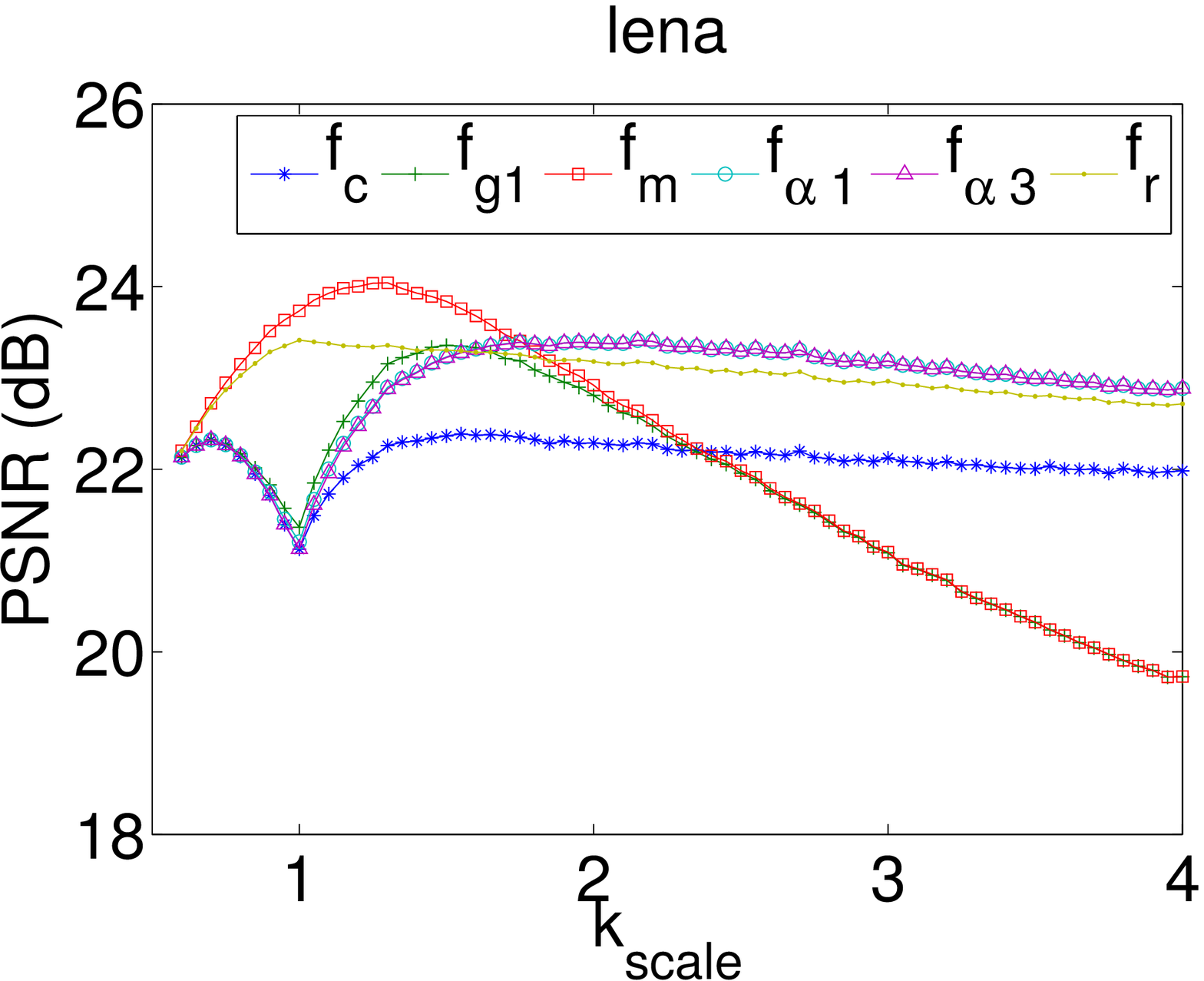}
               \caption{$MaxIter=2$}
       \end{subfigure}
\caption{Reconstructed image qualities}
\label{fig:e6}
\end{figure}

\section{Conclusion}
Signal reconstruction problems appear in many application areas, under various names. In image and video processing, a signal may include sets of images, video sequences, depth and spectral maps, their patches, as well as image-related feature vectors. 
Common image and video processing tasks, such as super-resolution, upscaling, magnification, in-painting, depth recovery, increasing image dynamic range, adding video frames for faster refresh rate, etc., can be posed as signal reconstruction problems. 
Even some seemingly unrelated tasks can be framed as signal reconstruction problems, e.g.,\ classification, or object tracking and motion prediction. 
In audio processing, a signal may include audio sequences, audio spectral maps, and audio feature vectors. Reconstruction can be used, e.g.,\ for upsampling, increasing audio frequency or dynamic ranges, adding synthetic audio channels, depth reconstruction, and audio restoration, including real-time removal of impulse noise. 
In data mining applications, signal reconstruction appears in a form of data completion or interpolation, estimating missing data and predicting future data, e.g.,\ time series data reconstruction can be used to deal with faulty sensors, and data extrapolation can help to predict future system failures.

Our efficient iterative reconstruction algorithms allow reconstructing signals with desired properties given by a guiding subspace. 
Numerical examples for noise-free and noisy image magnification demonstrate the advantages of our technology. 
Although our tests in this paper are limited to one specific example of 
signal reconstruction in imaging, the proposed methodology is general and
expected to be effective for a wide range of applications, in video and sound processing, data mining, real time security, and artificial intelligence systems.  



\appendices
\section{Basics of angles between subspaces}
\label{app:Angles}
\begin{definition}
The minimum gap between two closed subspaces $\Fc$ and $\Gc$ is defined as 
\begin{equation*}
\gamma(\Fc,\Gc) = \inf_{\fv \in \Fc, \fv \notin \Gc} \frac{\text{dist}\left(\fv,\Gc\right)}{\text{dist}\left(\fv, \Fc \cap \Gc\right)}.
\end{equation*}
\end{definition}
%
\begin{definition}
Let $\Fc$ and $\Gc$ be two closed subspaces of $\Hc$ with projectors $\Tm_\Fc$ and $\Tm_\Gc$ respectively. Let $\Sigma\left(\left(\Tm_\Fc\Tm_\Gc\right)|_\Fc\right)$ denote the spectrum of $\left(\Tm_\Fc\Tm_\Gc\right)|_\Fc$. Then,
\begin{equation*}
\hat{\Theta}\left(\Fc,\Gc\right) = \{\theta: \theta = \cos^{-1}\sigma, \sigma \geq 0, \sigma^2 \in \Sigma\left(\left(\Tm_\Fc\Tm_\Gc\right)|_\Fc\right)\}
\end{equation*}
is called the set of angles from subspace $\Fc$ to the subspace $\Gc$. Angles $\Theta\left(\Fc,\Gc\right) = \hat{\Theta}\left(\Fc,\Gc\right) \cap \hat{\Theta}\left(\Gc,\Fc\right)$ are called the angles between the subspaces $\Fc$ and $\Gc$.
\label{def:angles}
\end{definition}
The minimum gap can also be expressed in terms of the angles between the subspaces as~in~\cite[Theorem~2.15]{Knyazev-JFA-10}:
\begin{equation}
\gamma(\Fc,\Gc) = \sin\left(\inf \{\Theta(\Fc,\Gc) \setminus \{0\}\}\right).
\label{eq:gap-angle}
\end{equation}

Principal angles between two subspaces in $\mathbb{R}^n$ can be defined more simply as follows.
\begin{definition}
Let $\Fc$ and $\Gc$ be two subspaces in $\mathbb{R}^n$ with dimensions $q$ and $p$ respectively. Let $q \leq p$. Then the principal angles $\theta_1, \ldots, \theta_q \in [0, \pi/2]$ between $\Fc$ and $\Gc$ are defined recursively for $i = 1,\ldots, q$ by
\begin{equation}
\cos \theta_i = \max_{\uv \in \Fc, \vv \in \Gc} \frac{\Braket{\uv,\vv}}{\|\uv\|\|\vv\|} 
\end{equation} 
subject to $\uv \perp \uv_j, \vv\perp \vv_j$ for $j = 1, \ldots, i-1$.
\end{definition}
If columns of two matrices $\Fm$ and $\Gm$ span $\Fc$ and $\Gc$, then cosines of the principal angles are also called canonical correlations between $\Fm$ and $\Gm$. 
Let $\Tm_{\Fc}$ and $\Tm_{\Gc}$ be the projectors for $\Fc$ and $\Gc$ respectively; then the eigenvalues $\{\sigma^2\}$ of $\Tm_\Fc \Tm_\Gc|_{\Fc}$ are related to  the angles $\{\theta\}$ by~\cite{Knyazev-JFA-10} 
\begin{equation}
\sigma = \cos\theta.
\end{equation} 
The condition of positiveness of the infimum of non-zero angles between $\Sc$ and $\Tc^\perp$ is evidently always satisfied in finite dimensional spaces, although it may 
approach zero as the dimension increases. 
However, in infinite dimensional spaces a sequence of non-zero angles may converge to zero, leading to the zero infimum. 

Relationship between angles $\Theta(\Fc,\Gc)$, $\Theta\left(\Fc,\Gc^\perp\right)$, and $\Theta\left(\Fc^\perp,\Gc^\perp\right)$ is given in~\cite[Theorem~2.7]{Knyazev-JFA-10}.
\begin{figure*}[!t]
\normalsize
\setcounter{MYtempeqncnt}{\value{equation}}
\setcounter{equation}{14}
\begin{align}
\Theta(\Fc,\Gc) \setminus \left(\{0\} \cup \{\pi/2\}\right) &= \{ \pi/2-\Theta(\Fc,\Gc^\perp)\} \setminus \left(\{0\} \cup \{\pi/2\}\right)\nonumber \\ 
\Theta(\Fc,\Gc) \setminus \{0\} &= \Theta(\Fc^\perp,\Gc^\perp) \setminus \{0\}
\label{eq:angle-relations}
\end{align}
\hrulefill
\end{figure*}
%

\section{The conjugate gradient method introduction}
\label{app:CG}
The conjugate gradient method is one of the most widely used methods for solving $\Km\xv = \bv$ when $\Km$ is a linear, bounded, self-adjoint, non-negative operator. It is easy to see that solving $\Km \xv = \bv$ is equivalent to
\begin{equation*}
\min_{\xv} E(\xv) = \frac{1}{2}\Braket{\xv,\Km\xv} - \Braket{\bv,\xv}.
\end{equation*}
CG is the optimal method for solving the above problem among all polynomial iterative methods which involve multiplication of a vector by $\Km$ as the main step in each iteration. To put it more formally, let us first define a plane
\begin{equation}
\bar{\Kc}_m = \xv_0 + \text{span}\{\bv - \Km \xv_0, \ldots, \Km^{m-1}\left(\bv - \Km \xv_0\right)\},
\label{eq:krylov_hp}
\end{equation}
where $\xv_0$ is the initial guess for the solution. When $\xv_0 =$~$ \zerov$, $\bar{\Kc}_m$ equals the the Krylov subspace of order $m$ which is defined as 
\begin{equation*}
\Kc_m = \text{span}\{\bv, \Km\bv, \ldots, \Km^{m-1}\bv\}.
\end{equation*}
%
The solution $\xv_m$ at $m$-th iteration of CG satisfies
\begin{equation*}
\xv_m = \argmin_{\xv \in \bar{\Kc}_m} E(\xv) 
= \argmin_{\xv \in \bar{\Kc}_m} \|\xv - \xv^*\|_{\Km}
\end{equation*}  
where, $\|\zv\|_{\Km} = \Braket{\zv,\Km\zv}$ denotes the induced $\Km$-norm  and $\xv^*$ denotes the actual solution of $\Km\xv = \bv$. This shows that CG gives the best possible solution after $m$ iterations and thus, is the most efficient iterative method. 

\section{Simple matrix examples}
\label{sec:examples}
To clarify, illustrate, and verify our somewhat abstract arguments in Hilbert spaces, in this section we present several 
matrix examples of increasing complexity, in 2D, 3D, and, finally, the most representative case of 4D subspaces in 8D space, where 
all important subspaces used in the paper are non-trivial, while, at the same time, all the important quantities are explicitly analytically derived.
We start with 2D and 3D cases, because they can also be illustrated geometrically, intuitively appealing. 

\subsection{2D case}
First, we consider $\Hc$ as the 2D plane such that $\Hc = \text{span}(\ev_1, \ev_2)$, where $\ev_1 = [1, 0]^T$ and $\ev_2 = [0, 1]^T$ are the standard basis vectors. Let the sampling subspace $\Sc = \text{span}(\ev_1)$ and the guiding subspace $\Tc = \text{span}(\ev_1 + a\ev_2)$ for some real scalar $a$. Without loss of generality, assume the signal $\fv = [2, 3]^T$ and $a = 2$. Consequently, the sampled signal $\Sm\fv = [2, 0]^T$ and the sample consistent space $\Sm\fv + \Sc^\perp = \text{span}([2, 0]^T)$. Fig.~\ref{fig:2D_illust} illustrates the 2D example showing the subspaces $\Sc$ and $\Tc$ as well as the signal $\fv$, it's sampling $\Sm\fv$, and the reconstruction $\hat{\fv}$. Here, the space $\Hc_0 = \Hc/\{\zerov \} =\Hc$. Let us also notice that $\Tc$ and $\Sm\fv + \Sc^\perp$ intersect at the unique reconstruction point $\hat{\fv}$, in this example.
\begin{figure}
\centering
\includegraphics[width = 3in]{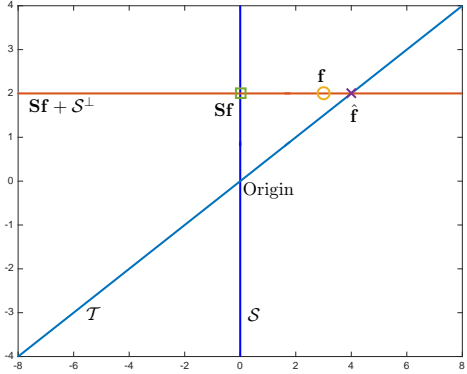}
\caption{2D example with a unique reconstruction point.}
\label{fig:2D_illust}
\end{figure}

For general $a \in \mathbb{R}$, we have
\[
	\Sm\fv = \left[
	\begin{array}{c}
	f_1 \\ 0 
	\end{array}
	\right], \	
	\Tm^\perp \Sm\fv = \left[
	\begin{array}{c}
	f_1 \sin^2 \theta_a\\ -f_1 \sin\theta_a\cos\theta_a 
	\end{array}
	\right].
\]
The subspace $\Hc_0=\text{span}(\ev_1,\ev_2)/ \{\zerov \}$. Therefore, restricting $\Km = (\Sm^\perp\Tm^{\perp})|_{\Sm^\perp}$ to $\Hc_0\cap \Sc^\perp$ reduces the 2-by-2 matrix form of the operator $\Km$ to the scalar form $\cos^2 \theta_a$ of the operator $\Km_\star=\left.\Km\right|_{\Hc_0\cap \Sc^\perp}$. The extension operator $\Km_0$ has the form
\[
	\left[
	\begin{array}{cc}
	0 & 0  \\
	0 & \cos^2 \theta_a 
	\end{array}
	\right]
\]
Consequently, the sample consistent reconstruction results in 
\[
	\hat{\xv}_n = \left[
	\begin{array}{c}
	0 \\ a f_1 
	\end{array}
	\right], \	
	\hat{\fv} = \left[
	\begin{array}{c}
	f_1 \\ a f_1 
	\end{array}
	\right].
\]

Next, we show that the norms of operators $\Km_0^\dagger \Sm_0^\perp \Tm_0^{\perp}$ and $\Km_0^\dagger \Sm_0^{\perp} \Tm_0^{\perp} \Sm_0$ are in fact $1/\cos\theta_{\max}$ and $\tan \theta_{\max}$, respectively. The matrix form of $\Tm_0^\perp$ is 
\[
	\left[
	\begin{array}{cc}
	\sin^2\theta_a & -\sin\theta_a\cos\theta_a \\
	-\sin \theta_a \cos \theta_a & \cos^2 \theta_a  
	\end{array}
	\right].
\]
We have a matrix form of $\Sm_0^\perp \Tm_0^{\perp}$ as
\[
	\left[
	\begin{array}{cc}
	0 & 0 \\
	-\sin \theta_a \cos \theta_a & \cos^2 \theta_a  
	\end{array}
	\right],
\]
and the matrix form of $\Sm_0^\perp \Tm_0^{\perp}\Sm_0$ as
\[
	\left[
	\begin{array}{cc}
	0 & 0 \\
	-\sin \theta_a \cos \theta_a & 0 
	\end{array}
	\right],
\]
with the corresponding singular values $1/\cos\theta_a$ and $\sin\theta_a \cos\theta_a$, respectively. Therefore, the operator $\left(\Sm_0^\perp \Tm_0^{\perp}\right)^\dagger \Sm_0^\perp \Tm_0^{\perp}$ has the form
\[
\left[
	\begin{array}{cc}
	\sin^2\theta_a & -\sin\theta_a\cos\theta_a \\
	-\sin \theta_a \cos \theta_a & \cos^2 \theta_a  
	\end{array}
	\right],
\]
which is the same form as $\Tm_0^\perp$. Then, $\Km_0^\dagger \Sm_0^\perp \Tm_0^{\perp}$ has the matrix form of
\[
	\left[
	\begin{array}{cc}
	0 & 0 \\
	-\tan \theta_a & 1 
	\end{array}
	\right],
\]
whose nonzero singular value is $1/\cos\theta_{a}$. On the other hand, $\Km_0^\dagger \Sm_0^\perp \Tm_0^{\perp}\Sm_0$ has the matrix form
\[ 
	\left[
	\begin{array}{cccc}
	0 & 0 \\
	-\tan \theta_a & 0 
	\end{array}
	\right],
\]
with singular value $\tan \theta_a$.
\subsection{2D subspaces in 3D space}
Next, we consider the 3D space $\Hc = \text{span}(\ev_1, \ev_2, \ev_3)$ with the sampling plane  $\Sc = \text{span}(\ev_1, \ev_2)$ and the guiding subspace $\Tc = \text{span}(\ev_1 + a\ev_3)$ for $a = 2$. Denote by $\theta_a$ be the angle between the subspaces $\Sc$ and $\Tc^\perp$, then 
\[\cos \theta_a = \frac{1}{\sqrt{1 + a^2}},\, \sin \theta_a = \frac{a}{\sqrt{1+a^2}},\, \tan \theta_a = a.\] The projection operators $\Sm$ and $\Tm$ are given by 
$$
	\Sm = \left[\begin{array}{ccc}
	1 & 0 & 0\\
	0 & 1 & 0\\
	0 & 0 & 0
	\end{array}\right], \quad
	\Tm = \left[\begin{array}{ccc}
	\frac{1}{1+a^2} & 0 & \frac{a}{1+a^2}\\
	0 & 0 & 0\\
	\frac{a}{1+a^2} & 0 & \frac{a^2}{1+a^2}
	\end{array}\right]. 
$$
A signal $\fv = [2, 1, 6]^T$ then results in the sampling $\Sm\fv = [2, 1, 0]$. Since the reconstruction is restricted to the subspace $\Sc^\perp = \text{span}(\ev_3)$, in this example 
$$
\Km = (\Sm^\perp\Tm^{\perp})|_{\Sm^\perp} = \left[\begin{array}{ccc}
	0 & 0 & 0\\
	0 & 0 & 0\\
	0 & 0 & \frac{1}{1+a^2} 
	\end{array}\right].
$$
Moreover, the subspace $\Sc\cap\Tc^\perp = \text{span}(\ev_2)$ is nontrivial. Therefore, the reconstruction subspace $\Hc_0 = \Hc/\{\text{span}(\ev_2)\}$ and $\Km_\star = \Km|_{\Hc_0\cap \Sc^\perp} = {1}/({1+a^2})$. 

Fig.~\ref{fig:3D_illust} illustrates the geometry of the subspaces. Notice that in this example, the guiding subspace $\Tc$ does not intersect the sample consistent space $\Sm\fv + \Sc^\perp$. Therefore, a reconstruction interval exists between the sample consistent reconstruction $\hat{\fv}_c = [2,1,4]^T$ and the generalized reconstruction $\hat{\fv}_g = \Tm\hat{\fv}_c = [2, 0, 4]^T$. The proposed reconstruction $\hat{\fv}_{\alpha}$ can exist anywhere on the reconstruction interval and is parametrized by $\alpha \in [0, 1]$. Here we plot the reconstructed signal $\hat{\fv}_{\alpha} = [2, 0.7, 4]^T$ corresponding to $\alpha = 0.7$.

\begin{figure}
\centering
\includegraphics[width = 3in]{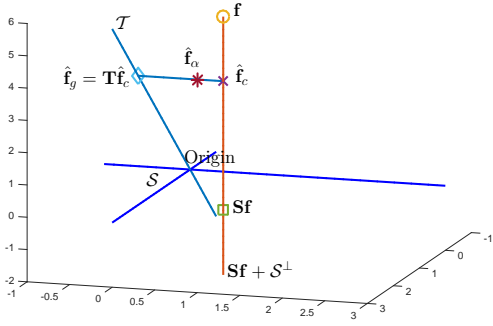}
\caption{3D example showcasing the reconstruction interval between the sample consistent reconstruction $\hat{\fv}_c$ and the generalized reconstruction $\hat{\fv}_g = \Tm\hat{\fv}_c$. The proposed reconstruction $\hat{\fv}_{\alpha}$ exists anywhere on the reconstruction interval.}
\label{fig:3D_illust}
\end{figure}

\subsection{4D subspaces in 8D space}\label{sec:moreexamples}

Finally, we illustrate a example in the eight dimensional space $\Hc = \text{span}(\ev_1, \ev_2,\dots, \ev_8)$. Consider the symbolic signal $\fv = [f_1, f_2, f_3, f_4, f_5, f_6, f_7, f_8]^T$ with the sampling space $\Sc = \text{span}(\ev_1, \ev_3, \ev_5, \ev_6)$ and target space 
\[\Tc = \text{span}(\ev_1 + a\ev_2, \ev_3 + b\ev_4, \ev_5, \ev_7),\, a\geq b>0.\] 
There are four principle angles $\pi/2>\theta_a\geq\theta_b>0$ between the subspaces $\Sc$ and $\Tc$, where 
\[\cos \theta_a = \frac{1}{\sqrt{1 + a^2}},\, \sin \theta_a = \frac{a}{\sqrt{1+a^2}},\, \text{and } \tan \theta_a = a,\] 
\[\cos \theta_b = \frac{1}{\sqrt{1 + b^2}},\, \sin \theta_b = \frac{b}{\sqrt{1+b^2}},\, \text{and } \tan \theta_b = b.\] 
Consequently, the following identities hold
\[
	\Sm\fv = \left[
	\begin{array}{c}
	f_1 \\ 0 \\ f_3 \\ 0 \\ f_5 \\ f_6 \\ 0 \\ 0
	\end{array}
	\right], \	
	\Tm^\perp \Sm\fv = \left[
	\begin{array}{c}
	f_1 \sin^2 \theta_a\\ -f_1 \sin\theta_a\cos\theta_a \\ f_3 \sin^2 \theta_b\\ -f_3 \sin\theta_b\cos\theta_b \\ 0 \\ f_6 \\ 0 \\ 0
	\end{array}
	\right].
\]
The subspace $\Hc_0=\text{span}(\ev_1,\ev_2,\ev_3,\ev_4)$ excludes the intersections $\Sc \cap \Tc = \text{span}(\ev_5)$, $\Sc \cap \Tc^\perp = \text{span}(\ev_6)$, $\Sc^\perp \cap \Tc = \text{span}(\ev_7)$, and $\Sc^\perp \cap \Tc^\perp = \text{span}(\ev_8)$. Therefore, restricting $\Km = (\Sm^\perp\Tm^{\perp})|_{\Sm^\perp}$ to $\Hc_0\cap \Sc^\perp$ reduces the 4-by-4 matrix form of the operator $\Km$ to the following 2-by-2 matrix form
\[
	\left[
	\begin{array}{cc}
	\cos^2 \theta_a & 0 \\
	0 & \cos^2 \theta_b 
	\end{array}
	\right]
\]
of the operator $\Km_\star=\left.\Km\right|_{\Hc_0\cap \Sc^\perp}$. The extension operator $\Km_0$ has the form
\[
	\left[
	\begin{array}{cccc}
	0 & 0 & 0 & 0 \\
	0 & \cos^2 \theta_a & 0 & 0 \\
	0 & 0 & 0 & 0 \\
	0 & 0 & 0 & \cos^2 \theta_b 
	\end{array}
	\right]
\]
Consequently, the sample consistent reconstruction results in 
\[
	\hat{\xv}_n = \left[
	\begin{array}{c}
	0 \\ a f_1 \\ 0 \\ b f_3 \\ 0 \\ 0 \\ 0 \\ 0 \\ 0
	\end{array}
	\right], \	
	\hat{\fv} = \left[
	\begin{array}{c}
	f_1 \\ a f_1 \\ f_3 \\ b f_3 \\ f_5 \\ f_6 \\ 0 \\ 0
	\end{array}
	\right].
\]

Next, we show that the norms of operators $\Km_0^\dagger \Sm_0^\perp \Tm_0^{\perp}$ and $\Km_0^\dagger \Sm_0^{\perp} \Tm_0^{\perp} \Sm_0$ are in fact $1/\cos\theta_{\max}$ and $\tan \theta_{\max}$, respectively. We have a matrix form of $\Sm_0^\perp \Tm_0^{\perp}$ as
\[
	\left[
	\begin{array}{cccc}
	0 & 0 & 0 & 0 \\
	-\sin \theta_a \cos \theta_a & \cos^2 \theta_a & 0 & 0 \\
	0 & 0 & 0 & 0 \\
	0 & 0 & -\sin \theta_b \cos \theta_b & \cos^2 \theta_b 
	\end{array}
	\right],
\]
and the matrix form of $\Sm_0^\perp \Tm_0^{\perp}\Sm_0$ as
\[
	\left[
	\begin{array}{cccc}
	0 & 0 & 0 & 0 \\
	-\sin \theta_a \cos \theta_a & 0 & 0 & 0 \\
	0 & 0 & 0 & 0 \\
	0 & 0 & -\sin \theta_b \cos \theta_b & 0 
	\end{array}
	\right],
\]
with the corresponding singular values $(1/\cos\theta_a, 1/\cos\theta_b)$ and $(\sin\theta_a \cos\theta_a, \sin\theta_b \cos\theta_b)$, respectively.
Then, $\Km_0^\dagger \Sm_0^\perp \Tm_0^{\perp}$ has the matrix form of
\[
	\left[
	\begin{array}{cccc}
	0 & 0 & 0 & 0 \\
	-\tan \theta_a & 1 & 0 & 0 \\
	0 & 0 & 0 & 0 \\
	0 & 0 & -\tan\theta_b & 1 
	\end{array}
	\right],
\]
whose nonzero singular values are $1/\cos\theta_{a}$ and $1/\cos\theta_b$. On the other hand, $\Km_0^\dagger \Sm_0^\perp \Tm_0^{\perp}\Sm_0$ has the matrix form
\[ 
	\left[
	\begin{array}{cccc}
	0 & 0 & 0 & 0 \\
	-\tan \theta_a & 0 & 0 & 0 \\
	0 & 0 & 0 & 0 \\
	0 & 0 & -\tan\theta_b & 0 
	\end{array}
	\right],
\]
with singular values $\tan \theta_a$ and $\tan \theta_b$. Thus, on the one hand,
\[
\begin{array}{lll}	
 \|\hat{\xv}_n\|^2&=&a^2|f_1|^2+b^2|f_3|^2\\
 &\leq& (|f_1|^2+|f_3|^2)a^2\\
 &=&\|\Sm \Pm_0\fv\|^2\tan^2\theta_{\max},
 \end{array}
\]
since $\theta_{\max}=\theta_{a}$.
On the other hand, 
\[
\begin{array}{lll}	
 \|\hat{\xv}_n\|^2&=&|f_1|^2\tan^2\theta_{a}+|f_3|^2\tan^2\theta_{b}\\
 &\leq&{\|\Tm^\perp \Sm \Pm_0 \fv\|^2}/{\cos^2\theta_{\max}},
 \end{array}
\]
since $\|\Tm^\perp \Sm \Pm_0 \fv\|^2=|f_1|^2\sin^2\theta_{a}+|f_3|^2\sin^2\theta_{b}$.
Finally,
\[
\begin{array}{lll}	
 \|\hat{\xv}_n\|^2&=&|f_1|^2\tan^2\theta_{a}+|f_3|^2\tan^2\theta_{b}\\
 &\leq&{\|\Sm^\perp\Tm^\perp \Sm \Pm_0 \fv\|^2}/{\cos^4\theta_{\max}},
 \end{array}
\]
since
\[
 \|\Sm^\perp\Tm^\perp \Sm \Pm_0 \fv\|^2 = |f_1|^2\sin^2\theta_{a}\cos^2\theta_{a}+|f_3|^2\sin^2\theta_{b}\cos^2\theta_{b}.
\]
The three inequalities above illustrate all three bounds proved in Theorem~\ref{thm:Lidentity}. Moreover, the bounds on the reconstruction error in Theorem~\ref{thm:errbnds} are equal to 
\[
\begin{array}{ll}
	\|\fv - \hat{\fv}\|^2 &= (f_2 - \tan\theta_a f_1)^2 + (f_4 - \tan\theta_b f_3)^2 + f_7^2 + f_8^2 \\
				&\leq \|\Tm^\perp\Pm_0\fv\|^2/\cos^2\theta_{\max}  + f_7^2 + f_8^2\\
	\textrm{and}	& \leq \|\Sm^{\perp}\Tm^\perp\Pm_0\fv\|^2/\cos^4\theta_{\max} + f_7^2 + f_8^2,
\end{array}
\]
where $\left\|\Pm_{\Sc^\perp\cap\Tc}\fv\right\|^2 = f_7^2$, $\left\|\Pm_{\Sc^\perp\cap\Tc^\perp}\fv\right\|^2 = f_8^2$, 
$$
\begin{array}{lll}
\|\Tm^\perp\Pm_0\fv\|^2 &= &(f_2\sin\theta_a\cos\theta_a - f_1\sin^2\theta_a)^2 \\
					&& + (f_2\cos^2\theta_a - f_1\sin\theta_a\cos\theta_a)^2 \\
					&& + (f_4\sin\theta_b\cos\theta_b - f_3\sin^2\theta_b)^2 \\
					&& + (f_4\cos^2\theta_b - f_3\sin\theta_b\cos\theta_b)^2\\
				    &= & \cos^2\theta_a(f_2 - f_1\tan\theta_a)^2 \\
				    && + \cos^2\theta_b(f_4 - f_3\tan\theta_b)^2,
\end{array}				
$$ 
and 
$$
\begin{array}{lll}
\|\Sm^\perp\Tm^\perp\Pm_0\fv\|^2 &= &(f_2\cos^2\theta_a - f_1\sin\theta_a\cos\theta_a)^2 \\
					&& + (f_4\cos^2\theta_b - f_3\sin\theta_b\cos\theta_b)^2\\
				    &= & \cos^4\theta_a(f_2 - f_1\tan\theta_a)^2 \\
				    && + \cos^4\theta_b(f_4 - f_3\tan\theta_b)^2.
\end{array}				
$$
The above derivation shows that every bound is sharp, e.g.,\ turns into an equality if $a=b$.  


\bibliographystyle{IEEEtran}
\bibliography{refs}

\begin{IEEEbiography}[{\includegraphics[width=1.1in,clip,keepaspectratio]{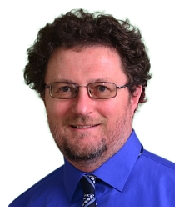}}]{Andrew Knyazev} graduated from the Faculty of Computational Mathematics and Cybernetics of Moscow State University in 1981 and received the Ph.D. degree in Numerical Mathematics at the Russian Academy of Sciences, Moscow, Russia in 1985. 

He is Distinguished Research Scientist at the Mitsubishi Electric Research Laboratories (MERL) and Professor Emeritus at the University of Colorado Denver. He is a Fellow of the Society for Industrial and Applied Mathematics (SIAM) and Senior Member of the IEEE. During his 30 years in academia, he has contributed to numerical analysis of partial differential equations and computational linear algebra, with emphasis on eigenvalue problems, supported by NSF and DOE awards, and graduated 7 Ph.D. students. 

Since 2012, his research interests at MERL are in algorithms for image and video processing, data sciences, optimal control, material sciences, and numerical simulation of complex phenomena. He has over $100$ publications, over a dozen of patent applications, and several U.S. and international patents.
\end{IEEEbiography}

\begin{IEEEbiography}[{\includegraphics[width=1.1in,clip,keepaspectratio]{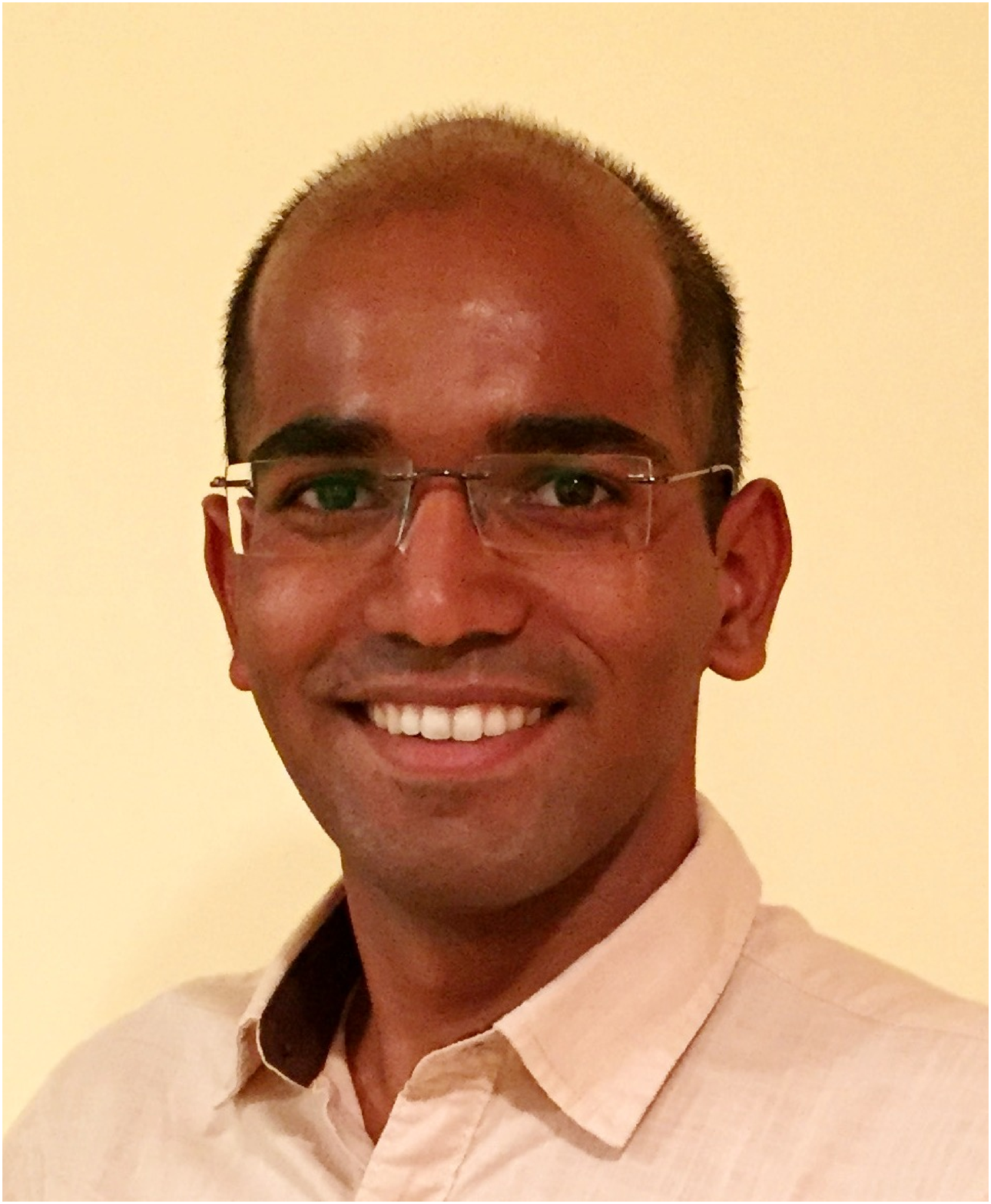}}]{Akshay Gadde} (S'13) received his Bachelor of Technology degree in electrical engineering from Indian
Institute of Technology (IIT), Kharagpur, India, in 2011. He has been working towards a Ph.D. in electrical engineering at the University of Southern California (USC), Los Angeles, since 2011, supported by the Provost's Fellowship. He~is a recipient of the Best Student Paper Award at ICASSP 2014. 
His research interests include graph signal processing and machine learning with applications to
multimedia data processing and compression.
\end{IEEEbiography}


\begin{IEEEbiography}[{\includegraphics[width=1.1in,clip,keepaspectratio]{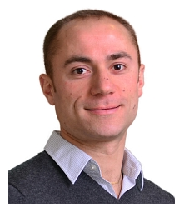}}]{Hassan Mansour} (S'99, M'09) received his Bachelor of Engineering (2003) from the American University of Beirut, and his M.A.Sc. (2005) and Ph.D. (2009) degrees from the Department of Electrical and Computer, University of British Columbia (UBC), Vancouver, Canada.

He is a currently a Principal Research Scientist in the Multimedia Group at Mitsubishi Electric Research Laboratories, Cambridge, MA. Prior to joining MERL, he pursued a postdoctoral fellowship in the Departments of Mathematics, Computer Science, and Earth and Ocean Sciences at UBC. During his graduate studies, he conducted research on scalable video coding and transmission. His research has since focused on theoretical and algorithmic aspects of compressed sensing, image and video analytics, remote sensing and array signal processing. 
\end{IEEEbiography}

\begin{IEEEbiography}[{\includegraphics[width=1.1in,clip,keepaspectratio]{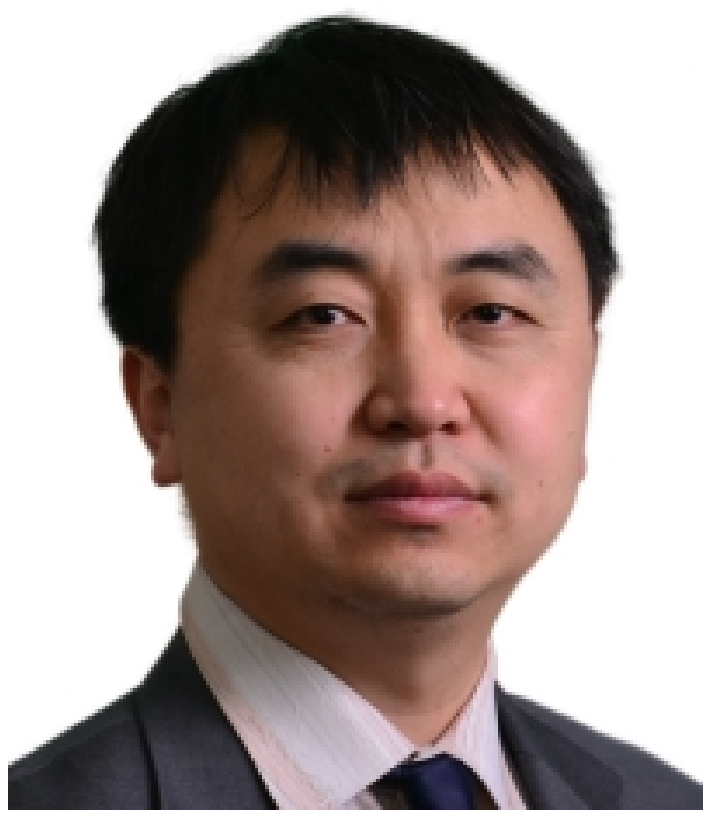}}]{Dong Tian} received the Ph.D. degree at Beijing University of Technology in 2001, and the M.Eng. and B.Eng. degrees on automation from the University of Science and Technology of China (USTC) in 1998 and 1995, respectively. 

He is Senior Principal Member Research Staff in the Multimedia Group of Mitsubishi Electric Research Laboratories (MERL) at Cambridge, MA. Prior to joining MERL, he has worked with Thomson Corporate Research at Princeton, NJ for over 4 years, where he was devoted to H.264/MPEG AVC encoder optimization and 3D video coding/processing, especially to the standards of Multiview Video Coding (MVC) and later on 3D Video Coding (3DV) within MPEG. From Jan. 2002 to Dec. 2005, he has been a postdoc at Tampere University of Technology in Finland for a Nokia funded project and made contributions on video coding standards and applications for mobile environments. His current research interests include graph signal processing, point cloud processing, machine learning, image/video coding and processing. Besides academic publications, he has over 20 US-granted patents. He is a senior member of IEEE.
\end{IEEEbiography}




\end{document}